\newcommand{\rd}{\mathrm{d}}
\newcommand{\EE}{\mathbb{E}}
\newcommand{\NN}{\mathbb{N}}
\newcommand{\E}{\mathbb{E}}
\newcommand{\abs}[1]{\left\vert #1 \right\vert}
\newcommand{\norm}[1]{\left\Vert #1 \right\Vert}
\newcommand{\R}{\mathbb{R}}
\renewcommand{\P}{\mathbb{P}}
\newcommand{\loc}{\text{\rm loc}}
\newcommand{\spec}{\text{\rm spec}}
\renewcommand{\a}{\alpha}
\newtheorem{algorithm}{Algorithm}
\begin{document}



\section{Introduction}
Markov chain Monte Carlo (MCMC) simulations 
are used in different branches of 
statistics
and science
to estimate an expected value 
with respect to a probability measure, say $\pi$,
by the sample average  
of the Markov chain.
This procedure is of advantage if random numbers with
distribution $\pi$ are difficult to construct.

When sampling the Markov chain the transitions
are usually modeled as driven by
i.i.d. $\mathcal{U}(0,1)^s$ random variables 
for some $s\geq1$.
But in simulations the driver sequences are pseudo-random numbers.
In many applications, 
if one uses a
carefully constructed random number generator, 
this works well. Instead of modeling the Markov chain with random numbers, or imitating
random numbers, the idea of Markov chain quasi-Monte Carlo is to construct
a finite, deterministic sequence of numbers, $(u_i)_{0\leq i \leq n}$ in $[0,1]^s$
for all $n\in\mathbb{N}$, to generate a deterministic Markov chain sample 
and to use it to estimate the desired mean. 

The motivation 
of this conceptual change is that carefully constructed sequences may
lead to more accurate sample averages.
For example, quasi-Monte Carlo (QMC) points lead to higher order of convergence
compared to plain Monte Carlo, which is a special case of MCMC.
Numerical experiments for QMC versions of MCMC 
also show promising results \cite{LeSi06,Li98,OwTr05,So74,Tr07}.
In particular, Owen and Tribble \cite{OwTr05} and Tribble \cite{Tr07} report an 
improvement by a factor of up to $10^{3}$ and a better convergence 
rate for a Gibbs sampler problem. 

In the work of Chen, Dick and Owen \cite{ChDiOw11} and Chen \cite{Ch11}
the first theoretical justification for Markov chain quasi-Monte Carlo
on continuous state spaces is provided. 
The authors show a consistency result
if
a contraction assumption is satisfied and
the random sequence is substituted 
by a deterministic `completely uniformly distributed'
sequence,  
see \cite{ChDiOw11,ChMaNiOw12,TrOw08}.
Thus 
the sample average converges to the expected value but we do not
know how fast this convergence takes place.

Recently, in \cite{DiRuZh13} another idea appears. 
Namely, the question
is considered whether there exists a good driver sequence such that 
an explicit error bound is satisfied. 
It is shown that if the Markov chain 
is uniformly ergodic, 
then for any initial state a deterministic sequence exists such that the sample average
converges to the mean almost with the Monte Carlo rate.

However, in \cite{ChDiOw11} and \cite{DiRuZh13} rather strong conditions, 
the contraction assumption and uniform ergodicity, are imposed on the Markov chain.
We substantially extend the results of \cite{DiRuZh13} to Markov chains
which satisfy a much weaker convergence condition. 
Namely, we consider variance bounding Markov chains, 
introduced by Roberts and Rosenthal in \cite{RoRo08},
and show existence results of good driver sequences. 
We also show what property pseudo-random number generators need 
to satisfy in order to improve the performance of 
Markov chain quasi-Monte Carlo
algorithms, see Definition~\ref{def-pull-back-discrepancy} below. 
This property has not been studied in the literature before.
In the following we describe the setting in detail and explain 
our main contributions.

\subsection{Main results}
Let $(X_n)_{n\in\mathbb{N}}$ be a Markov chain with transition kernel $K$
and initial distribution $\nu$ on $(G,\mathcal{B}(G))$ with $G\subseteq \mathbb{R}^d$.
For $f\colon G \to \mathbb{R}$ 
let $\mathbb{E}_\pi(f) = \int_G f(x) \pi({\rm d}x)$ be the desired mean
and $P f(x) = \int_G f(y)K(x,{\rm d}y)$
be the Markov operator induced by the transition kernel 
$K$. 
We assume that the transition kernel 
is reversible with respect to the distribution $\pi$
and that it is variance bounding, see \cite{RoRo08}. Roughly, a
Markov chain is variance bounding if the asymptotic variances for functionals
with unit stationary variance are uniformly bounded.
Equivalent to this is the assumption that
$\Lambda < 1$ with
\begin{equation} \label{eq: lambda_intro}
  \Lambda = \sup \{ \lambda \in \spec(P-\mathbb{E}_\pi\mid L_2) \}
\end{equation} 
where 
$L_2=L_2(\pi)$ is the Hilbert space of functions $f\colon G \to \mathbb{R}$ 
with \[
   \|f\|_{2} = \left(\int_G \abs{f(x)}^2 \,\pi(\rd x)\right)^{1/2}<\infty   
     \]
and
$\spec (P-\mathbb{E}_\pi\mid L_2)$ 
denotes the spectrum of $P-\mathbb{E}_\pi$ on $L_2$.
Let us point out that the 
Markov chain does not need to be 
uniformly or geometrically ergodic. 
In fact, a variance bounding Markov chain may even be periodic. 
Hence the distribution of $X_i$, for $i$ arbitrarily large, 
is not necessarily close to $\pi$.

Let $\varphi\colon G \times [0,1]^s \to G$ be an arbitrary \emph{update function}
of $K$ and $\psi \colon [0,1]^s \to G$ be an arbitrary \emph{generator function} of $\nu$
for some $s\in \mathbb{N}$. This means that
the Markov chain $(X_n)_{n\in \mathbb{N}}$ permits the representation
\begin{align*}
 X_1 & = \psi(U_0),\\
 X_{i+1} & = \varphi(X_i,U_i), \quad i\geq 1, 
\end{align*}
where $(U_n)_{n\in \mathbb{N}}$ are i.i.d. with $U_i \sim \mathcal{U}[0,1]^s$.
Using a deterministic sequence $(u_i)_{i\geq0}$ we generate
the deterministic Markov chain $(x_i)_{i\geq1}$ with $x_1=\psi(u_0)$
and $x_{i+1}=\varphi(x_{i},u_{i})$ where $i\geq1$.
The efficiency of this procedure is measured by the star-discrepancy, a generalized Kolmogorov-Smirnov
test, between the stationary measure $\pi$ and the empirical
distribution $\widehat{\pi}_n(A) = \frac{1}{n} \sum_{i=1}^n 1_{x_i\in A}$, where
$1_{x_i\in A}$ is the indicator function of a set $A\subseteq G$.
For a certain set $\mathscr{A}$ of subsets of $G$ we define the 
star-discrepancy $D^\ast_{\mathscr{A}, \pi}$ of $S_n=\{x_1,\dots,x_n\}$ 
as the supremum
of $\abs{\pi(A)-\widehat{\pi}_n(A)}$ over all $A\in\mathscr{A}$, i.e. 
\begin{equation*}
D^\ast_{\mathscr{A}, \pi}(S_n)
= \sup_{A \in \mathscr{A}} \left|\widehat{\pi}_n(A) - \pi(A)\right|.
\end{equation*}	
By inverting the iterates of the update function 
we also define a pull-back discrepancy of the driver sequence (the test sets are 
pulled back). 
We show that for large $n\in\mathbb{N}$ 
both discrepancies are close to each other.

The main result, in a general setting, is an 
estimate of $D^\ast_{\mathscr{A}, \pi}(S_n)$ (Theorem~\ref{thm_main}) 
under the 
assumption that we have an approximation of
$\mathscr{A}$, for any $\delta>0$, given by a so-called 
$\delta$-cover $\Gamma_\delta$ of $\mathscr{A}$ with respect to
$\pi$ (Definition~\ref{def: delta_cover}). 
The proof of the main result is based on a Hoeffding inequality for Markov chains. 
After that we prove 
that a sufficiently 
good $\delta$-cover exists if $\pi$ is absolutely continuous with respect
to the Lebesgue measure and the set of test sets is
the set of open boxes restricted to $G$ 
anchored at $-\infty$, i.e. we consider the set of test sets
       \[
      \mathscr{B}=  \{ (-\infty,x)\cap G \colon x\in \mathbb{R}^d  \},
         \]
	with $(-\infty,x)= \Pi_{i=1}^d (-\infty,x_i)$.
By the Koksma-Hlawka inequality (Theorem~\ref{thm_int_error}) we have
\begin{equation*}
\left|\EE_\pi(f) - \frac{1}{n} \sum_{i=1}^n f(x_i)\right| 
\le \|f\|_{H_{1}} D^\ast_{\mathscr{B}, \pi}(S_n),
\end{equation*}
with $\|f\|_{H_{1}}$ defined in \eqref{norm_H1} below.
Thus a bound on the discrepancy leads to an error bound for the approximation 
of $\EE_\pi(f)$.

We show 
for any update function $\varphi\colon G \times [0,1]^s \to G$ of $K$, any 
generator function $\psi \colon [0,1]^s \to G$ of $\nu$, and
for all $n\geq16$ that there exists a driver sequence $u_0,\dots,u_{n-1} \in [0,1]^s$ such that
  $S_n=\{ x_1,\dots,x_n \}$ given by 
\begin{align*}
  x_1 & =\psi(u_0)\\
  x_{i+1}& = \varphi(x_i;u_i), \quad i=1,\dots,n-1,
\end{align*}
satisfies
  \begin{equation} \label{eq: disc_bound_boxes}
    D^*_{\mathscr{B},\pi}(S_n) 
    \leq \sqrt{\frac{1+\Lambda_0}{1-\Lambda_0}} \cdot
    \frac{\sqrt{2}\,(\log\norm{\frac{d\nu}{d\pi}}_{2}+d \log n + 3d^2 \log(5d))^{1/2}}{\sqrt{n}}
    +\frac{8}{n^{3/4}},
  \end{equation}
where $\frac{d\nu}{d\pi}$ is the density of 
$\nu=\P_{\psi}$ (the probability measure induced by $\psi$) with respect to $\pi$
and $\Lambda_0=\max\{\Lambda,0\}$ with $\Lambda$ defined in \eqref{eq: lambda_intro}. 
For the details we refer to Corollary~\ref{coro_main} below.
This implies, by the Koksma-Hlawka inequality, that the sample average converges
to the mean with $\mathcal{O}(n^{-1/2}(\log n)^{1/2})$.

Additionally we might take a burn-in period of $n_0$ 
steps into account to reduce the
dependence of the initial state in the discrepancy bound. 
Roughly, the idea is to generate a sequence 
$x_1,\dots,x_{n_0+n}$ by the Markov chain quasi-Monte Carlo
procedure
and to consider the discrepancy of the point set
$S_{[n_0,n]}=\{x_{n_0+1},\dots, x_{n_0+n}\}$.
Under suitable convergence conditions on
the Markov chain, for example the existence of an absolute $L_2$-spectral gap 
(see Definition~\ref{def: abs_spec_gap}), the density  $\frac{d(\nu P^{n_0})}{d\pi}$
is close to $1$, see Subsection~\ref{subsec_burn_in}.

If we further assume that one can reach every state from every other
state within one step of the Markov chain, 
then we prove that there exists a driver sequence such that the discrepancy 
converges with $\mathcal{O}(n^{-1} (\log_2 n)^{(3d+1)/2})$. 
We call the additional assumption `anywhere-to-anywhere' condition.
The result shows that in principle a higher order of convergence for Markov
chain quasi-Monte Carlo is possible. 
Note that, many well studied
Markov chains satisfy such a condition, for example the hit-and-run algorithm,
the independent Metropolis sampler or the slice sampler, see for example \cite{Li08}. 
From our work it is not immediately clear how to obtain suitable 
driver sequences which yield such an improvement. However, 
what our results here show is that the main quality criterion is the  pull-back discrepancy 
(see Definition~\ref{def-pull-back-discrepancy} below) of the driver sequence. 
Since this has previously not been known, the pull-back discrepancy 
of explicit constructions of quasi-Monte Carlo point sets or pseudo-random number 
generators has not been studied so far. The task of future work is 
therefore to explicitly construct point sets with small pull-back discrepancy. 
We leave it is an interesting and challenging problem for further research.
We provide an outline of our work in the following.

\subsection{Outline}
In the next section the necessary background information on Markov chains is stated. 
Section~\ref{sec_discr} is devoted to the study of the relation of the discrepancies. 
The Monte Carlo rate of convergence for 
Markov chain-quasi Monte Carlo
is shown in Section~\ref{sec: MC_rate}. 
There we also provide results 
for the case when a burn-in period is taken into account. 
Section~\ref{sec_application} 
deals with the set of test sets which consists of axis parallel boxes, see $\mathscr{B}$ above. 
We show the existence of a good $\delta$-cover and 
how the discrepancy bounds can be used to obtain bounds on the 
error for the computation of expected values of smooth functions. 
This yields a Koksma-Hlawka inequality for Markov chains. 
To illustrate our results, we provide an example of a 
Metropolis algorithm with ball walk proposal on the Euclidean unit ball. 
A special situation arises when the update function 
of the Markov chain has an `anywhere-to-anywhere' property,
see Section~\ref{sec: beyound_MC}.
In this situation we show that a convergence rate 
of order almost $n^{-1}$ can be obtained. 

\section{Background on Markov chains}\label{sec_mc}
Let $G \subseteq \mathbb{R}^d$ and let $\mathcal{B}(G)$
denote the Borel $\sigma$-algebra of $G$.
In the following we provide a brief introduction to Markov chains on $(G,\mathcal{B}(G))$.
We assume that $K:G \times \mathcal{B}(G) \to [0,1]$ 
is a transition kernel on $(G,\mathcal{B}(G))$, i.e.  
for each $x\in G$ the mapping $A\in\mathcal{B}(G) \mapsto K(x,A)$ is a probability measure and
for each $A\in\mathcal{B}(G)$ the mapping $x\in G \mapsto K(x,A)$ is a $\mathcal{B}(G)$-measurable real-valued function.
Further let $\nu$ be a
probability measure on $(G,\mathcal{B}(G))$.

Then let $(X_n)_{n\in \mathbb{N}}$,
with $X_n$ mapping from some probability space into $(G,\mathcal{B}(G))$,
be a Markov chain with transition kernel $K$ and initial distribution $\nu$.
This might be interpreted as follows:
Let $X_1=x_1 \in G$ be chosen with $\nu$ on $(G,\mathcal{B}(G))$
and let $i\in\NN$.
Then for a given $X_{i}=x_{i}$, the random variable $X_{i+1}$ has distribution
$K(x_{i}, \cdot)$, that is, for all $A \in \mathcal{B}(G)$,
the probability that $X_{i+1}\in A$ is given by $K(x_{i}, A)$.

Let $\pi$ be a probability measure on $(G,\mathcal{B}(G))$.
We assume that the transition kernel $K$
is \emph{reversible with respect to $\pi$}, i.e. 
\[
 \int_A K(x,B)\, \pi({\rm d}x)= \int_B K(x,A)\, \pi({\rm d}x)
\]
for all $A,B \in \mathcal{B}(G)$.
This implies that $\pi$ is a \emph{stationary distribution}
of the transition kernel $K$, i.e. 
\begin{equation} \label{eq: stat}
 \int_G K(x,A) \,\pi(\rd x) = \pi(A)
\end{equation}
for all $A\in \mathcal{B}(G)$.
We assume that the stationary distribution $\pi$ is unique.
Let $L_2=L_2(\pi)$ be the set of all functions $f\colon G\to \R$ with
\[
  \norm{f}_{2}=\left(\int_G \abs{f(x)}^2\,\pi(\rd x)\right)^{1/2} < \infty.
\]
The transition kernel $K$ induces an operator acting on functions and
an operator acting on measures. For $x\in G$ and $A\in \mathcal{B}(G)$ the operators are given
by
\[
 Pf (x) = \int_G f(y) \, K(x,{\rm d}y),
\quad \mbox{and} \quad
\nu P(A) = \int_G K(x,A)\, \nu({\rm d} x),
\]
where $f\in L_2$ and $\nu$ is a signed measure on $(G,\mathcal{B}(G))$ with a density $\frac{d\nu}{d\pi} \in L_2$.
By the reversibility with respect to $\pi$ we have that $P\colon L_2 \to L_2$ is self-adjoint
and $\pi$-almost everywhere holds $P(\frac{d\nu}{d\pi})(x) = \frac{ d (\nu P)}{d \pi}(x)$.
For details we refer to \cite{Ru12}.

In the following we introduce two convergence properties of transition kernels.
Let the expectation with respect to $\pi$
be denoted by $\mathbb{E}_\pi(f)=\int_G f(y) \pi(\rd x)$.
Let $L_2^0 = \{ f\in L_2 \colon \mathbb{E}_\pi (f) =0 \}$ and note that
$L_2^0$ is a closed subspace of $L_2$.
We have 
\[
 \norm{P-\mathbb{E}_\pi}_{L_2 \to L_2} 
 = \norm{P}_{L_2^0 \to L_2^0}
 = \sup_{f\in L_2^0,\, \norm{f}_{2}\leq 1} \norm{Pf}_{2},
\]
for details see \cite[Lemma~3.16, p.~44]{Ru12}.

\begin{definition}[absolute $L_2$-spectral gap] \label{def: abs_spec_gap}
We say that a transition kernel $K$, and its corresponding Markov operator $P$,
has an \emph{absolute $L_2$-spectral gap}
if
\[
 \beta = \norm{P}_{L_2^0 \to L_2^0} < 1,
\]
and the absolute spectral gap is $ 1 - \beta $.
\end{definition}
Let us introduce the \emph{total variation distance} of two probability measures $\nu_1, \nu_2$ on $(G,\mathcal{B}(G))$
by
\[
   \norm{\nu_1 - \nu_2}_{\text{\rm tv}}
= \sup_{A \in \mathcal{B}(G)} \abs{\nu_1(A) - \nu_2(A)}.
\]
Note that for a Markov chain $(X_n)_{n\in \NN}$ with transition kernel $K$
and initial distribution $\nu$ holds $\P_{\nu,K}(X_n \in A) = \nu P^{n-1}(A)$, where
$\nu$ and $K$ in $\P_{\nu,K}$ indicate the initial distribution and transition kernel.
Then we obtain the following relation between the absolute
$L_2$-spectral gap and the total variation distance.
The result is an application of \cite[Corollary~3.15 and Lemma~3.21]{Ru12}.
\begin{proposition}  \label{prop: tv_abs_spec_gap}
Let $\nu$ be
a distribution on $(G,\mathcal{B}(G))$ 
and assume that there exists a density $\frac{d \nu}{d \pi} \in L_2$.
Then
\begin{align*}
  \norm{\nu P^n - \pi}_{\text{\rm tv}}
& \leq \beta^n \norm{\frac{d \nu}{d \pi}-1}_{2}, \quad n\in\NN,
\end{align*}
with $\beta  = \norm{P}_{L_2^0 \to L_2^0}$.
\end{proposition}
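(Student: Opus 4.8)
The plan is to turn the total variation distance into an integral against a Radon--Nikodym density, estimate that integral by an $L_2$ norm, and then invoke the characterization of $\beta$ as the operator norm of $P$ on $L_2^0$. Write $g := \frac{d\nu}{d\pi} - 1$ throughout.

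First I would verify that $g \in L_2^0$ and that $L_2^0$ is invariant under $P$. Since $\nu$ and $\pi$ are probability measures, $\mathbb{E}_\pi(g) = \nu(G) - \pi(G) = 0$, and $g \in L_2$ by hypothesis, so $g \in L_2^0$. As $K(x,\cdot)$ is a probability measure we have $P\mathbf 1 = \mathbf 1$ (so in particular $P^n\mathbf 1 = \mathbf 1$); combined with the self-adjointness of $P$ (from reversibility with respect to $\pi$) and $\mathbf 1 \in L_2$ this gives
\[
  \mathbb{E}_\pi(Pf) = \int_G (Pf)(x)\,\pi(\rd x) = \int_G f(x)\,(P\mathbf 1)(x)\,\pi(\rd x) = \mathbb{E}_\pi(f) \qquad \text{for all } f \in L_2.
\]
Hence $P(L_2^0) \subseteq L_2^0$, and in particular $P^n g \in L_2^0$ for every $n \in \mathbb{N}$.

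Next I would identify the $\pi$-density of $\nu P^n$. Because $P$ maps $L_2$ into $L_2$, the densities stay in $L_2$ and one may iterate the $\pi$-almost everywhere identity $P\big(\frac{d\nu}{d\pi}\big) = \frac{d(\nu P)}{d\pi}$ to obtain $\frac{d(\nu P^n)}{d\pi} = P^n\big(\frac{d\nu}{d\pi}\big)$ $\pi$-a.e.; together with $P^n\mathbf 1 = \mathbf 1$ this yields $\frac{d(\nu P^n)}{d\pi} - 1 = P^n g$ $\pi$-a.e. Consequently, for every $A \in \mathcal{B}(G)$,
\[
  \abs{\nu P^n(A) - \pi(A)} = \abs{\int_A (P^n g)(x)\,\pi(\rd x)} \le \int_G \abs{(P^n g)(x)}\,\pi(\rd x) \le \norm{P^n g}_2,
\]
where the last inequality is the Cauchy--Schwarz inequality applied to $\abs{P^n g}\cdot\mathbf 1$ against the probability measure $\pi$. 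Finally, since $g \in L_2^0$ and $P$ leaves $L_2^0$ invariant, $\norm{P^n g}_2 \le \norm{P}_{L_2^0 \to L_2^0}^{\,n}\,\norm{g}_2 = \beta^n\,\norm{\frac{d\nu}{d\pi} - 1}_2$, and taking the supremum over $A \in \mathcal{B}(G)$ proves the claim.

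I do not expect a genuine obstacle; the argument is essentially bookkeeping. The two points that require care are the measure-theoretic justification that $P^n\big(\frac{d\nu}{d\pi}\big)$ really is the Radon--Nikodym density of $\nu P^n$ with respect to $\pi$ (so that the supremum over test sets may be rewritten as an integral against it), and the verification that the recentered function $g$ lies in $L_2^0$, which is exactly what licenses the use of the absolute spectral gap bound. One could even keep the sharper constant $\frac12$ coming from $\sup_A \abs{\int_A h\,\pi(\rd x)} = \frac12 \norm{h}_1$ for mean-zero $h$, but that refinement is not needed here.
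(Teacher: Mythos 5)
Your argument is correct and is exactly the standard route: identify $\frac{d(\nu P^n)}{d\pi}-1$ with $P^n g$ for the mean-zero density increment $g$, bound the total variation by the $L_1$- and hence $L_2$-norm via Cauchy--Schwarz, and apply the operator norm of $P$ on $L_2^0$ $n$ times. The paper does not prove the proposition in-line but defers to \cite[Corollary~3.15 and Lemma~3.21]{Ru12}, which encapsulate precisely these two steps, so your proof matches the intended one; your closing remark about the sharper factor $\tfrac12$ is also accurate but, as you say, not needed.
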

The next convergence property is weaker than the existence of an absolute spectral gap.
\begin{definition}[Variance bounding or $L_2$-spectral gap]
 We say that a reversible transition kernel $K$, and its corresponding Markov operator $P$,
 is \emph{variance bounding} or has an \emph{$L_2$-spectral gap} if
 \begin{equation} \label{eq: Lambda_variance_bounding}
     \Lambda = \sup\{ \lambda \in {\rm spec}(P\mid L_2^0) \} < 1,
 \end{equation}
where ${\rm spec}(P\mid L_2^0)$ denotes the spectrum of $P \colon L_2^0 \to L_2^0$.
\end{definition}
For a motivation of the term variance bounding and a
general treatment we refer to \cite{RoRo08}.
 In particular, by \cite[Theorem~14]{RoRo08} under the assumption of reversibility our definition
is equivalent to the one stated by Roberts and Rosenthal.
Note that the existence of an absolute $L_2$-spectral gap implies variance bounding, since
\[
 \norm{P}_{L_2^0 \to L_2^0} = \sup_{ \lambda \in {\rm spec}(P\mid L_2^0) }\abs{\lambda}.
\]
We have the following relation between variance bounding
and the total variation distance.

\begin{lemma}	\label{lem: tv_variance_bounding}
 Let the transition kernel 
 $K$ be reversible with respect to $\pi$ and let $n\in \NN$ with $n\geq 2$.
 Further, let $P$ be variance bounding.  
 Then the Markov operator
 $P_n = \frac{1}{n} \sum_{j=0}^{n-1} P^j$ has an absolute $L_2$-spectral gap.	
 In particular, if $\nu$ is a distribution on $(G,\mathcal{B}(G))$ with $\frac{d\nu}{d\pi} \in L_2$, then
 \[
  \norm{\nu P_n - \pi}_{{\rm tv}} \leq \frac{1-\Lambda_0^n}{n\cdot(1-\Lambda_0)} \norm{\frac{d \nu}{d \pi}-1}_{2},
 \]
 with $\Lambda_0 = \max\{0,\Lambda\}$, see \eqref{eq: Lambda_variance_bounding}.
 \end{lemma}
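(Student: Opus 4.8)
The plan is to reduce the operator statement to a one-variable estimate by the functional calculus, and then to derive the total variation bound from Proposition~\ref{prop: tv_abs_spec_gap}. Since $K$ is reversible with respect to $\pi$, the Markov operator $P$ is self-adjoint on $L_2$ and, by stationarity of $\pi$, the subspace $L_2^0$ is invariant under $P$. As $P$ is a Markov operator we have $\norm{P}_{L_2\to L_2}\le1$, so together with the variance bounding assumption $\spec(P\mid L_2^0)\subseteq[-1,\Lambda]$ with $\Lambda<1$. Writing $g_n(\lambda)=\frac{1}{n}\sum_{j=0}^{n-1}\lambda^j$, the restriction of $P_n=\frac1n\sum_{j=0}^{n-1}P^j$ to $L_2^0$ equals $g_n$ applied to the restriction of $P$ to $L_2^0$, hence by the spectral theorem for bounded self-adjoint operators
\[
 \norm{P_n}_{L_2^0\to L_2^0}=\sup_{\lambda\in\spec(P\mid L_2^0)}\abs{g_n(\lambda)}\le\sup_{\lambda\in[-1,\Lambda]}\abs{g_n(\lambda)}.
\]
It remains to estimate $g_n$ on $[-1,\Lambda]$, which I would do by splitting the interval at $0$.

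On $[0,\Lambda_0]$, where $\Lambda_0=\max\{0,\Lambda\}$, the function $\lambda\mapsto g_n(\lambda)$ is non-negative and non-decreasing, so $0\le g_n(\lambda)\le g_n(\Lambda_0)=\frac{1-\Lambda_0^n}{n(1-\Lambda_0)}$ (which is $1/n$ when $\Lambda_0=0$). On $[-1,0]$ I would use $g_n(\lambda)=\frac{1-\lambda^n}{n(1-\lambda)}$ together with the elementary inequality $\abs{1-\lambda^n}\le1-\lambda$: if $\lambda^n\ge0$ then $\abs{1-\lambda^n}=1-\lambda^n\le1\le1-\lambda$, and if $\lambda^n<0$ then $\abs{1-\lambda^n}=1+\abs{\lambda}^n\le1+\abs{\lambda}=1-\lambda$ because $\abs{\lambda}\le1$; since $1-\lambda>0$ this yields $\abs{g_n(\lambda)}\le\frac1n\le\frac{1-\Lambda_0^n}{n(1-\Lambda_0)}$. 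Combining the two pieces,
\[
 \norm{P_n}_{L_2^0\to L_2^0}\le\frac{1-\Lambda_0^n}{n(1-\Lambda_0)}=\frac{1}{n}\sum_{j=0}^{n-1}\Lambda_0^j,
\]
and the right-hand side is strictly less than $1$ for $n\ge2$ (an average of $n\ge2$ numbers in $[0,1]$ at least one of which is $<1$; it equals $1/n\le1/2$ when $\Lambda_0=0$). Hence $P_n$ has an absolute $L_2$-spectral gap.

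For the total variation estimate, consider the averaged transition kernel $K_n(x,\cdot)=\frac1n\sum_{j=0}^{n-1}K^j(x,\cdot)$, where $K^0(x,\cdot)=\delta_x$ and $K^j$ denotes the $j$-step kernel. Being a convex combination of transition kernels that are reversible with respect to $\pi$, $K_n$ is again a transition kernel reversible with respect to $\pi$; its induced Markov operator is $P_n$, and $\nu K_n=\frac1n\sum_{j=0}^{n-1}\nu P^j=\nu P_n$. Since $\frac{\rd\nu}{\rd\pi}\in L_2$ by hypothesis, Proposition~\ref{prop: tv_abs_spec_gap} applied to $K_n$ gives
\[
 \norm{\nu P_n-\pi}_{\rm tv}=\norm{\nu K_n-\pi}_{\rm tv}\le\norm{P_n}_{L_2^0\to L_2^0}\norm{\frac{\rd\nu}{\rd\pi}-1}_2\le\frac{1-\Lambda_0^n}{n(1-\Lambda_0)}\norm{\frac{\rd\nu}{\rd\pi}-1}_2,
\]
which is the claimed inequality. (Alternatively one argues directly, using the identity $\frac{\rd(\nu P_n)}{\rd\pi}=P_n\frac{\rd\nu}{\rd\pi}$ together with $P_n1=1$, the bound $\norm{\cdot}_1\le\norm{\cdot}_2$, and $\frac{\rd\nu}{\rd\pi}-1\in L_2^0$.)

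I expect the only step requiring genuine care to be the scalar estimate on $[-1,0]$, i.e. the inequality $\abs{1-\lambda^n}\le1-\lambda$; the reduction to a scalar problem via self-adjointness and functional calculus, the monotonicity on $[0,\Lambda_0]$, and the transfer to the total variation distance through Proposition~\ref{prop: tv_abs_spec_gap} are all routine.
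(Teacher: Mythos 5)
Your proposal is correct and follows essentially the same route as the paper: self-adjointness plus the spectral mapping theorem reduce the claim to bounding $g_n(\lambda)=\frac{1}{n}\sum_{j=0}^{n-1}\lambda^j$ on $[-1,\Lambda]$, split at $0$ exactly as in the paper (your inequality $\abs{1-\lambda^n}\le 1-\lambda$ on $[-1,0]$ is just a spelled-out version of the paper's bound $\frac{1}{n}$ there), and the total variation estimate is obtained from Proposition~\ref{prop: tv_abs_spec_gap} applied to the averaged kernel. Your treatment is only slightly more explicit (e.g.\ in verifying that $\frac{1}{n}\sum_{j}\Lambda_0^j<1$ for $n\ge 2$ and that $K_n$ is a reversible transition kernel), but the argument is the same.
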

\begin{proof}
By the spectral theorem for bounded self-adjoint operators we have for a
polynomial
$F\colon {\rm spec}(P \mid L_2^0) \to \mathbb{R}$ that
  \[
   \norm{F(P)}_{L_2^0 \to L_2^0} = \max_{\a \in \spec(P|L_2^0)} \abs{F(\a)}.
  \]
  For details see for example \cite{Ru91} or \cite[Theorem~9.9-2]{Kr89}.
  In our case $F(\lambda)= \frac{1}{n}\sum_{i=0}^{n-1} \lambda^i $ so that
  $F(P)=\frac{1}{n}\sum_{i=0}^{n-1} P^i$.
  Thus
  \begin{align*}
\norm{\frac{1}{n}\sum_{i=0}^{n-1}  P^i}_{L_2^0\to L_2^0}
   & =  \max_{\lambda \in \spec(P|L_2^0)} 
   \abs{\frac{1-\lambda^n}{n\cdot(1-\lambda)}}
    \leq \frac{1-\Lambda_0^n}{n\cdot(1-\Lambda_0)}.
  \end{align*}
The last inequality is proven by $\spec(P|L_2^0) \subseteq[-1,1]$ and the following facts: For $\lambda\in[-1,0]$ 
holds $\frac{1-\lambda^n}{n\cdot(1-\lambda) } 
 \leq \frac{1}{n}$ and for $\lambda\in[0,1]$
the function
$\frac{1-\lambda^n}{n\cdot(1-\lambda)}= \frac{1}{n}\sum_{j=0}^{n-1} \lambda^j$ is increasing.
 The estimate of the total variation distance follows by
Proposition~\ref{prop: tv_abs_spec_gap}.
\end{proof}

The next part deals with an update function, say $\varphi$, of a given transition kernel $K$.
We state the crucial properties of the transition kernel in terms
of an update function. This is partially based on \cite{DiRuZh13}.

\begin{definition}[Update function]
Let $\varphi:G \times [0,1]^s \to G$
be a measurable
function and
\begin{align*}
B : G \times \mathcal{B}(G) & \to \mathcal{B}([0,1]^s), \\
B(x,A) & = \{u \in [0,1]^s: \varphi(x;u) \in A\}.
\end{align*}
Let $\lambda_s$ denote the Lebesgue measure on $\mathbb{R}^s$.
Then the function $\varphi$ is an update function for the transition kernel $K$ if and only if
\begin{equation}\label{eq_update_prop}
K(x,A) = \P(\varphi(x;U)\in A) = \lambda_s(B(x,A)),
\end{equation}
where $\P$ is the probability measure for the uniform distribution in $[0,1]^s$.
\end{definition}
Note that for any transition kernel on $(G,\mathcal{B}(G))$
there exists an update function with $s=1$, see for example \cite[Lemma~2.22, p.~34]{Ka02}.
For $x\in G$ and $A\in \mathcal{B}(G)$ the set $B(x,A)$ is
the set of all random numbers $u\in [0,1]^s$ which take
$x$ into the set $A$ using the update function $\varphi$
with arguments $x$ and $u$.

We consider the iterated application of an update function.
Let $\varphi_1(x;u) = \varphi(x;u)$ and for $i > 1$ with $i\in \NN$ let
\begin{align*}
\varphi_i & : G \times [0,1]^{is} \to G, \\
\varphi_i(x; u_1, u_2, \ldots, u_i) & = \varphi(\varphi_{i-1}(x; u_1, u_2,\ldots, u_{i-1}); u_i).
\end{align*}
Thus, $x_{i+1}=\varphi_i(x; u_1, u_2,\ldots, u_i) \in G$ is the point obtained 
via $i$ updates using the sequence $u_1,u_2,\dots,u_i\in[0,1]^s$, 
where the starting point is $x \in G$.
\begin{lemma}  \label{lem: update_fct}
 Let $i,j\in\NN$ and $i\geq j$. For any $u_1,\dots,u_i \in [0,1]^s$ and $x\in G$ we have
 \begin{equation} \label{eq: iterat_update_fct}
    \varphi_i(x;u_1,\dots,u_i) = \varphi_{i-j}(\varphi_j(x;u_1,\dots,u_j);u_{j+1},\dots,u_i).
  \end{equation}
\end{lemma}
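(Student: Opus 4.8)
The plan is to prove the identity \eqref{eq: iterat_update_fct} by induction on $j$, keeping $i$ (and the sequence $u_1,\dots,u_i$ and starting point $x$) fixed. For the base case $j=1$ we must show $\varphi_i(x;u_1,\dots,u_i) = \varphi_{i-1}(\varphi_1(x;u_1);u_2,\dots,u_i)$; since $\varphi_1 = \varphi$, this says that $i$ iterations of the update function starting at $x$ equal $i-1$ iterations starting at $\varphi(x;u_1)$ but with the tail of the sequence. This already requires an auxiliary observation, so it is cleanest to first prove a slightly different statement: that ``peeling off the last step'' can instead be phrased as ``peeling off the first step''. Concretely, the definition of $\varphi_i$ peels the $u_i$ off the end, whereas \eqref{eq: iterat_update_fct} with $j=1$ peels $u_1$ off the front; reconciling these two is the real content.

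Accordingly, I would set up the induction on $i$ instead (or in addition), proving the auxiliary lemma
\begin{equation*}
 \varphi_i(x;u_1,\dots,u_i) = \varphi_{i-1}(\varphi(x;u_1);u_2,\dots,u_i), \qquad i\geq 2,
\end{equation*}
by induction on $i$. For $i=2$ this is $\varphi_2(x;u_1,u_2) = \varphi(\varphi_1(x;u_1);u_2) = \varphi_1(\varphi(x;u_1);u_2)$, which is immediate from the definition of $\varphi_2$ together with $\varphi_1 = \varphi$. For the inductive step, unfold the definition $\varphi_{i}(x;u_1,\dots,u_{i}) = \varphi(\varphi_{i-1}(x;u_1,\dots,u_{i-1});u_{i})$, apply the induction hypothesis to $\varphi_{i-1}(x;u_1,\dots,u_{i-1}) = \varphi_{i-2}(\varphi(x;u_1);u_2,\dots,u_{i-1})$ inside the outer $\varphi$, and then recognize the result as $\varphi(\varphi_{i-2}(\varphi(x;u_1);u_2,\dots,u_{i-1});u_i) = \varphi_{i-1}(\varphi(x;u_1);u_2,\dots,u_i)$, again by the definition of $\varphi_{i-1}$. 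This closes the auxiliary lemma.

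Granting the auxiliary lemma, the main identity \eqref{eq: iterat_update_fct} follows by induction on $j$ for fixed $i\geq j$. The case $j=1$ is exactly the auxiliary lemma (note $\varphi_0$ should be read as the identity map in the position, or one simply restricts to $j\geq 1$ and $i\geq j$, handling $i=j$ by the convention $\varphi_0(y)=y$). For the inductive step from $j-1$ to $j$ (with $i\geq j$), write $\varphi_i(x;u_1,\dots,u_i) = \varphi_{i-(j-1)}(\varphi_{j-1}(x;u_1,\dots,u_{j-1});u_{j},\dots,u_i)$ by the induction hypothesis, set $y = \varphi_{j-1}(x;u_1,\dots,u_{j-1})$, apply the auxiliary lemma to $\varphi_{i-j+1}(y;u_j,\dots,u_i) = \varphi_{i-j}(\varphi(y;u_j);u_{j+1},\dots,u_i)$, and finally observe that $\varphi(y;u_j) = \varphi(\varphi_{j-1}(x;u_1,\dots,u_{j-1});u_j) = \varphi_j(x;u_1,\dots,u_j)$ by the definition of $\varphi_j$.

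I expect the only genuine obstacle to be purely bookkeeping: keeping the indices on the nested $\varphi_k$'s and the ranges of the $u$-subsequences consistent, and deciding on a convention for $\varphi_0$ so that the boundary case $i=j$ is not a special nuisance. There is no analytic or probabilistic difficulty here — the statement is a structural identity about iterated composition — so once the auxiliary ``peel from the front equals peel from the back'' lemma is in place, the rest is two short inductions.
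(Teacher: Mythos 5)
Your argument is correct, but it is organized differently from the paper's (one-line) proof, which simply performs induction on $i$. You fix $i$ and induct on $j$, which forces you to first establish the ``peel from the front'' identity $\varphi_i(x;u_1,\dots,u_i)=\varphi_{i-1}(\varphi(x;u_1);u_2,\dots,u_i)$ by a separate induction on $i$, and then run a second induction on $j$ on top of it. The paper's route avoids the auxiliary lemma entirely: fixing $j$ and inducting on $i\geq j$, the base case $i=j$ is the convention $\varphi_0=\mathrm{id}$, and the step from $i$ to $i+1$ is immediate because \emph{both} sides of \eqref{eq: iterat_update_fct} are, by definition, obtained from their $i$-versions by applying the same outer map $\varphi(\,\cdot\,;u_{i+1})$ --- that is, $\varphi_{i+1}(x;u_1,\dots,u_{i+1})=\varphi\bigl(\varphi_i(x;u_1,\dots,u_i);u_{i+1}\bigr)$ and $\varphi_{i+1-j}(y;u_{j+1},\dots,u_{i+1})=\varphi\bigl(\varphi_{i-j}(y;u_{j+1},\dots,u_i);u_{i+1}\bigr)$, so the induction hypothesis transfers in one line. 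The recursive definition peels off the \emph{last} coordinate, so inducting on the last index is the path of least resistance; your version is sound and complete but does roughly twice the bookkeeping. Both approaches need the same convention $\varphi_0(y)=y$ for the boundary case $i=j$, which you correctly flag.
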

\begin{proof}
 The proof follows by induction on $i$.
\end{proof}
For $i \ge 1$ with $i\in\NN$ let
\begin{align*}
B_i & : G \times \mathcal{B}(G) \to \mathcal{B}([0,1]^{i s}), \\
B_i(x,A) & = \{(u_1, u_2, \ldots, u_i) \in [0,1]^{i s}: \varphi_i(x; u_1, u_2,\ldots, u_i) \in A\}.
\end{align*}
Note that $B_1(x,A) = B(x,A)$. 
For $x\in G$ and $A\in \mathcal{B}(G)$ the set $B_i(x,A)$ is
the set of all random numbers $u_1,u_2,\dots,u_i\in [0,1]^s$ which take
$x$ into the set $A$ after the $i$th iteration of the update function $\varphi$, i.e.
$\varphi_i$
with arguments $x$ and $u_1,u_2,\dots,u_i$.

In \cite{DiRuZh13} 
we considered the case where the initial state is deterministically chosen.
The following definition is useful to work with general initial distributions.
\begin{definition}
For a probability measure $\nu$ on $(G,\mathcal{B}(G))$ we call a measurable
function
$\psi\colon [0,1]^s \to G$ generator function if
\[
 \nu(A) = \P ( \psi(U) \in A), \quad A \in \mathcal{B}(G),
\]
where $\P$ is the uniform distribution in $[0,1]^s$.
\end{definition}
Let $\nu$ be a probability measure on $(G,\mathcal{B}(G))$ and let
$\psi \colon [0,1]^s \to G$ be its generator function.

Then, for $i\geq1$ with $i\in \NN$ and $A\in \mathcal{B}(G)$, let
\begin{equation} \label{eq: C_i_psi}
  \begin{split}
    C_{i,\psi}(A) & = \{	(u_0,u_1,\dots,u_{i})\in[0,1]^{(i+1)s}\colon \varphi_{i}(\psi(u_0);u_1,\dots,u_{i})\in A	\}\\
         & = \{	(u_0,u_1,\dots,u_{i})\in[0,1]^{(i+1)s}\colon (u_1,\dots,u_i)\in B_i(\psi(u_0), A)	\}
  \end{split}
\end{equation}
and $C_{0,\psi}(A)=\{ u_0\in[0,1]^s \colon \psi(u_0)\in A	\}$.
The set $C_{i,\psi}(A)\subseteq[0,1]^{(i+1)s}$ is the set of possible sequences
to get into the set $A$ with starting point $\psi(u_0)$ and $i$ updates of
the update function.

The next lemma is important to understand the relation
between the update function, generator function,
transition kernel and initial distribution.

\begin{lemma}  \label{lem: same_expect}
Let $K$ be a transition kernel and $\nu$ be a distribution on $(G,\mathcal{B}(G))$.
Let $(X_n)_{n\in\NN}$ be a Markov
chain with transition kernel $K$ and initial distribution $\nu$.
Let us assume that $i\in \NN$ and $F\colon G^{i} \to \R$.
Then, for any update function $\varphi\colon G \times [0,1]^s \to G$ 
of the transition kernel $K$ and any generator function $\psi \colon [0,1]^s \to G$
of $\nu$ 
the expectation of $F$ with respect to the joint distribution of
$X_1,\dots,X_{i}$ satisfies
\begin{equation} \label{eq: same_expect}
 \begin{split}
&   \EE_{\nu,K}(F(X_1,\dots,X_{i})) \\
&  =\int_{[0,1]^{is}}
F(\psi(u_0),\varphi_1(\psi(u_0),u_1),\dots,\varphi_{i-1}(\psi(u_0),u_1,\dots,u_{i-1}))\;\\
& \qquad\qquad \qquad \qquad \qquad \qquad \times \rd u_0\, \rd u_1\dots \rd u_{i-1},
  \end{split}
\end{equation}
whenever one of the integrals exist.
\end{lemma}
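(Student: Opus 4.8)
The plan is to reduce the identity \eqref{eq: same_expect} to a standard disintegration of the joint law of $(X_1,\dots,X_i)$ into a product of the initial distribution and successive applications of the transition kernel, and then to push each integration over $G$ back to an integration over $[0,1]^s$ using the defining properties of $\psi$ and $\varphi$. First I would recall that, by the construction of a Markov chain with initial distribution $\nu$ and transition kernel $K$,
\[
 \EE_{\nu,K}(F(X_1,\dots,X_i))
 = \int_G \nu(\rd x_1) \int_G K(x_1,\rd x_2) \cdots \int_G K(x_{i-1},\rd x_i)\, F(x_1,\dots,x_i),
\]
whenever one of the integrals exists; this is just the definition of the law of the finite-dimensional marginal and holds by Fubini/Tonelli under the stated integrability hypothesis.

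Next I would handle the innermost integral. For a generator function $\psi$ of $\nu$ the image measure of $\lambda_s$ on $[0,1]^s$ under $\psi$ is $\nu$, so the change-of-variables (image measure) theorem gives, for any measurable $g\colon G\to\R$,
\[
 \int_G g(x_1)\, \nu(\rd x_1) = \int_{[0,1]^s} g(\psi(u_0))\, \rd u_0 .
\]
Similarly, \eqref{eq_update_prop} says that for fixed $x$ the image of $\lambda_s$ under $u\mapsto\varphi(x;u)$ is $K(x,\cdot)$, hence
\[
 \int_G h(y)\, K(x,\rd y) = \int_{[0,1]^s} h(\varphi(x;u))\, \rd u
\]
for any measurable $h\colon G\to\R$. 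Applying the second identity $i-1$ times, from the outermost $K$-integral inward, and then the first identity once, converts the iterated integral over $G^i$ into an iterated integral over $([0,1]^s)^i$ in which $x_1$ is replaced by $\psi(u_0)$, $x_2$ by $\varphi(\psi(u_0);u_1)=\varphi_1(\psi(u_0),u_1)$, and in general $x_{k+1}$ by $\varphi_k(\psi(u_0),u_1,\dots,u_k)$; the latter follows by the definition of $\varphi_k$ together with Lemma~\ref{lem: update_fct}. Reindexing the dummy variables as $u_0,u_1,\dots,u_{i-1}$ yields exactly the right-hand side of \eqref{eq: same_expect}. The main point requiring care is the measurability and integrability bookkeeping needed to apply the image-measure theorem and Fubini at each stage: one should argue that the partial integrals define measurable functions of the remaining variables (which is standard, since $\varphi$ and $\psi$ are measurable and $K$ is a kernel) and that finiteness of one of the iterated integrals propagates so that all rearrangements are justified. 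I expect this measurability/Fubini layer, rather than the algebraic substitution, to be the only real obstacle, and it is handled by the usual monotone-class reasoning applied to the positive and negative parts of $F$.
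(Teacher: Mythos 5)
Your proposal is correct and takes essentially the same route as the paper's proof: both write $\EE_{\nu,K}(F(X_1,\dots,X_i))$ as the iterated integral against $\nu$ and $K$, and then use the image-measure identities for $\psi$ and $\varphi$ together with Lemma~\ref{lem: update_fct} to pass between the $G^i$-integral and the $[0,1]^{is}$-integral one coordinate at a time. The only cosmetic difference is the direction of the substitution (the paper starts from the $[0,1]^{is}$ side and works toward the $G^i$ side), which does not affect the argument.
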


\begin{proof}
First, note that
\begin{align*}
&  \EE_{\nu,K}(F(X_1,\dots,X_{i})) \\
& = \underbrace{\int_G \dots \int_G}_{i\text{-times}}
F(x_1,\dots,x_{i})\, K(x_{i-1},\rd x_{i}) \dots K(x_1,\rd x_2)\, \nu(\rd x_1).
\end{align*}
By the fact that $\psi$ is a generator function of $\nu$
we have
\begin{align*}
& \int_{[0,1]^{is}}
F(\psi(u_0),\varphi_1(\psi(u_0),u_1),\dots,\varphi_{i-1}(\psi(u_0),u_1,\dots,u_{i-1}))\,\rd u_0\, \rd u_1\dots \rd u_{i-1}\\
& = \int_G \int_{[0,1]^{(i-1)s}}
F(x_1,\varphi_1(x_1,u_1),\dots,\varphi_{i-1}(x_1,u_1,\dots,u_{i-1}))\,
\rd u_1\dots \rd u_{i-1}\,\nu({\rm d}x_1),
\end{align*}
and by Lemma~\ref{lem: update_fct} we obtain
\begin{align*}
 &\int_G \int_{[0,1]^{(i-1)s}}
F(x_1,\varphi_1(x_1,u_1),\dots,\varphi_{i-1}(x_1,u_1,\dots,u_{i-1}))\,
\rd u_1\dots \rd u_{i-1}\,\nu({\rm d}x_1)\\
& = \int_G \int_G \int_{[0,1]^{(i-2)s}}
F(x_1,x_2,\varphi_1(x_2,u_2),\dots,\varphi_{i-1}(x_2,u_2,\dots,u_{i-1}))\\
& \qquad\qquad\qquad\qquad\qquad\qquad\qquad\qquad \times
\rd u_2\dots \rd u_{i-1}\,K(x_1,{\rm d}x_2)\,\nu({\rm d}x_1).
\end{align*}
By iterating the application of Lemma~\ref{lem: update_fct}
the assertion is proven.
\end{proof}
Note that the right-hand-side of \eqref{eq: same_expect} is the expectation with respect
to the uniform distribution in $[0,1]^{is}$.

\begin{corollary}
Assume that the conditions of Lemma~\ref{lem: same_expect} are satisfied. Then, for $A\in \mathcal{B}(G)$, we have
\begin{equation} \label{eq: C_and_P}
 \nu P^i(A)
= \lambda_{(i+1)s}(C_{i,\psi}(A)),
\end{equation}
and $\nu P^0(A) = \nu(A) = \lambda_s(C_{0,\psi}(A))$.
\end{corollary}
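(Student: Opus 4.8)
The plan is to obtain the identity as an immediate specialization of Lemma~\ref{lem: same_expect}, taking $F$ to be an indicator function.

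First I would dispose of the case $i=0$. Since $\psi$ is a generator function of $\nu$, its defining property gives $\nu(A) = \P(\psi(U)\in A)$ for $U\sim\mathcal{U}[0,1]^s$, and the right-hand side is by definition $\lambda_s(\{u_0\in[0,1]^s\colon \psi(u_0)\in A\}) = \lambda_s(C_{0,\psi}(A))$. Combined with $\nu P^0(A)=\nu(A)$, this is the base case. For $i\geq 1$, I would apply Lemma~\ref{lem: same_expect} with the parameter $i$ there replaced by $i+1\geq 2$, and with the bounded measurable function $F\colon G^{i+1}\to\R$ given by $F(x_1,\dots,x_{i+1}) = 1_A(x_{i+1})$. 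Because $F$ is bounded, both integrals in \eqref{eq: same_expect} exist, so the lemma applies. Its left-hand side becomes $\E_{\nu,K}(1_A(X_{i+1})) = \P_{\nu,K}(X_{i+1}\in A) = \nu P^i(A)$, using the relation $\P_{\nu,K}(X_n\in A)=\nu P^{n-1}(A)$ recalled in Section~\ref{sec_mc}. Its right-hand side becomes $\int_{[0,1]^{(i+1)s}} 1_A\bigl(\varphi_i(\psi(u_0),u_1,\dots,u_i)\bigr)\,\rd u_0\,\rd u_1\cdots\rd u_i$, which, by the description of $C_{i,\psi}(A)$ in \eqref{eq: C_i_psi}, is precisely $\lambda_{(i+1)s}(C_{i,\psi}(A))$. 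Equating the two sides yields \eqref{eq: C_and_P}.

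There is essentially no genuine obstacle; the only points requiring a little attention are purely bookkeeping. One must keep the index shift straight: the set $C_{i,\psi}(A)$ involves the $i+1$ blocks $u_0,u_1,\dots,u_i$, so Lemma~\ref{lem: same_expect} has to be invoked with its parameter equal to $i+1$, not $i$. One should also note that the measurability of $\psi$ and of $\varphi_i$ guarantees that $C_{i,\psi}(A)$ is a Borel subset of $[0,1]^{(i+1)s}$, so that $\lambda_{(i+1)s}(C_{i,\psi}(A))$ is well defined and the final integral is legitimately interpreted as a Lebesgue measure; both facts are already available from the standing measurability assumptions on update and generator functions.
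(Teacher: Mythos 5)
Your proposal is correct and matches the paper's own argument: the paper likewise applies Lemma~\ref{lem: same_expect} with $F(x_1,\dots,x_{i+1})=1_{x_{i+1}\in A}$ (i.e.\ with the lemma's parameter equal to $i+1$) and identifies the resulting integral over $[0,1]^{(i+1)s}$ with $\lambda_{(i+1)s}(C_{i,\psi}(A))$. Your separate treatment of the $i=0$ case via the definition of the generator function is also consistent with the statement.
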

\begin{proof}
 By Lemma~\ref{lem: same_expect} we have
\begin{align*} \label{eq: C_and_P}
 \nu P^i(A)
& = \int_G K^i(x,A)\,\nu(\rd x) \\
& = \int_G \underbrace{\int_G \dots \int_G}_{i\text{-times}}
1_{x_{i+1}\in A}\, K(x_{i},\rd x_{i+1}) \dots K(x_1,\rd x_2)\, \nu(\rd x_1)\\
& = \int_{[0,1]^{(i+1)s}} 1_{\varphi_i(\psi(u_0),u_1,\dots,u_i)\in A} \, \rd u_0 \, \rd u_1 \dots \rd u_i \\
& = \int_{[0,1]^{(i+1)s}} 1_{(u_0,u_1,\dots,u_i)\in C_{i,\psi}(A)} \, \rd u_0 \, \rd u_1 \dots \rd u_i
 = \lambda_{(i+1)s}(C_{i,\psi}(A)),
\end{align*}
which completes the proof.
\end{proof}

\section{On the pull-back discrepancy}\label{sec_discr}

Let $\mathscr{A} \subseteq \mathcal{B}(G)$ be a set of test sets.
Then the star-discrepancy of a point set
$S_{n} = \{x_1,  \ldots, x_{n} \} \subseteq G$
with respect to the distribution $\pi$
is given by
\begin{equation*}
D^\ast_{\mathscr{A}, \pi}(S_n)
= \sup_{A \in \mathscr{A}} \left|\frac{1}{n} \sum_{i=1}^{n} 1_{x_i \in A} - \pi(A)\right|.
\end{equation*}	
Assume that $u_0,u_1,\ldots, u_{n-1} \in [0,1]^s$ is a finite deterministic sequence.
We call this finite sequence \emph{driver sequence}.
Further, let $\varphi\colon G \times [0,1]^s \to G$ and $\psi \colon [0,1]^s \to G$
be measurable functions.
Then let 
$S_n=\{x_1, \ldots, x_{n} \} \subseteq G$ be given by
\begin{equation} \label{eq: x_i_by_driver_seq}
 x_{i+1} = x_{i+1}(x_1) = \varphi(x_{i};u_i) = \varphi_i(x_1;u_1,\dots,u_i), \quad i=1,\dots,n-1,
\end{equation}
where $x_1=\psi(u_0)$.
Note that $\psi$ might be considered a generator function
and $\varphi$ might be considered  an update function.
We now define a discrepancy measure on the driver sequence.
We call it \emph{pull-back discrepancy}.
Below we show how this pull-back discrepancy
is related to the star-discrepancy of $S_n$.

\begin{definition}[Pull-back discrepancy]\label{def-pull-back-discrepancy}
Let $\;\mathcal{U}_{n} = \{u_0, u_1,\ldots, u_{n-1} \}
\subset [0,1]^{s}$ and
let $C_{i,\psi}(A)$ for $A\in\mathcal{B}(G)$ and $i\in \mathbb{N}\cup \{0\}$ be
defined as in \eqref{eq: C_i_psi}. Define the local discrepancy function by
\[
\Delta^{\loc}_{n,A,\psi,\varphi}(\mathcal{U}_n)
=\frac{1}{n}  \sum_{i=0}^{n-1} \left[1_{(u_0,\ldots, u_i) \in C_{i,\psi}(A)} -
\lambda_{(i+1)s}(C_{i,\psi}(A)) \right].
\]
Let $\mathscr{A} \subseteq \mathcal{B}(G)$ be a set of test sets. Then we define the discrepancy of the driver sequence by
\begin{equation*}
D^\ast_{\mathscr{A},\psi,\varphi}(\mathcal{U}_n)
 = \sup_{A \in \mathscr{A}} \left|\Delta^{\loc}_{n,A,\psi,\varphi}(\mathcal{U}_n) \right|.
\end{equation*}
We call $D^\ast_{\mathscr{A},\psi,\varphi}(\mathcal{U}_n)$  pull-back discrepancy of $\mathcal{U}_n$.
\end{definition}

The discrepancy of the driver sequence
$D^\ast_{\mathscr{A},\psi, \varphi}(\mathcal{U}_n)$
is a `pull-back discrepancy' since the test sets $C_{i,\psi}(A)$ are derived
from the test sets $A \in \mathscr{A}$ from the
star-discrepancy $D^\ast_{\mathscr{A}, \pi}(S_n)$
via inverting the update function and the generator.

The following theorem
provides a relation between the star-discrepancy of $S_n$ and the pull-back discrepancy
of $\mathcal{U}_n$, this is similar to \cite[Theorem~1]{DiRuZh13}.

\begin{theorem} \label{thm: est_discr}
Let $K$ be a transition kernel and $\nu$
be a distribution on $(G,\mathcal{B}(G))$.
Let $\mathscr{A} \subseteq \mathcal{B}(G)$ be a set of test sets.
Then, for any update function $\varphi\colon G \times [0,1]^s \to G$ 
of $K$ and any generator function $\psi \colon [0,1]^s \to G$ of $\nu$
we have, with driver sequence 
$\mathcal{U}_n=\{u_0, u_1, \ldots, u_{n-1}\} \subset [0,1]^s$ 
and $S_n$ given by \eqref{eq: x_i_by_driver_seq}, that
\begin{align*}
\abs{D^\ast_{\mathscr{A}, \pi}(S_n) - D^\ast_{\mathscr{A},\psi,\varphi}(\mathcal{U}_n) }
&  \le \sup_{A \in \mathscr{A}} \abs{  \frac{1}{n} \sum_{i=0}^{n-1} \nu P^i(A) - \pi(A) }.
\end{align*}
\end{theorem}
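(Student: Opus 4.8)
The plan is to express both discrepancies in terms of the \emph{same} empirical counts $\frac{1}{n}\sum_{j=1}^{n}1_{x_j\in A}$ and then to compare their two centering terms, which will differ by exactly the quantity on the right-hand side. First I would record the pointwise identity linking the pulled-back test sets to the orbit of the deterministic chain: by the definition \eqref{eq: C_i_psi} of $C_{i,\psi}(A)$ together with \eqref{eq: x_i_by_driver_seq}, for every $A\in\mathscr{A}$ and every $i\in\{1,\dots,n-1\}$,
\[
 1_{(u_0,\dots,u_i)\in C_{i,\psi}(A)}=1_{\varphi_i(\psi(u_0);u_1,\dots,u_i)\in A}=1_{x_{i+1}\in A},
\]
while for $i=0$ the definition $C_{0,\psi}(A)=\{u_0:\psi(u_0)\in A\}$ and $x_1=\psi(u_0)$ give $1_{u_0\in C_{0,\psi}(A)}=1_{x_1\in A}$. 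Summing over $i=0,\dots,n-1$ and reindexing $j=i+1$ therefore yields $\frac{1}{n}\sum_{i=0}^{n-1}1_{(u_0,\dots,u_i)\in C_{i,\psi}(A)}=\frac{1}{n}\sum_{j=1}^{n}1_{x_j\in A}$, which is precisely the empirical quantity appearing in $D^\ast_{\mathscr{A},\pi}(S_n)$.

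Next I would identify the centering term of the pull-back discrepancy. By the corollary to Lemma~\ref{lem: same_expect}, namely \eqref{eq: C_and_P}, one has $\lambda_{(i+1)s}(C_{i,\psi}(A))=\nu P^i(A)$ for all $i\ge 0$. Substituting this and the previous step into the definition of $\Delta^{\loc}_{n,A,\psi,\varphi}(\mathcal{U}_n)$ gives, for each $A\in\mathscr{A}$,
\[
 \Delta^{\loc}_{n,A,\psi,\varphi}(\mathcal{U}_n)=\frac{1}{n}\sum_{j=1}^{n}1_{x_j\in A}-\frac{1}{n}\sum_{i=0}^{n-1}\nu P^i(A),
\]
so that the two local discrepancy functions differ only through their centering terms:
\[
 \Bigl(\frac{1}{n}\sum_{j=1}^{n}1_{x_j\in A}-\pi(A)\Bigr)-\Delta^{\loc}_{n,A,\psi,\varphi}(\mathcal{U}_n)=\frac{1}{n}\sum_{i=0}^{n-1}\nu P^i(A)-\pi(A).
\]

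Finally I would pass to absolute values and then to suprema over $\mathscr{A}$. The reverse triangle inequality applied to the last display yields, for every fixed $A\in\mathscr{A}$,
\[
 \abs{\,\abs{\frac{1}{n}\sum_{j=1}^{n}1_{x_j\in A}-\pi(A)}-\abs{\Delta^{\loc}_{n,A,\psi,\varphi}(\mathcal{U}_n)}\,}\le\abs{\frac{1}{n}\sum_{i=0}^{n-1}\nu P^i(A)-\pi(A)}.
\]
Taking the supremum over $A\in\mathscr{A}$ on both sides and invoking the elementary estimate $\abs{\sup_{A}a(A)-\sup_{A}b(A)}\le\sup_{A}\abs{a(A)-b(A)}$, valid for any bounded real functions $a,b$ on $\mathscr{A}$, applied with $a(A)=\abs{\frac{1}{n}\sum_{j=1}^{n}1_{x_j\in A}-\pi(A)}$ and $b(A)=\abs{\Delta^{\loc}_{n,A,\psi,\varphi}(\mathcal{U}_n)}$ --- whose suprema are exactly $D^\ast_{\mathscr{A},\pi}(S_n)$ and $D^\ast_{\mathscr{A},\psi,\varphi}(\mathcal{U}_n)$ --- gives the claimed bound. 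I do not expect a genuine obstacle here: the step requiring the most care is the bookkeeping identity of the first paragraph (in particular the separate treatment of the $i=0$ term via the generator function $\psi$) together with the correct use of \eqref{eq: C_and_P}; the supremum inequality is completely standard.
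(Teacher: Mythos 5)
Your proposal is correct and follows essentially the same route as the paper: both rest on the identification $1_{(u_0,\dots,u_i)\in C_{i,\psi}(A)}=1_{x_{i+1}\in A}$ together with \eqref{eq: C_and_P}, so that the two local discrepancies differ exactly by $\frac{1}{n}\sum_{i=0}^{n-1}\nu P^i(A)-\pi(A)$. The paper phrases the conclusion as two one-sided triangle-inequality estimates, whereas you package it via the reverse triangle inequality and $\abs{\sup_A a - \sup_A b}\le\sup_A\abs{a-b}$; this is only a cosmetic difference.
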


\begin{proof}
For any $A \in \mathscr{A}$
we have by \eqref{eq: C_and_P} that $\lambda_{(i+1)s}(C_{i,\psi}(A))=\nu P^i(A)$.
Thus
\begin{align*}
& \quad\, \left| \frac{1}{n} \sum_{i=1}^{n} 1_{x_i \in A} - \pi(A) \right| \\
&= \left| \frac{1}{n} \sum_{i=0}^{n-1} \left[ 1_{(u_0,\ldots, u_i) \in C_{i,\psi}(A)} - \nu P^i(A) + \nu P^i(A)- \pi(A) \right] \right| \\
 & \leq \abs{ \frac{1}{n} \sum_{i=0}^{n-1} \left[1_{(u_0,\ldots, u_i) \in C_{i,\psi}(A)} - \lambda_{(i+1)s}(C_{i,\psi}(A))\right]}
+ \abs{ \frac{1}{n} \sum_{i=0}^{n-1} \nu P^i(A) - \pi(A) }.
\end{align*}
Hence
\[
 D^\ast_{\mathscr{A}, \pi}(S_n)
  \le D^\ast_{\mathscr{A},\psi, \varphi}(\mathcal{U}_n)
+ \sup_{A \in \mathscr{A}} \abs{  \frac{1}{n} \sum_{i=0}^{n-1} \nu P^i(A) - \pi(A) }.
 \]
 The inequality
 \[
D^\ast_{\mathscr{A},\psi, \varphi}(\mathcal{U}_n)
  \le D^\ast_{\mathscr{A}, \pi}(S_n)  +
\sup_{A \in \mathscr{A}} \abs{  \frac{1}{n} \sum_{i=0}^{n-1} \nu P^i(A) - \pi(A) }
 \]
 follows by the same arguments.
\end{proof}

\begin{corollary}  \label{coro: D_U_almost_D_P_spec}
Assume that the conditions of Theorem~\ref{thm: est_discr} are satisfied.
By $P$ denote the Markov operator of $K$. Further, let $K$ be reversible
 with respect to $\pi$, let $P$ be variance bounding and let $\frac{d\nu}{d\pi} \in L_2$.
Then, for any update function $\varphi\colon G \times [0,1]^s \to G$ 
of $K$ and any generator function $\psi \colon [0,1]^s \to G$ of $\nu$
we have, with driver sequence 
$\mathcal{U}_n=\{u_0, u_1, \ldots, u_{n-1}\} \subset [0,1]^s$ 
and $S_n$ given by \eqref{eq: x_i_by_driver_seq}, that
 \begin{align*}
\left|D^\ast_{\mathscr{A}, \pi}(S_n) - D^\ast_{\mathscr{A},\psi,\varphi}(\mathcal{U}_n) \right| \le &
\frac{1-\Lambda_0^n}{n\cdot(1-\Lambda_0)} \norm{\frac{d\nu}{d\pi}-1}_{2},
\end{align*}
where $\Lambda_0=\max\{0,\Lambda\}$ and $\Lambda$ is defined in \eqref{eq: Lambda_variance_bounding}.
\end{corollary}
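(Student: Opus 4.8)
The plan is to chain Theorem~\ref{thm: est_discr} with Lemma~\ref{lem: tv_variance_bounding}. First, Theorem~\ref{thm: est_discr} applies under the stated hypotheses (it only needs a transition kernel, a distribution, an update function and a generator function) and gives
\[
\abs{D^\ast_{\mathscr{A}, \pi}(S_n) - D^\ast_{\mathscr{A},\psi,\varphi}(\mathcal{U}_n)}
\le \sup_{A\in\mathscr{A}} \abs{\frac{1}{n}\sum_{i=0}^{n-1}\nu P^i(A) - \pi(A)}.
\]
The key step is to recognise the average on the right as $\nu P_n$, where $P_n = \frac{1}{n}\sum_{j=0}^{n-1}P^j$ is the Markov operator appearing in Lemma~\ref{lem: tv_variance_bounding}: from the definition $\nu P(A)=\int_G K(x,A)\,\nu(\rd x)$ and induction, the action $\nu\mapsto\nu P^i(A)$ is linear in $\nu$, so $\frac{1}{n}\sum_{i=0}^{n-1}\nu P^i(A) = \nu P_n(A)$ for every $A\in\mathcal{B}(G)$.

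Next, since $\mathscr{A}\subseteq\mathcal{B}(G)$, the supremum over $A\in\mathscr{A}$ is at most the supremum over all Borel sets, which is precisely the total variation distance:
\[
\sup_{A\in\mathscr{A}}\abs{\nu P_n(A)-\pi(A)}
\le \sup_{A\in\mathcal{B}(G)}\abs{\nu P_n(A)-\pi(A)}
= \norm{\nu P_n - \pi}_{\text{\rm tv}}.
\]
For $n\ge 2$, Lemma~\ref{lem: tv_variance_bounding} applies verbatim --- reversibility of $K$ with respect to $\pi$, variance bounding of $P$, and $\frac{d\nu}{d\pi}\in L_2$ are all assumed --- and bounds this by $\frac{1-\Lambda_0^n}{n\cdot(1-\Lambda_0)}\norm{\frac{d\nu}{d\pi}-1}_{2}$. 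Combining the three displays proves the claim for $n\ge 2$. For $n=1$ one has $P_1 = I$, hence $\nu P_1=\nu$ and $\frac{1-\Lambda_0}{1\cdot(1-\Lambda_0)}=1$, while $\norm{\nu-\pi}_{\text{\rm tv}}=\frac12\int_G\abs{\frac{d\nu}{d\pi}-1}\,\pi(\rd x)\le\norm{\frac{d\nu}{d\pi}-1}_{2}$ by the Cauchy--Schwarz inequality, so the stated bound holds in that case as well.

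There is no serious obstacle here: the corollary is a direct composition of two already established results. The only points requiring verification are the elementary identity $\frac{1}{n}\sum_{i=0}^{n-1}\nu P^i(A)=\nu P_n(A)$ and the routine enlargement of the supremum over $\mathscr{A}$ to the total variation distance; both are immediate. Should one wish to avoid the $n=1$ discussion entirely, the corollary can simply be stated for $n\ge 2$, matching the hypothesis of Lemma~\ref{lem: tv_variance_bounding}.
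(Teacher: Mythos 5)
Your proof is correct and follows essentially the same route as the paper's: apply Theorem~\ref{thm: est_discr}, identify the Ces\`aro average as $\nu P_n$, bound the supremum over $\mathscr{A}$ by the total variation distance, and invoke Lemma~\ref{lem: tv_variance_bounding}. Your extra remark covering the $n=1$ case (which the lemma formally excludes) is a small but valid addition not present in the paper.
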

\begin{proof}
 With $P_n= \frac{1}{n}\sum_{i=0}^{n-1} P^i$ we have
 \[
  \sup_{A \in \mathscr{A}} \abs{  \frac{1}{n} \sum_{i=0}^{n-1} \nu P^i(A) - \pi(A) }
\leq \norm{\nu P_n - \pi}_{{\rm tv}}.
 \]
Thus, the assertion follows by Lemma~\ref{lem: tv_variance_bounding} and
Theorem~\ref{thm: est_discr}.
\end{proof}

\begin{remark} \label{rem: direc_simulation}
 For the moment let us assume that we can sample with respect to $\pi$.
 For any initial distribution $\nu$ with $\frac{d\nu}{d\pi}\in L_2$, for all $x\in G$ and $A\in \mathcal{B}(G)$
 we set $K(x,A)=\pi(A)$, hence $\Lambda = 0$. Thus,
 for any update function $\varphi$ of $K$ and generator function $\psi$ of $\nu$ we have
  \begin{align*}
\left|D^\ast_{\mathscr{A}, \pi}(S_n) - D^\ast_{\mathscr{A},\psi,\varphi}(\mathcal{U}_n) \right| \le &
\frac{1}{n} \norm{\frac{d\nu}{d\pi}-1}_{2}.
\end{align*}
Note that the discrepancies do not coincide. The reason for this is that the initial state is taken
into account in the average computation.
However, if $\nu=\pi$,
then for any reversible transition kernel with respect to $\pi$
we obtain $D^\ast_{\mathscr{A}, \pi}(P_n) = D^\ast_{\mathscr{A},\psi,\varphi}(\mathcal{U}_n)$.
\end{remark}

\section{Monte Carlo rate of convergence}\label{sec: MC_rate}
In this section we show 
for any update function $\varphi \colon G \times [0,1]^s \to G$ of a 
variance bounding
transition kernel $K$
and any generator function $\psi \colon [0,1]^s \to G$ of a distribution $\nu$ the existence of finite sequences
$\mathcal{U}_n = \{u_0, u_1,\ldots, u_{n-1}\} \subset [0,1]^{s}$,
which define $S_n$ by \eqref{eq: x_i_by_driver_seq}, such that
\begin{align*}
D^\ast_{\mathscr{A},\psi,\varphi}(\mathcal{U}_n)
\quad \text{and} \quad
D^\ast_{\mathscr{A}, \pi}(S_n)
\end{align*}
converge to $0$ approximately with order $n^{-1/2}$. 
The main result is proven for $D^\ast_{\mathscr{A}, \pi}(S_n)$.
The result with respect to $D^\ast_{\mathscr{A},\psi, \varphi}(\mathcal{U}_n)$ holds by Theorem~\ref{thm: est_discr}.\\

\subsection{Useful tools: delta-cover and Hoeffding inequality}

The concept of a $\delta$-cover will be useful (cf. \cite{Gn08} for a discussion of $\delta$-covers, 
bracketing numbers and Vapnik-\v{C}ervonenkis dimension).
\begin{definition} \label{def: delta_cover}
Let $\mathscr{A} \subseteq \mathcal{B}(G)$ be a set of test sets.
A finite subset $\Gamma_\delta \subseteq \mathcal{B}(G)$ is called a $\delta$-cover
of $\mathscr{A}$ with respect to $\pi$
 if for every $A \in \mathscr{A}$ there are sets $C, D \in \Gamma_\delta$ such that
\begin{equation*}
C \subseteq A \subseteq D
\end{equation*}
and
\begin{equation*}
\pi(D \setminus C) \le \delta.
\end{equation*}
We assume that $\emptyset \in \Gamma_\delta$.
\end{definition}
The following result is well known for the uniform distribution, see \cite[Section~2.1]{HeNoWaWo01} (see also \cite[Remark~3]{DiRuZh13} for the particular case below).
\begin{proposition}
Let $\mathscr{A} \subseteq \mathcal{B}(G)$ be a set of test sets. 
Let $\Gamma_\delta$ be a $\delta$-cover of $\mathscr{A}$ with respect to $\pi$.
 Then, for any point set $Z_n=\{z_1,\dots,z_n\} \subseteq G$, we have
 \[
    D^\ast_{\mathscr{A}, \pi}(Z_n)
    \leq \max_{C\in \Gamma_\delta}
    \left| \frac{1}{n} \sum_{i=1}^n 1_{z_i\in C}-\pi(C) \right| + \delta.
 \]
\end{proposition}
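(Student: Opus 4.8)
The statement to prove is the standard $\delta$-cover sandwich bound: for any point set $Z_n$, the star-discrepancy over $\mathscr{A}$ is at most the maximum local discrepancy over the finite cover $\Gamma_\delta$ plus $\delta$. My plan is to fix an arbitrary test set $A \in \mathscr{A}$, sandwich it between cover sets $C \subseteq A \subseteq D$ with $\pi(D\setminus C)\le\delta$, and use monotonicity of both the empirical measure $\widehat{Z}_n(\cdot) = \frac{1}{n}\sum_{i=1}^n 1_{z_i\in\cdot}$ and of $\pi$ to squeeze $\widehat{Z}_n(A)-\pi(A)$ from above and below. Since $\emptyset \in \Gamma_\delta$, the maximum over $\Gamma_\delta$ is well-defined (it is nonnegative), which handles any degenerate cases.

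First I would write, for the upper estimate,
\[
\widehat{Z}_n(A) - \pi(A) \le \widehat{Z}_n(D) - \pi(C) = \bigl(\widehat{Z}_n(D) - \pi(D)\bigr) + \pi(D\setminus C) \le \max_{C'\in\Gamma_\delta}\bigl|\widehat{Z}_n(C') - \pi(C')\bigr| + \delta,
\]
using $A \subseteq D$ for the empirical term, $C \subseteq A$ for the $\pi$ term, and $\pi(D)-\pi(C)=\pi(D\setminus C)\le\delta$. Symmetrically, for the lower estimate,
\[
\widehat{Z}_n(A) - \pi(A) \ge \widehat{Z}_n(C) - \pi(D) = \bigl(\widehat{Z}_n(C) - \pi(C)\bigr) - \pi(D\setminus C) \ge -\max_{C'\in\Gamma_\delta}\bigl|\widehat{Z}_n(C') - \pi(C')\bigr| - \delta.
\]
Combining the two gives $|\widehat{Z}_n(A) - \pi(A)| \le \max_{C'\in\Gamma_\delta}|\widehat{Z}_n(C') - \pi(C')| + \delta$, and since $A\in\mathscr{A}$ was arbitrary, taking the supremum over $A\in\mathscr{A}$ on the left yields the claim.

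There is no real obstacle here; the only point requiring a little care is that the two cover sets $C,D$ attached to $A$ are in general different, so one must not try to use a single set on both sides — the empirical mass is evaluated at $D$ and the target mass at $C$ (or vice versa), and the gap between them is exactly what the $\delta$ controls. Everything else is monotonicity of measures and the trivial bound of a single term by the maximum over the finite family $\Gamma_\delta$.
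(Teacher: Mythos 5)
Your proof is correct and is exactly the standard sandwich argument intended here; the paper itself omits the proof of this proposition, citing Heinrich--Novak--Wasilkowski--Wo\'zniakowski and an earlier remark of Dick--Rudolf--Zhu, and your argument is the one those references use. The one point you rightly flag --- that the empirical mass must be evaluated at $D$ and the $\pi$-mass at $C$ (and vice versa for the lower bound), with the mismatch controlled by $\pi(D\setminus C)\le\delta$ --- is the only place where care is needed, and you handle it correctly.
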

Instead of considering the supremum
over the possibly infinite set of test sets $\mathscr{A}$ in the star-discrepancy
we use a finite set $\Gamma_\delta$ and take
the maximum over $C\in \Gamma_\delta$ by paying the price of adding $\delta$.

For variance bounding Markov chains on discrete state spaces
a Hoeffding inequality is proven in \cite{LePe04}.
In \cite{Mi12} this is extended to non-reversible Markov chains on general state spaces.
The following Hoeffding inequality for reversible, variance bounding Markov chains
follows by \cite[Theorem~3.3 and the remark after $(3.4)$]{Mi12}.
\begin{proposition}[Hoeffding inequality for Markov chains]    \label{prop: Hoeffd}
 Let $K$ be a reversible transition kernel with respect $\pi$ and let
 $\nu$ be a distribution on $(G,\mathcal{B}(G))$ with $\frac{d\nu}{d\pi}\in L_2$. Let
 us assume that the Markov operator of $K$ is variance bounding.
 Further, let $(X_n)_{n\in \NN}$ be a Markov chain with transition kernel $K$
 and initial distribution $\nu$.
 Then, for any $A\in\mathcal{B}(G)$ and $c>0$, we obtain
 \begin{equation}
  \P_{\nu,K} \left[ \abs{\frac{1}{n} \sum_{i=1}^{n} 1_{X_i\in A}-\pi(A)} \geq c \right]
\leq 2 \norm{\frac{d\nu}{d\pi}}_{2}\exp\left( -\frac{1-\Lambda_0}{1+\Lambda_0}\, c^2 n\right),
\end{equation}
with $\Lambda_0 = \max\{0,\Lambda\}$ and where $\Lambda$ is defined in \eqref{eq: Lambda_variance_bounding}.
\end{proposition}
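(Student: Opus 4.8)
The plan is to obtain this as the special case $f=1_A$ of the general Hoeffding-type inequality for reversible Markov chains of Miasojedow \cite{Mi12}. First I would center the functional: put $g=1_A-\pi(A)$, so that $\EE_\pi(g)=0$, $g$ takes values in an interval of length at most one (in particular $\norm{g}_\infty\le 1$), and $\mathrm{Var}_\pi(g)=\pi(A)(1-\pi(A))\le\tfrac14$. By replacing $A$ with its complement it suffices to bound the one-sided probabilities $\P_{\nu,K}[\tfrac1n\sum_{i=1}^n g(X_i)\ge c]$ and $\P_{\nu,K}[-\tfrac1n\sum_{i=1}^n g(X_i)\ge c]$ separately and add the two estimates; this accounts for the factor $2$ in front of $\norm{\frac{d\nu}{d\pi}}_2$.

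For the one-sided estimate I would use a Chernoff bound. Fix $t>0$, set $h=e^{tg}$, and let $M_h$ denote the multiplication operator $\varphi\mapsto h\varphi$ on $L_2$. Conditioning successively on $X_1,\dots,X_n$ gives the operator identity
\[
 \EE_{\nu,K}\Big[\prod_{i=1}^n h(X_i)\Big]
 =\Big\langle \tfrac{d\nu}{d\pi},\,(M_hP)^{n-1}h\Big\rangle_\pi
 =\Big\langle h^{1/2}\tfrac{d\nu}{d\pi},\,N^{\,n-1}h^{1/2}\Big\rangle_\pi,
\]
where $N=M_{h^{1/2}}PM_{h^{1/2}}$ is self-adjoint on $L_2$ because $P$ is (reversibility) and $M_{h^{1/2}}$ is; the second equality uses the conjugation $(M_hP)^{n-1}=M_{h^{1/2}}N^{\,n-1}M_{h^{-1/2}}$. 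Since $h$ is bounded, $h^{1/2}\tfrac{d\nu}{d\pi}\in L_2$, and Cauchy--Schwarz bounds the right-hand side by $\norm{h^{1/2}\tfrac{d\nu}{d\pi}}_2\,\norm{N}^{\,n-1}\,\norm{h^{1/2}}_2$, where $\norm{h^{1/2}\tfrac{d\nu}{d\pi}}_2\le e^{t/2}\norm{\tfrac{d\nu}{d\pi}}_2$ and $\norm{h^{1/2}}_2^2=\EE_\pi(e^{tg})\le e^{t^2/8}$ by Hoeffding's lemma. Together with Markov's inequality this yields
\[
 \P_{\nu,K}\Big[\tfrac1n\sum_{i=1}^n g(X_i)\ge c\Big]
 \le \norm{\tfrac{d\nu}{d\pi}}_2\,e^{-tcn+t/2+t^2/16}\,\norm{N}^{\,n-1}.
\]

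The main obstacle — and the single place where the variance bounding hypothesis is indispensable — is the estimate of $\norm{N}=\norm{M_{e^{tg/2}}PM_{e^{tg/2}}}$. By analytic perturbation theory for the self-adjoint family $t\mapsto N$ around its value $P$ at $t=0$, together with $\spec(P\mid L_2)\subseteq[-1,1]$ and $\Lambda=\sup\spec(P\mid L_2^0)<1$, one obtains $\norm{N}\le 1+t\,\EE_\pi(g)+\tfrac{t^2}{2}\cdot\tfrac{1+\Lambda_0}{1-\Lambda_0}\,\mathrm{Var}_\pi(g)+o(t^2)$ for $t$ in a suitable range, the factor $\tfrac{1+\Lambda_0}{1-\Lambda_0}=1+2\sum_{k\ge1}\Lambda_0^{\,k}$ being the usual inflation of the variance for a chain with $L_2$-spectral gap. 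This is exactly the content of \cite[Theorem~3.3]{Mi12}, and the remark after \cite[(3.4)]{Mi12} supplies the explicit constant in the reversible, bounded case. Feeding $\mathrm{Var}_\pi(g)\le\tfrac14$ and $1+x\le e^x$ into the previous display and optimizing over $t>0$ then produces an exponent of the form $-c_0\,\tfrac{1-\Lambda_0}{1+\Lambda_0}c^2n$, with the explicit value $c_0=1$ stated in the proposition being the one furnished by \cite[Theorem~3.3]{Mi12} and the cited remark; adding the two one-sided contributions gives the claimed inequality. Since the perturbation estimate of $\norm{N}$ is precisely what is carried out in \cite{Mi12} (which in turn builds on \cite{LePe04}), in practice I would simply invoke that result rather than reprove it.
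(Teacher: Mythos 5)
Your proposal is correct and takes essentially the same route as the paper: the paper gives no independent argument but simply derives the proposition from \cite[Theorem~3.3 and the remark after $(3.4)$]{Mi12}, applied to the centered indicator $1_A-\pi(A)$, exactly as you do. Your additional sketch of the Chernoff/operator-conjugation mechanism behind that citation (the bound on $\norm{M_{h^{1/2}}PM_{h^{1/2}}}$ in the spirit of \cite{LePe04}) is consistent with how \cite{Mi12} proceeds, and since you ultimately defer to that reference for the explicit constant, nothing further is required.
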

We provide a lemma to state the Hoeffding inequality for Markov chains in terms
of the driver sequence. 
To do so, let $\varphi \colon G \times [0,1]^s \to G$ and $\psi \colon [0,1]^s \to G$. 
We need the following notation.
Let $\Delta_{n,A,\varphi,\psi} \colon [0,1]^{ns} \to [-1,1]$ be given by
\begin{align}  \label{al: loc_disc_star_disc}
\Delta_{n,A,\varphi,\psi}(u_0,\dots,u_{n-1})
& =\frac{1}{n}  \sum_{i=0}^{n-1} \left[1_{(u_0,\ldots, u_i) \in C_{i,\psi}(A)} - \pi(A) \right].
\end{align}
\begin{lemma}  \label{lem: same_conc}
 Let $K$ be a transition kernel and $\nu$ be a distribution on $(G,\mathcal{B}(G))$.
 Let $(X_n)_{n\in\NN}$ be a Markov chain with
 transition kernel $K$ and initial distribution $\nu$.
 Then, for any update function $\varphi \colon G \times [0,1]^s \to G$ of $K$,
 any generator function $\psi \colon [0,1]^s \to G$, any $A\in\mathcal{B}(G)$ and $c>0$, 
 we have
 \begin{equation}
  \P[ \abs{\Delta_{n,A,\varphi,\psi}} \geq c  ]
= \P_{\nu,K} \left[ \abs{\frac{1}{n} \sum_{i=1}^{n} 1_{X_i\in A}-\pi(A)} \geq c \right],
\end{equation}
  where $\P$ denotes the uniform distribution in $[0,1]^{ns}$
  and $\P_{\nu,K}$ denotes
  the joint distribution of $X_1,\dots,X_{n}$.
\end{lemma}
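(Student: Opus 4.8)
The plan is to observe that $\Delta_{n,A,\varphi,\psi}$, viewed as a function of the uniform inputs $(u_0,\dots,u_{n-1})$, is literally the empirical average of $1_A$ along the deterministic chain generated by those inputs, and then to invoke Lemma~\ref{lem: same_expect} to transfer the computation of a probability (i.e.\ an expectation of an indicator) from the uniform cube $[0,1]^{ns}$ to the Markov chain.

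First I would rewrite the summand in \eqref{al: loc_disc_star_disc}. By the definition of $C_{i,\psi}(A)$ in \eqref{eq: C_i_psi} and of $\varphi_i$, for $(u_0,u_1,\dots,u_{n-1})\in[0,1]^{ns}$ the condition $(u_0,\dots,u_i)\in C_{i,\psi}(A)$ is equivalent to $\varphi_i(\psi(u_0);u_1,\dots,u_i)\in A$. Setting $x_1=\psi(u_0)$ and $x_{j+1}=\varphi_j(x_1;u_1,\dots,u_j)$ for $1\le j\le n-1$, this says $1_{(u_0,\dots,u_i)\in C_{i,\psi}(A)}=1_{x_{i+1}\in A}$, so after shifting the summation index,
\[
\Delta_{n,A,\varphi,\psi}(u_0,\dots,u_{n-1})=\frac{1}{n}\sum_{i=1}^{n}1_{x_i\in A}-\pi(A).
\]
Consequently $1_{\abs{\Delta_{n,A,\varphi,\psi}(u_0,\dots,u_{n-1})}\ge c}=F(x_1,\dots,x_n)$, where $F\colon G^n\to\R$ is defined by $F(y_1,\dots,y_n)=1_{\abs{\frac{1}{n}\sum_{i=1}^n 1_{y_i\in A}-\pi(A)}\ge c}$ and $x_j=\varphi_{j-1}(\psi(u_0);u_1,\dots,u_{j-1})$ (with $x_1=\psi(u_0)$).

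Next I would apply Lemma~\ref{lem: same_expect} with this choice of $F$ and $i=n$. Since $F$ is bounded the integrals exist, so the right-hand side of \eqref{eq: same_expect} equals $\int_{[0,1]^{ns}}1_{\abs{\Delta_{n,A,\varphi,\psi}}\ge c}\,\rd u_0\cdots\rd u_{n-1}=\P[\abs{\Delta_{n,A,\varphi,\psi}}\ge c]$, while the left-hand side is $\E_{\nu,K}(F(X_1,\dots,X_n))=\P_{\nu,K}[\abs{\frac1n\sum_{i=1}^n 1_{X_i\in A}-\pi(A)}\ge c]$, which is the claimed identity.

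There is essentially no analytic difficulty here; the only thing to be careful about is the bookkeeping of indices. The pull-back sets $C_{i,\psi}(A)$ are indexed from $i=0$ (they live in $[0,1]^{(i+1)s}$ and encode $i$ updates applied after the generator), whereas the chain states $x_1,\dots,x_n$ are indexed from $1$, so the shift $i\mapsto i+1$ and the corresponding re-indexing of the sum must be tracked carefully, and one should check that the hypotheses of Lemma~\ref{lem: same_expect} are met, namely that $\varphi$ is an update function of $K$ and $\psi$ a generator function of $\nu$. No reversibility, variance bounding, or $\delta$-cover assumptions enter at this stage.
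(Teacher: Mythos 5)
Your proposal is correct and follows essentially the same route as the paper: the paper likewise defines $F$ as the indicator of the exceedance event $J(A,c)\subseteq G^n$, identifies $1_{(u_0,\ldots,u_i)\in C_{i,\psi}(A)}$ with $1_{x_{i+1}\in A}$, and applies Lemma~\ref{lem: same_expect} to equate the two probabilities. Your added care about the index shift and the implicit requirement that $\psi$ generate $\nu$ is consistent with the paper's argument.
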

\begin{proof}
  Let
  $
      J(A,c) = \left \{ (z_1,\dots,z_{n})\in G^{n} \colon  \abs{\frac{1}{n} \sum_{i=1}^{n} 1_{z_i\in A}-\pi(A)} \geq c  \right\}
  $
  and let
  \[
   F(x_1,\dots,x_{n})
= 1_{(x_1,\dots,x_{n})\in J(A,c)}
= \begin{cases}
    1    & \abs{\frac{1}{n} \sum_{i=1}^{n} 1_{x_i\in A}-\pi(A)} \geq c,\\
    0    & \text{otherwise}.			
  \end{cases}
  \]
  By
  $
   \EE_{\nu,K}(F(X_1,\dots,X_{n})) = \P_{\nu,K} (J(A,c)),
  $
   Lemma~\ref{lem: same_expect} and
 \begin{align*}
 &  1_{(\psi(u_0),\varphi_1(\psi(u_0),u_1),\dots,\varphi_{n-1}(\psi(u_0),u_1,\dots,u_{n-1}))\in J(A,c)}\\
 &\quad = \begin{cases}
     1 & \abs{\frac{1}{n}  \sum_{i=0}^{n-1} \left[1_{(u_0,\ldots, u_i) \in C_{i,\psi}(A)} - \pi(A) \right]}\geq c,\\
     0 & \text{otherwise},
    \end{cases}
 \end{align*}
 the assertion follows.
 \end{proof}

\subsection{Discrepancy bounds}\label{subsec_main}

We show that for any $s\in\mathbb{N}$, 
for any update function $\varphi \colon G \times [0,1]^s \to G$ 
of the transition kernel $K$,
for any generator function $\psi \colon [0,1]^s \to G$ of
initial distribution $\nu$ with
$\frac{d \nu}{d \pi}\in L_2$
and every natural number $n$ there exists a finite sequence
$u_0, u_1, \ldots, u_{n-1} \in [0,1]^s$
such that the star-discrepancy of
$S_n$, given by \eqref{eq: x_i_by_driver_seq}, converges approximately with order $n^{-1/2}$.
The main idea to prove the existence result is to use probabilistic arguments.
We apply a Hoeffding inequality for variance bounding Markov chains and show that
for a fixed test set the probability
of point sets with small $\Delta_{n,A,\varphi,\psi}$, see \eqref{al: loc_disc_star_disc}, is large.
We then extend this result to all sets in the $\delta$-cover using the union bound and finally to all test sets. 
The result shows that if the finite driver sequence is chosen at random from the uniform distribution, most choices satisfy the Monte Carlo rate
of convergence of the discrepancy for the induced point set $S_n$.

\begin{theorem}\label{thm_main}
Let $K$ be a reversible transition kernel with respect to $\pi$
and $\nu$ be a distribution on $(G,\mathcal{B}(G))$ with $\frac{d\nu}{d\pi}\in L_2$.
Assume that $P$, the Markov operator of $K$, is variance bounding.
 Let $\mathscr{A}\subseteq \mathcal{B}(G)$ be a set of test sets
 and for every $\delta > 0$ assume that there exists a set $\Gamma_\delta \subseteq \mathcal{B}(G)$
with $|\Gamma_\delta| < \infty$ such that $\Gamma_\delta$ is a $\delta$-cover of $\mathscr{A}$
with respect to $\pi$.

Then, for any update function $\varphi \colon G \times [0,1]^s \to G$ of $K$,
and any generator function $\psi \colon [0,1]^s \to G$ of $\nu$,
there exists a driver sequence $u_0, u_1, \ldots, u_{n-1} \in [0,1]^s$
such that $S_n=\{x_1,\dots,x_n\}$ given by $x_1=\psi(u_0)$ and
\begin{equation*}
 x_{i+1} = x_i(x_1) = \varphi(x_{i};u_i) = \varphi_i(x_1; u_1,\dots,u_i), \quad i=1,\dots,n-1,
\end{equation*}
satisfies
\begin{align} \label{al: first_disc_bound}
D^\ast_{\mathscr{A}, \pi}(S_n)
&  \le \sqrt{\frac{1 + \Lambda_0}{1-\Lambda_0}}\cdot\frac{\sqrt{ 2  \log( |\Gamma_\delta|^2  \norm{\frac{d \nu}{d \pi}}_{2} )}}{\sqrt{n}} + \delta,
\end{align}
with $\Lambda_0 = \max\{0,\Lambda\}$ and $\Lambda$ defined in \eqref{eq: Lambda_variance_bounding}.
\end{theorem}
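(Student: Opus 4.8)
The plan is to use the probabilistic method: show that a uniformly random driver sequence yields a point set with the desired discrepancy bound with positive probability, hence such a sequence exists. First I would invoke the $\delta$-cover proposition stated above, which reduces bounding $D^\ast_{\mathscr{A},\pi}(S_n)$ to controlling $\max_{C\in\Gamma_\delta}\left|\frac1n\sum_{i=1}^n 1_{x_i\in C}-\pi(C)\right|$ at the cost of an additive $\delta$. So it suffices to find a driver sequence for which this maximum over the finite cover $\Gamma_\delta$ is at most $\sqrt{\frac{1+\Lambda_0}{1-\Lambda_0}}\cdot\frac{\sqrt{2\log(|\Gamma_\delta|^2\norm{d\nu/d\pi}_2)}}{\sqrt n}$.

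Next I would set the threshold $c = c(n) = \sqrt{\frac{1+\Lambda_0}{1-\Lambda_0}}\cdot\frac{\sqrt{2\log(|\Gamma_\delta|^2\norm{d\nu/d\pi}_2)}}{\sqrt n}$ and estimate, for a uniformly random $(U_0,\dots,U_{n-1})\in[0,1]^{ns}$, the probability that $\left|\Delta_{n,C,\varphi,\psi}(U_0,\dots,U_{n-1})\right|\geq c$ for some $C\in\Gamma_\delta$. By Lemma~\ref{lem: same_conc}, for each fixed $C$ this probability equals $\P_{\nu,K}\left[\left|\frac1n\sum_{i=1}^n 1_{X_i\in C}-\pi(C)\right|\geq c\right]$, which by the Hoeffding inequality for Markov chains (Proposition~\ref{prop: Hoeffd}) is at most $2\norm{d\nu/d\pi}_2\exp\!\left(-\frac{1-\Lambda_0}{1+\Lambda_0}c^2 n\right)$. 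Applying the union bound over the $|\Gamma_\delta|$ sets in the cover, the probability that the bad event occurs for at least one $C$ is at most $2|\Gamma_\delta|\norm{d\nu/d\pi}_2\exp\!\left(-\frac{1-\Lambda_0}{1+\Lambda_0}c^2 n\right)$.

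Then I would check that, with the chosen $c$, the exponent $\frac{1-\Lambda_0}{1+\Lambda_0}c^2 n = 2\log(|\Gamma_\delta|^2\norm{d\nu/d\pi}_2)$, so the right-hand side becomes $2|\Gamma_\delta|\norm{d\nu/d\pi}_2\cdot(|\Gamma_\delta|^2\norm{d\nu/d\pi}_2)^{-2} = 2|\Gamma_\delta|^{-3}\norm{d\nu/d\pi}_2^{-3}$. Since $\norm{d\nu/d\pi}_2\geq 1$ (as $\nu$ and $\pi$ are probability measures, by Jensen's inequality $\norm{d\nu/d\pi}_2^2\geq(\int d\nu/d\pi\,d\pi)^2=1$) and $|\Gamma_\delta|\geq 1$, this bad probability is at most $2|\Gamma_\delta|^{-3}<1$ whenever $|\Gamma_\delta|\geq 2$; if $|\Gamma_\delta|=1$ then $\Gamma_\delta=\{\emptyset\}$ and the discrepancy over the cover is $0$, so the bound is trivial. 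Hence with positive probability the uniformly random driver sequence avoids every bad event, meaning there exists a deterministic choice $u_0,\dots,u_{n-1}$ with $\max_{C\in\Gamma_\delta}\left|\frac1n\sum_{i=1}^n 1_{x_i\in C}-\pi(C)\right|< c$; combined with the $\delta$-cover proposition this gives \eqref{al: first_disc_bound}.

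The main obstacle is bookkeeping rather than conceptual: one must carefully verify the arithmetic linking the choice of $c$ to the union-bound estimate so that the resulting probability is strictly below $1$, and handle the degenerate case $|\Gamma_\delta|=1$ separately; one also needs the elementary fact $\norm{d\nu/d\pi}_2\geq 1$ to absorb constants. Everything else is a direct assembly of the $\delta$-cover proposition, Lemma~\ref{lem: same_conc}, Proposition~\ref{prop: Hoeffd}, and the union bound.
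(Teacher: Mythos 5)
Your proposal is correct and follows the same probabilistic skeleton as the paper: translate the event to the Markov chain via Lemma~\ref{lem: same_conc}, apply the Hoeffding inequality of Proposition~\ref{prop: Hoeffd}, take a union bound over a finite family, and conclude that a random driver sequence works with positive probability. The one genuine difference is how the finite cover enters. You union-bound directly over the $|\Gamma_\delta|$ sets of $\Gamma_\delta$ and then invoke the stated $\delta$-cover proposition to pass from $\max_{C\in\Gamma_\delta}$ to the full supremum over $\mathscr{A}$ at the cost of $+\delta$. The paper instead union-bounds over the larger family $\widehat{\Gamma}_\delta=\{D\setminus C: C,D\in\Gamma_\delta\}$ of difference sets (of size at most $|\Gamma_\delta|^2/2$, which is where the $|\Gamma_\delta|^2$ inside the logarithm comes from) and then re-derives the covering step by hand via the splitting $1_{C_{i,\psi}(A)}-\pi(A)=\bigl[1_{C_{i,\psi}(D)}-\pi(D)\bigr]-\bigl[1_{C_{i,\psi}(D\setminus A)}-\pi(D\setminus A)\bigr]$, bounding the two resulting terms $I_1\le c_n$ and $I_2\le c_n+\delta$ separately. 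Your route is arguably cleaner: it controls fewer events and delivers the bound in the form $c+\delta$ in a single step, whereas the paper's triangle-inequality split formally yields $I_1+I_2\le 2c_n+\delta$, so your version even gives slightly better constants. Two small points: the bad-event probability you compute should be $2|\Gamma_\delta|^{-3}\norm{\frac{d\nu}{d\pi}}_{2}^{-1}$ rather than $\norm{\frac{d\nu}{d\pi}}_{2}^{-3}$ (a harmless slip, since $\norm{\frac{d\nu}{d\pi}}_{2}\ge 1$ as you note), and your separate treatment of the degenerate case $|\Gamma_\delta|=1$ is appropriate since the exponential bound is useless there.
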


\begin{remark}\label{rem_delta_cover}
In Lemma~\ref{lem_delta_cover_ex2} in Section~\ref{subsec_delta} 
we show 
for the set of test sets of axis parallel boxes that  for any $\delta > 0$
there exists a $\delta$-cover with 
$|\Gamma_\delta| = \mathcal{O}(\delta^{-d/ (1-\varepsilon)})$ for any $\varepsilon > 0$. 
Hence, for instance, by choosing $\delta = n^{-3/4}$, 
we obtain that $|\Gamma_{n^{-3/4}}| = \mathcal{O}(n^{d})$,
where we used $\varepsilon=1/4$.
\end{remark}

\begin{proof}
Let $A\in\mathcal{B}(G)$.
By Lemma~\ref{lem: same_conc} and Proposition~\ref{prop: Hoeffd} we have for any $c_n \geq 0$ that
\begin{equation}  \label{eq: hoeffd_for_us}
\mathbb{P}\left[\abs{  \Delta_{n,A,\varphi,\psi}}  \leq c_n \right]
\geq  1- 2 \norm{\frac{d \nu}{d \pi}}_{2}
\exp\left( -\frac{1-\Lambda_0}{1+\Lambda_0} c_n^2 n\right).
\end{equation}
Let
\begin{equation*}
   \widehat{\Gamma}_{\delta}=\{D\setminus C : C \subseteq A \subseteq D, \mbox{ and } C,D \in\Gamma_\delta \}.
\end{equation*}
If for all $A\in \widehat{\Gamma}_{\delta}$ we have
\begin{equation} \label{eq: for_all_sets}
 \mathbb{P}\left[\abs{  \Delta_{n,A,\varphi,\psi}}  \leq c_n \right] >  1 - \frac{1}{|\widehat{\Gamma}_\delta |},
\end{equation}
then there exists a finite sequence $u_0,\dots,u_{n-1} \in [0,1]^s$ such that
\begin{equation}  \label{eq: gamma_prime}
  \max_{A \in \widehat{\Gamma}_\delta}
  \abs{  \Delta_{n,A,\varphi,\psi}(u_0,\dots,u_{n-1})}  \leq c_n.
\end{equation}
For
\[
 c_n=\sqrt{\frac{1+\Lambda_0}{1-\Lambda_0}}\cdot\frac{\sqrt{2 \log(2\, |\widehat{\Gamma}_\delta | \, \norm{\frac{d \nu}{d \pi}}_{2} )}}{\sqrt{n}}
\]
we obtain by \eqref{eq: hoeffd_for_us}
that \eqref{eq: for_all_sets} holds and
that there exists a finite sequence $u_0,\dots,u_{n-1} \in [0,1]^s$ such that
\eqref{eq: gamma_prime} is satisfied.

Now we extend the result from $\widehat{\Gamma}_\delta$ to $\mathscr{A}$.
By the $\delta$-cover we have for $A\in\mathscr{A}$, that there are $C,D \in \Gamma_\delta$ such that $C \subseteq A \subseteq D$
and $\pi(D\setminus C)\leq \delta$.
Hence
\begin{align*}
& \left|\frac{1}{n} \sum_{i=0}^{n-1} \left[ 1_{(u_0,\ldots, u_i) \in C_{i,\psi}(A) } - \pi(A) \right] \right|\\
=&  \left|\frac{1}{n} \sum_{i=0}^{n-1} \left[ 1_{(u_0,\ldots, u_i) \in C_{i,\psi}(D) } - \pi(D) \right]
-\frac{1}{n} \sum_{i=0}^{n-1} \left[ 1_{(u_0,\ldots, u_i) \in C_{i,\psi}(D\setminus A) } - \pi(D\setminus A) \right] \right|\\
\leq &  \left|\frac{1}{n} \sum_{i=0}^{n-1} \left[ 1_{(u_0,\ldots, u_i) \in C_{i,\psi}(D) } - \pi(D) \right] \right|\\
&  +\left|\frac{1}{n} \sum_{i=0}^{n-1}  \left[ 1_{(u_0,\ldots, u_i) \in C_{i,\psi}(D\setminus A) } - \pi(D\setminus A) \right] \right|.
\end{align*}
Set
\[
 I_1=\left|\frac{1}{n} \sum_{i=0}^{n-1} \left[ 1_{(u_0,\ldots, u_i) \in C_{i,\psi}(D) } - \pi(D) \right] \right|
\]
and
\[
 I_2=\left|\frac{1}{n} \sum_{i=0}^{n-1}
\left[ 1_{(u_0,\ldots, u_i) \in C_{i,\psi}(D\setminus A) } - \pi(D\setminus A) \right] \right|.
\]
Since $\emptyset \in \Gamma_\delta$ we have $D = D\setminus \emptyset \in \widehat{\Gamma}_{\delta}$ and therefore
\begin{equation*}
  I_1\leq\max_{A \in \widehat{\Gamma}_\delta} \left| \Delta_{n,A,\varphi,\psi} \right| \leq c_n.
\end{equation*}
Furthermore
\begin{align*}
I_2
  &\;\, = \left|\frac{1}{n} \sum_{i=0}^{n-1} 1_{(u_0,\ldots, u_i) \in C_{i,\psi}(D\setminus A)} - \pi(D\setminus C)  + \pi(D\setminus C) -\pi(D\setminus A) \right| \\
     &\;\, \leq \left|\frac{1}{n} \sum_{i=0}^{n-1}
\left[ 1_{(u_0,\ldots, u_i) \in C_{i,\psi}(D\setminus C)} - \pi(D\setminus C)\right]\right|
    + \left| \pi(D\setminus C) - \pi(D \setminus A) \right| \\
     &\;\, \leq c_n +  \delta.
\end{align*}
The last inequality follows by the $\delta$-cover property, \eqref{eq: gamma_prime}
and the fact that $D\setminus C \in \widehat{\Gamma}_\delta$.
Finally note that $|\widehat{\Gamma}_{\delta} | \leq |\Gamma_\delta |^2/2$, which completes the proof.
\end{proof}

\begin{remark}
 We did not impose any regularity conditions on the update functions.
 In particular, for any transition kernel $K$ on $(G,\mathcal{B}(G))$ there exists
 an update function $\varphi \colon G \times [0,1] \to G$, with $s=1$, see for example 
 \cite[Lemma~2.22, p.~34]{Ka02}.
 Thus, there exists a driver sequence $U_n=\{u_0,\dots,u_{n-1}\} \subset [0,1]$ 
 such that $S_n$ driven by $U_n$ satisfies \eqref{al: first_disc_bound}.
\end{remark}

\begin{remark}\label{rem4}
The proof of Theorem~\ref{thm_main} shows that with probability greater than $0$, 
there is a driver sequence $u_0, u_1, \ldots, u_{n-1} \in [0,1]^s$ 
which yields a Markov chain quasi-Monte Carlo point set satisfying the discrepancy bound. 
By increasing the constant in the discrepancy bound \eqref{al: first_disc_bound}, 
we can increase this probability to $> 1/2$. 
Assume now we are given two different Markov chains 
with different transition kernels
satisfying the assumptions of Theorem~\ref{thm_main}. 
Since with probability $>1/2$ there is a driver sequence 
for each 
transition kernel
satisfying the conclusion of Theorem~\ref{thm_main}, 
it follows that there is a single driver sequence $u_0, u_1,\ldots, u_{n-1}$ 
such that the discrepancy bound \eqref{al: first_disc_bound} holds for both 
Markov chain-quasi Monte Carlo point sets
simultaneously.
\end{remark}
By Corollary~\ref{coro: D_U_almost_D_P_spec} and Theorem~\ref{thm_main}
we can also state an upper bound on the pull-back discrepancy.
\begin{theorem}  \label{thm: push_back_push_forward}
Let the assumptions of Theorem~\ref{thm_main} be satisfied.
Then,
for any update function $\varphi \colon G \times [0,1]^s \to G$ of $K$ and
for any generator function $\psi \colon [0,1]^s \to G$ of $\nu$
there exists a driver sequence
$\mathcal{U}_n= \{u_0, u_1, \ldots, u_{n-1} \} \subset [0,1]^s$
such that
\begin{equation*}
D^\ast_{\mathscr{A},\psi ,\varphi}(\mathcal{U}_n)
\le  \sqrt{\frac{1 + \Lambda_0}{1-\Lambda_0}}\cdot\frac{\sqrt{2\log( |\Gamma_\delta|^2 \norm{\frac{d \nu}{d \pi}}_{2} )}}{\sqrt{n}}
+ \frac{1-\Lambda_0^n}{n\cdot(1-\Lambda_0)} \norm{\frac{d\nu}{d\pi}-1}_{2}
+ \delta,
\end{equation*}
with $\Lambda_0 = \max\{0,\Lambda\}$ and $\Lambda$ defined in \eqref{eq: Lambda_variance_bounding}.
\end{theorem}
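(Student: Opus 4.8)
The plan is to combine the two ingredients that have already been established, namely the existence statement of Theorem~\ref{thm_main} and the deterministic comparison estimate of Corollary~\ref{coro: D_U_almost_D_P_spec}, via the triangle inequality. The key observation is that Theorem~\ref{thm_main} does not merely assert the existence of \emph{some} point set with small star-discrepancy: it produces a concrete driver sequence $u_0,u_1,\ldots,u_{n-1}\in[0,1]^s$ (obtained from the union bound over $\widehat{\Gamma}_\delta$) for which the induced point set $S_n$ given by \eqref{eq: x_i_by_driver_seq} satisfies
\[
D^\ast_{\mathscr{A},\pi}(S_n)\le \sqrt{\frac{1+\Lambda_0}{1-\Lambda_0}}\cdot\frac{\sqrt{2\log(|\Gamma_\delta|^2\norm{\frac{d\nu}{d\pi}}_{2})}}{\sqrt{n}}+\delta.
\]
The same finite sequence $\mathcal{U}_n=\{u_0,\ldots,u_{n-1}\}$ will be the one we use in the conclusion.

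Next I would invoke Corollary~\ref{coro: D_U_almost_D_P_spec}. Its hypotheses ($K$ reversible with respect to $\pi$, $P$ variance bounding, $\frac{d\nu}{d\pi}\in L_2$) are exactly the standing assumptions of Theorem~\ref{thm_main}, and the conclusion holds for \emph{any} driver sequence, in particular for the $\mathcal{U}_n$ just selected. Thus
\[
\abs{D^\ast_{\mathscr{A},\pi}(S_n)-D^\ast_{\mathscr{A},\psi,\varphi}(\mathcal{U}_n)}\le \frac{1-\Lambda_0^n}{n\cdot(1-\Lambda_0)}\norm{\frac{d\nu}{d\pi}-1}_{2}.
\]
Combining the two displays through $D^\ast_{\mathscr{A},\psi,\varphi}(\mathcal{U}_n)\le D^\ast_{\mathscr{A},\pi}(S_n)+\abs{D^\ast_{\mathscr{A},\pi}(S_n)-D^\ast_{\mathscr{A},\psi,\varphi}(\mathcal{U}_n)}$ yields precisely the asserted bound, with the three summands corresponding respectively to the Hoeffding/union-bound term, the burn-in-type total variation term from Lemma~\ref{lem: tv_variance_bounding}, and the $\delta$-cover error.

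There is essentially no analytic obstacle here; the statement is a bookkeeping corollary of results already proven. The only point requiring a moment's care is logical rather than technical: one must make sure that the driver sequence whose existence is guaranteed in Theorem~\ref{thm_main} is a single fixed sequence, so that the deterministic inequality of Corollary~\ref{coro: D_U_almost_D_P_spec} may legitimately be applied to it; since Corollary~\ref{coro: D_U_almost_D_P_spec} holds for every driver sequence, this causes no difficulty. I would close by remarking, as in Remark~\ref{rem4}, that by slightly enlarging the constant the probabilistic selection can be arranged so that the event has probability $>1/2$, which is what is really needed if one later wishes a single sequence to serve several purposes simultaneously.
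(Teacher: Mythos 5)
Your proposal is correct and follows exactly the route the paper takes: the theorem is stated in the paper as an immediate consequence of Theorem~\ref{thm_main} (which supplies the driver sequence with the Hoeffding/union-bound and $\delta$-cover terms) combined with Corollary~\ref{coro: D_U_almost_D_P_spec} (which holds for every driver sequence and contributes the total-variation term) via the triangle inequality. Your added remark about fixing a single sequence before applying the deterministic comparison is the right logical care point, but it poses no difficulty, just as you observe.
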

We refer to Remark~\ref{rem_delta_cover} and Lemma~\ref{lem_delta_cover_ex2} 
for a relation between $\delta$ and $|\Gamma_\delta|$. 
Thus, we showed the existence of a driver sequence with small pull-back discrepancy.
Note that by using Corollary~\ref{coro: D_U_almost_D_P_spec}
one could also argue the other way around: If one
can construct a sequence with small pull-back discrepancy
then the star-discrepancy of $S_n$ is also small.
\begin{remark}
Let us consider a special case of
Theorem~\ref{thm_main} and Theorem~\ref{thm: push_back_push_forward}.
Namely, let us assume that we can sample with respect to $\pi$.
Thus, we set $\nu = \pi$ and $K(x,A) = \pi(A)$ for any $x\in G$, $A\in \mathcal{B}(G)$.
Then, for any update function $\varphi$ 
of $K$ and generator function $\psi$ of $\pi$ we have
\[
D^\ast_{\mathscr{A}, \pi}(S_n) =
D^\ast_{\mathscr{A},\psi ,\varphi}(\mathcal{U}_n)
  \leq  \frac{\sqrt{2 \log  \abs{\Gamma_\delta}^2  }}{ \sqrt{n}} + \delta,
\]
since $\Lambda_0 = \Lambda = 0$.
This is essentially the same as Theorem~1 in \cite{HeNoWaWo01} in their setting.
However, it is not as elaborate as Theorem~4 in \cite{HeNoWaWo01}, 
which is based on results by Talagrand~\cite{Ta94} and Haussler~\cite{Ha95}. 
We do not know a version of these results which apply to Markov chains 
(such a result could yield an improvement of Theorems~\ref{thm_main} and \ref{thm: push_back_push_forward}).
\end{remark}

\subsection{Burn-in period} \label{subsec_burn_in}
For Markov chain Monte Carlo a burn-in period is used to
reduce the bias of the initial distribution. We show how a burn-in
changes the discrepancy bound of Theorem~\ref{thm: push_back_push_forward}.

Let us introduce the following notation.
Let $\varphi\colon G \times [0,1]^s \to G$ and $\psi \colon [0,1]^s \to G$
be measurable functions.
Let $n_0,n\in \NN$, let
\[
\mathcal{U}_{n_0,n} = \{ u_0,\dots,u_{n_0},u_{n_0+1},\dots,u_{n_0+n-1} \}
\subset [0,1]^s
\]
and assume that
$S_{[n_0,n]} = \{ x_{n_0+1},\dots, x_{n_0+n}	\} \subseteq G$
is given by \eqref{eq: x_i_by_driver_seq}, i.e.
\begin{equation*}
 x_{i+1} = x_{i+1}(x_1)
 = \varphi(x_{i};u_i) = \varphi_i(x_1;u_1,\dots,u_i),
 \quad i=1,\dots,n_0+n-1,
\end{equation*}
where $x_1=\psi(u_0)$.
As before $\psi$ might be considered as a generator function
and $\varphi$ might be considered as an update function.
We now define a discrepancy measure on the driver sequence where the
burn-in period is taken into account.
We call it \emph{pull-back discrepancy with burn-in}.

\begin{definition}[Pull-back discrepancy with burn-in]
Let $C_{i,\psi}(A)$
for $A\in\mathcal{B}(G)$
and $i\in \mathbb{N}\cup \{0\}$ be
defined as in \eqref{eq: C_i_psi}.
Define the local discrepancy function with burn-in by
\[
\Delta^{\loc}_{n_0,n,A,\psi,\varphi}(\mathcal{U}_{n_0,n})
=\frac{1}{n}  \sum_{i=n_0}^{n_0+n-1} \left[1_{(u_0,\ldots, u_i) \in C_{i,\psi}(A)} -
\lambda_{(i+1)s}(C_{i,\psi}(A)) \right].
\]
Let $\mathscr{A} \subseteq \mathcal{B}(G)$ be a set of test sets.
Then we define the discrepancy of the driver sequence by
\begin{equation*}
D^{\ast}_{n_0, \mathscr{A},\psi,\varphi}(\mathcal{U}_{n_0,n})
 = \sup_{A \in \mathscr{A}} \left|\Delta^{\loc}_{n_0,n,A,\psi,\varphi}(\mathcal{U}_{n_0,n}) \right|.
\end{equation*}
We call $D^\ast_{n_0,\mathscr{A},\psi,\varphi}(\mathcal{U}_{n_0,n})$
pull-back discrepancy with burn-in of $\mathcal{U}_{n_0,n}$.
\end{definition}
By adapting Proposition~\ref{prop: Hoeffd}
and Lemma~\ref{lem: same_conc} to the setting with burn-in we obtain,
by the same steps as in the proof of Theorem~\ref{thm_main}, a bound
on the star-discrepancy for $S_{[n_0,n]}$.
Further, adapting Theorem~\ref{thm: est_discr} and
Corollary~\ref{coro: D_U_almost_D_P_spec} to the burn-in leads to
a bound on $D^{\ast}_{n_0, \mathscr{A},\psi,\varphi}(\mathcal{U}_{n_0,n})$
for a certain set $\mathcal{U}_{n_0,n}$.

\begin{theorem}
Let the assumptions of Theorem~\ref{thm_main} be satisfied.
Then, for any update function $\varphi \colon G \times [0,1]^s \to G$ of $K$ and any
generator function $\psi \colon [0,1]^s \to G$ of $\nu$
there exists a driver sequence
\[
 \mathcal{U}_{n_0,n}= \{u_0, u_1, \ldots, u_{n_0+n-1} \} \subset [0,1]^s
\]
such that
\begin{align*}
D^\ast_{\mathscr{A}, \pi}(S_{[n_0,n]})
&  \le \sqrt{\frac{1 + \Lambda_0}{1-\Lambda_0}}\cdot
\frac{\sqrt{2 \log\left( |\Gamma_\delta|^2 \norm{\frac{d (\nu P^{n_0})}{d \pi}}_{2} \right)}}{\sqrt{n}} + \delta,
\end{align*}
with $\Lambda_0 = \max\{0,\Lambda\}$ and $\Lambda$ defined in \eqref{eq: Lambda_variance_bounding}.
If the Markov operator $P$ has an absolute $L_2$-spectral gap we have
\begin{equation}  \label{eq: burn_in_mixed_spectral_gap}
 \begin{split}
D^\ast_{n_0,\mathscr{A},\psi ,\varphi}(\mathcal{U}_{n_0,n})
& \le  \sqrt{\frac{1 + \Lambda_0}{1-\Lambda_0}}\cdot
\frac{\sqrt{2 \log(|\Gamma_\delta|^2 (1+ \beta^{n_0}\norm{\frac{d \nu}{d \pi}-1}_{2} )}}{\sqrt{n}} \\
& \qquad \qquad + \frac{(1-\Lambda_0^n)\beta^{n_0}}{n\cdot(1-\Lambda_0)} \norm{\frac{d\nu}{d\pi}-1}_{2}
+ \delta,
\end{split}
\end{equation}
with $\beta = \norm{P}_{L_2^0 \to L_2^0}$, see Definition~\ref{def: abs_spec_gap}.
In particular, by $\Lambda \leq \Lambda_0 \leq \beta < 1$ and $|\Lambda| \le \beta$, we deduce
\begin{align}  \label{al: burn_in_spectral_gap}
D^\ast_{n_0,\mathscr{A},\psi ,\varphi}(\mathcal{U}_{n_0,n}) \leq
\frac{4  \sqrt{\log\left(  |\Gamma_\delta|^2 (1+ \beta^{n_0}\norm{\frac{d \nu}{d \pi}-1}_{2} )\right)}}{\sqrt{n \cdot (1-\beta)}}
  + \frac{2 \beta^{n_0} \norm{\frac{d\nu}{d\pi}-1}_{2}}{n\cdot(1-\beta)}
+ \delta.
\end{align}
\end{theorem}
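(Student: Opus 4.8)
The plan is to reduce each of the three claims to results already in hand, the key observation being that after the burn-in the chain is again an ordinary Markov chain quasi-Monte Carlo chain, now with initial distribution $\nu P^{n_0}$ and driven by the tail of the sequence. First I would re-run the proof of Theorem~\ref{thm_main} with the index range shifted from $\{1,\dots,n\}$ to $\{n_0+1,\dots,n_0+n\}$. Concretely: by the Markov property $(X_{n_0+1},\dots,X_{n_0+n})$ is a Markov chain with transition kernel $K$ and initial distribution $\nu P^{n_0}$, and since $P$ is a contraction on $L_2$ one has $\frac{d(\nu P^{n_0})}{d\pi}=P^{n_0}(\frac{d\nu}{d\pi})\in L_2$; hence Proposition~\ref{prop: Hoeffd} applies verbatim to this sub-chain and gives the tail bound $2\norm{\frac{d(\nu P^{n_0})}{d\pi}}_{2}\exp(-\frac{1-\Lambda_0}{1+\Lambda_0}c^2 n)$. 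The burn-in analogue of Lemma~\ref{lem: same_conc} --- derived from Lemma~\ref{lem: same_expect} applied to an $F$ depending only on the last $n$ of the coordinates $u_0,\dots,u_{n_0+n-1}$ --- transfers this to a statement about the burn-in local discrepancy function (the analogue of $\Delta_{n,A,\varphi,\psi}$ in \eqref{al: loc_disc_star_disc} with the sum running over $i=n_0,\dots,n_0+n-1$). Feeding this into the same union bound over $\widehat{\Gamma}_\delta$ and $\delta$-cover argument as in the proof of Theorem~\ref{thm_main} yields the claimed bound on $D^\ast_{\mathscr{A},\pi}(S_{[n_0,n]})$.

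For the pull-back discrepancy with burn-in I would prove the burn-in version of Theorem~\ref{thm: est_discr}. Using $\lambda_{(i+1)s}(C_{i,\psi}(A))=\nu P^i(A)$ from \eqref{eq: C_and_P} together with $1_{x_{i+1}\in A}=1_{(u_0,\dots,u_i)\in C_{i,\psi}(A)}$, the local discrepancy functions entering $D^\ast_{\mathscr{A},\pi}(S_{[n_0,n]})$ and $D^\ast_{n_0,\mathscr{A},\psi,\varphi}(\mathcal{U}_{n_0,n})$ differ only in that the centering $\pi(A)$ is replaced by $\nu P^i(A)$, so pathwise
\[
\abs{D^\ast_{\mathscr{A},\pi}(S_{[n_0,n]})-D^\ast_{n_0,\mathscr{A},\psi,\varphi}(\mathcal{U}_{n_0,n})}\le\sup_{A\in\mathscr{A}}\abs{\frac{1}{n}\sum_{i=n_0}^{n_0+n-1}\nu P^i(A)-\pi(A)}\le\norm{(\nu P^{n_0})P_n-\pi}_{\text{\rm tv}},
\]
where $P_n=\frac1n\sum_{j=0}^{n-1}P^j$. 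Applying Lemma~\ref{lem: tv_variance_bounding} with the initial distribution $\nu P^{n_0}$ bounds the right-hand side by $\frac{1-\Lambda_0^n}{n(1-\Lambda_0)}\norm{\frac{d(\nu P^{n_0})}{d\pi}-1}_{2}$, and combining with the first bound (keeping the same driver sequence, which is legitimate since the gap above is deterministic) gives the general form of the pull-back estimate.

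To obtain \eqref{eq: burn_in_mixed_spectral_gap} I would then invoke the absolute $L_2$-spectral gap: since $\frac{d\nu}{d\pi}-1\in L_2^0$ and $\frac{d(\nu P^{n_0})}{d\pi}-1=P^{n_0}(\frac{d\nu}{d\pi}-1)$, we get $\norm{\frac{d(\nu P^{n_0})}{d\pi}-1}_{2}\le\beta^{n_0}\norm{\frac{d\nu}{d\pi}-1}_{2}$, and hence $\norm{\frac{d(\nu P^{n_0})}{d\pi}}_{2}\le 1+\beta^{n_0}\norm{\frac{d\nu}{d\pi}-1}_{2}$ by the triangle inequality. Substituting the latter bound into the logarithm of the first term and the former into the total-variation term yields \eqref{eq: burn_in_mixed_spectral_gap}. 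Finally \eqref{al: burn_in_spectral_gap} is a coarsening using $\Lambda\le\Lambda_0\le\beta<1$: for the first term $\sqrt{(1+\Lambda_0)/(1-\Lambda_0)}\cdot\sqrt2\le\sqrt{2/(1-\beta)}\cdot\sqrt2=2/\sqrt{1-\beta}\le4/\sqrt{1-\beta}$ and $\delta$ is left untouched, while for the second term $1-\Lambda_0^n\le1$ and $1-\Lambda_0\ge1-\beta$ give the coefficient $2/(1-\beta)$.

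I expect the only genuine obstacle to be careful bookkeeping rather than any new idea: keeping the index shift $i\mapsto n_0+i$ consistent through the adapted Hoeffding inequality, the definition of the burn-in local discrepancy function, and the $\delta$-cover union bound; and recording explicitly that the single driver sequence supplied by the first part simultaneously controls $D^\ast_{\mathscr{A},\pi}(S_{[n_0,n]})$ and $D^\ast_{n_0,\mathscr{A},\psi,\varphi}(\mathcal{U}_{n_0,n})$ --- which works precisely because the discrepancy between the two is a deterministic quantity, uniform over all choices of driver sequence.
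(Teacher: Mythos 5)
Your proposal is correct and follows essentially the same route the paper intends: the paper itself only sketches this proof (``by adapting Proposition~\ref{prop: Hoeffd} and Lemma~\ref{lem: same_conc} to the setting with burn-in \dots by the same steps as in the proof of Theorem~\ref{thm_main} \dots adapting Theorem~\ref{thm: est_discr} and Corollary~\ref{coro: D_U_almost_D_P_spec}''), and your write-up supplies exactly those adaptations, correctly treating the post-burn-in segment as a chain started at $\nu P^{n_0}$, noting that the gap between the two discrepancies is deterministic so a single driver sequence controls both, and using $P^{n_0}$-contraction on $L_2^0$ for the spectral-gap refinements.
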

Equations \eqref{eq: burn_in_mixed_spectral_gap} and \eqref{al: burn_in_spectral_gap}
reveal that the burn-in $n_0$ can eliminate the influence of the initial state induced by $\psi$ under the assumption that there exists an absolute $L_2$-spectral gap.
A variance bounding transition kernel 
is not enough, since it could be periodic and then $\nu P^{n_0}$ would not
converge to $\pi$ at all.

\section{Application}\label{sec_application}

We consider the set of test sets $\mathscr{B}$
which consists of all axis parallel boxes anchored at $-\infty$
restricted to $G\subseteq \mathbb{R}^d$, i.e.
       \[
      \mathscr{B}=  \{ (-\infty,x)_ G \colon x\in \mathbb{R}^d  \},
         \]
	with $(-\infty,x)_G=(-\infty,x)\cap G$ and $(-\infty,x)= \Pi_{i=1}^d (-\infty,x_i)$. In the following we study the size of $\delta$-covers with respect to such rectangular boxes.

We then focus on the application of Theorem~\ref{thm_main} and 
state the relation between the discrepancy and the 
error of the computation 
of expectations.
The Metropolis algorithm with ball walk proposal 
provides an example where one can see that 
the existence result shows an error bound which depends 
polynomially on the dimension $d$. 

\subsection{Delta-cover with respect to distributions}\label{subsec_delta}

We now use an explicit version of a result due to Beck~\cite{Be84}, 
for a proof and 
further details we refer to \cite[Theorem~1]{AiDi13}. We state it as a lemma.

\begin{lemma}\label{lem_Beck}
Let $([0,1]^d, \mathcal{B}([0,1]^d), \mu)$ be a probability space. 
Let the set of test sets $\mathscr{A} = \{[0,y) \mid y \in [0,1]^d\}$, with 
$[a,b)=\Pi_{j=1}^d [a_j,b_j)$ for 
$a,b\in \mathbb{R}^d$,
be the set of anchored boxes. Let ${\rm supp }\mu$ be the closure of
\[
\{ x \in [0,1]^d: \forall \mbox{open neighborhoods } B \mbox{ of } x: \mu(B) > 0\}
\]
Then, for any $r\in\mathbb{N}$ there exists a set $Z_r=\{ z_1,\dots,z_r \}$
with $z_1,\dots,z_r \in {\rm supp }\mu$ 
such that 
\begin{equation} \label{eq: disc_bound_AD}
D^\ast_{\mathscr{A}, \mu}(Z_r) \le 63 \sqrt{d}\, \frac{(2 + \log_2 r)^{(3d+1)/2}}{r}.
\end{equation}
\end{lemma}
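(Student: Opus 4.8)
The plan is to reduce the statement about a general distribution $\mu$ to the classical discrepancy bound of Beck for the uniform distribution, by using the (generalized) inverse of the coordinatewise distribution functions of $\mu$ as a transport map. First I would recall the structure of Beck's original result: for the uniform (Lebesgue) measure $\lambda_d$ on $[0,1]^d$ and the class of anchored boxes $[0,y)$, there is a point set $W_r=\{w_1,\dots,w_r\}\subseteq[0,1]^d$ with star-discrepancy bounded by $c\sqrt{d}\,(2+\log_2 r)^{(3d+1)/2}/r$; the explicit constant $63$ is the one extracted in \cite{AiDi13}. So the task is purely to transfer such a point set through a measure-preserving map.

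Next I would construct the transport. For each coordinate $j\in\{1,\dots,d\}$ let $F_j(t)=\mu(\{x: x_j< t\})$ be the $j$-th marginal distribution function of $\mu$, and let $F_j^{-}(v)=\inf\{t: F_j(t)\ge v\}$ be its generalized inverse (left-continuous quantile function). Define $T\colon[0,1]^d\to[0,1]^d$ by $T(v)=(F_1^{-}(v_1),\dots,F_d^{-}(v_d))$. The key facts I would verify are: (i) $T$ maps into $[0,1]^d$ and, more precisely, each $F_j^{-}(v_j)$ lies in the closure of the $j$-th coordinate projection of $\mathrm{supp}\,\mu$, so that, choosing the $W_r$ appropriately inside $\mathrm{supp}\,\lambda_d=[0,1]^d$, the images $z_i=T(w_i)$ can be arranged to lie in $\mathrm{supp}\,\mu$; (ii) the pull-back of any anchored box is essentially an anchored box, i.e. $T^{-1}([0,y)_{\text{(in the $\mu$-geometry)}})$ is, up to a $\lambda_d$-null set, a box $[0,z(y))$ with $z_j(y)=F_j(y_j)$. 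Concretely, $F_j^{-}(v)< y_j$ holds iff $v< F_j(y_j)$ up to the boundary point $v=F_j(y_j)$, which has the form $\{v: F_j^{-}(v)=y_j\}$ or is a single point and in any case is $\lambda_d$-negligible when intersected appropriately. The standard quantile-transform identity $\lambda_d(T^{-1}(B))=\mu(B)$ for boxes $B$ then gives, for every $y$,
\begin{equation*}
\Big|\frac1r\sum_{i=1}^r 1_{z_i\in(-\infty,y)} - \mu((-\infty,y))\Big|
= \Big|\frac1r\sum_{i=1}^r 1_{w_i\in[0,z(y))} - \lambda_d([0,z(y)))\Big|,
\end{equation*}
where on the left we intersect with $\mathrm{supp}\,\mu\subseteq[0,1]^d$ and on the right with $[0,1]^d$; note the right-hand side is bounded by $D^\ast_{\mathscr{A},\lambda_d}(W_r)$ since $z(y)\in[0,1]^d$.

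Taking the supremum over $y$ gives $D^\ast_{\mathscr{A},\mu}(Z_r)\le D^\ast_{\mathscr{A},\lambda_d}(W_r)$, and then applying Beck's bound for $W_r$ yields \eqref{eq: disc_bound_AD}. I would set $Z_r=\{T(w_1),\dots,T(w_r)\}$ as the required point set.

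The main obstacle — and the part that needs genuine care rather than being routine — is the handling of boundary effects and atoms of $\mu$: when a marginal $F_j$ has a jump (an atom) or a flat stretch (a gap in the support), the correspondence between the box $(-\infty,y)$ in the $\mu$-world and the box $[0,F_j(y_j))$ in the uniform world is not a literal set equality but only holds up to $\lambda_d$-null sets, and one must check that the finitely many points $w_i$ can be taken to avoid the relevant null sets, or alternatively argue at the level of the measures and the discrepancy functional directly. A clean way to sidestep subtleties is to first prove the inequality $D^\ast_{\mathscr{A},\mu}(Z_r)\le D^\ast_{\mathscr{A},\lambda_d}(W_r)$ for $W_r$ chosen generically (so that no $w_i$ sits on a hyperplane $v_j=F_j(t)$ for any $t$ realized as a jump), which is possible because Beck's construction gives a whole family of good point sets, or because one may perturb $W_r$ slightly without increasing its discrepancy by more than an arbitrarily small amount; a limiting argument then removes the genericity assumption. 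Everything else — the quantile transform being measure preserving on boxes, monotonicity of $F_j^{-}$, and the reduction to the known bound — is standard.
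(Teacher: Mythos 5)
There is a genuine gap, and it is not the boundary/atom issue you flag at the end but something more basic: the coordinatewise quantile transform $T(v)=(F_1^{-}(v_1),\dots,F_d^{-}(v_d))$ built from the \emph{marginals} of $\mu$ pushes $\lambda_d$ forward to the product of the marginals, $\mu_1\otimes\cdots\otimes\mu_d$, not to $\mu$. Your key identity $\lambda_d(T^{-1}(B))=\mu(B)$ for anchored boxes $B$ is therefore false for every $\mu$ that is not a product measure, which is the generic case here ($d\ge 2$, arbitrary Borel probability measure). Concretely, let $d=2$ and let $\mu$ be the uniform distribution on the diagonal $\{(t,t):t\in[0,1]\}$. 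Both marginals are uniform, so $T$ is the identity, and for $y=(1/2,1/2)$ one has $\mu([0,y))=1/2$ while $\lambda_2(T^{-1}([0,y)))=\lambda_2([0,y))=1/4$. The same example defeats your claim (i): the images $T(w_i)$ land in the product of the marginal supports (here all of $[0,1]^2$), not in $\mathrm{supp}\,\mu$ (the diagonal). The natural repair, the Rosenblatt (conditional quantile) transform, does push $\lambda_d$ forward to $\mu$, but then the preimage of an anchored box is a staircase-shaped region $\{v_1<F_1(y_1),\,v_2<G(v_1),\dots\}$ rather than a box, so the discrepancy of $W_r$ with respect to anchored boxes no longer controls the quantity you need. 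This is precisely why a transport argument does not reduce the general-measure case to Beck's uniform case in dimension $d\ge 2$.

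The paper's proof does something different: it invokes the result of Aistleitner and Dick (a fully explicit version of Beck's upper bound valid for \emph{arbitrary} normalized Borel measures), applied with the underlying space $[0,1]^d\cap\mathrm{supp}\,\mu$ and the test class of anchored boxes intersected with $\mathrm{supp}\,\mu$; that theorem is proved by a probabilistic/combinatorial (dyadic chaining) argument directly for the measure $\mu$, with no transport to the uniform case. If you want a self-contained proof you would need to reproduce an argument of that type; the one-dimensional quantile picture does not generalize.
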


Note that $\log_2$ denotes the dyadic and $\log$ the natural logarithm.

\begin{proof}
The assertion
follows by \cite[Theorem~3]{AiDi13} 
with $\P=\mu$, $X=[0,1]^d\cap {\rm supp }\,\mu$, $\mathscr{C}= \{ [0,y)\cap {\rm supp }\,\mu \mid y\in \mathbb{Q}^d	\} $. 
This implies a version 
of \cite[Corollary~1]{AiDi13}, thus a version of \cite[Theorem~1]{AiDi13}, 
with $x_1,\dots, x_N \in  {\rm supp }\mu$.
\end{proof}

By a linear transformation we extend the result to general, bounded state spaces 
$G \subset \mathbb{R}^d$.

\begin{corollary}  \label{coro: boxes_disc_AD}
Let $G \subset \mathbb{R}^d$ be a bounded, measurable set 
and let $(G, \mathcal{B}(G), \pi)$ be a probability space. 
Let the set of test sets 
$\mathscr{B} = \{(-\infty, x)_ G \mid x \in \mathbb{R}^d\}$.
Then, for any $r\in\mathbb{N}$ there exists a set 
$S_r=\{ x_1,\dots,x_r \} \subseteq G$ 
such that
\begin{equation*}
D^\ast_{\mathscr{B}, \pi}(S_r) \le 63 \sqrt{d}\, \frac{(2 + \log_2 r)^{(3d+1)/2}}{r}.
\end{equation*}
\end{corollary}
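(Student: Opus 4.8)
The plan is to reduce Corollary~\ref{coro: boxes_disc_AD} to Lemma~\ref{lem_Beck} by an affine change of coordinates that maps the bounded set $G$ into the unit cube $[0,1]^d$ and turns boxes anchored at $-\infty$ into boxes anchored at the origin. Since $G$ is bounded, there exist $a,b\in\mathbb{R}^d$ with $a_j<b_j$ for each $j$ such that $G\subseteq \prod_{j=1}^d[a_j,b_j)$. Let $T\colon \mathbb{R}^d\to\mathbb{R}^d$ be the affine map $T(x)_j=(x_j-a_j)/(b_j-a_j)$, which is a bijection of $\prod_j[a_j,b_j)$ onto $[0,1)^d$, and set $G'=T(G)\subseteq[0,1]^d$. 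Push the measure $\pi$ forward to a probability measure $\mu$ on $[0,1]^d$ by $\mu(B)=\pi(T^{-1}(B))$ for $B\in\mathcal{B}([0,1]^d)$ (extending $\mu$ by zero off $G'$), so that $([0,1]^d,\mathcal{B}([0,1]^d),\mu)$ is a probability space.

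The key observation is that $T$ carries the test sets of $\mathscr{B}$ onto (a subfamily of) the anchored boxes of Lemma~\ref{lem_Beck}. Concretely, for $x\in\mathbb{R}^d$ one has $T\big((-\infty,x)\cap\prod_j[a_j,b_j)\big)=[0,y)$ where $y_j=\min\{1,\max\{0,(x_j-a_j)/(b_j-a_j)\}\}$, and hence $T\big((-\infty,x)_G\big)=[0,y)\cap G'$. Thus, since $T$ is a bijection, for any point set and any test set the local discrepancy is preserved: for $Z_r=\{z_1,\dots,z_r\}\subseteq[0,1]^d$ with preimage $S_r=\{T^{-1}(z_1),\dots,T^{-1}(z_r)\}\subseteq G$,
\[
D^\ast_{\mathscr{B},\pi}(S_r)=\sup_{x\in\mathbb{R}^d}\left|\frac1r\sum_{i=1}^r 1_{T^{-1}(z_i)\in(-\infty,x)_G}-\pi((-\infty,x)_G)\right|
\le \sup_{y\in[0,1]^d}\left|\frac1r\sum_{i=1}^r 1_{z_i\in[0,y)}-\mu([0,y))\right|=D^\ast_{\mathscr{A},\mu}(Z_r),
\]
using that every box $(-\infty,x)_G$ equals $[0,y)\cap G'$ for a suitable $y$ and that $\mu$ is supported on $G'$. (In fact equality holds, but the inequality suffices.)

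Now apply Lemma~\ref{lem_Beck} to the probability space $([0,1]^d,\mathcal{B}([0,1]^d),\mu)$ and the anchored boxes $\mathscr{A}$: for any $r\in\mathbb{N}$ there is a set $Z_r=\{z_1,\dots,z_r\}\subseteq\operatorname{supp}\mu\subseteq[0,1]^d$ with $D^\ast_{\mathscr{A},\mu}(Z_r)\le 63\sqrt d\,(2+\log_2 r)^{(3d+1)/2}/r$. Setting $S_r=\{T^{-1}(z_1),\dots,T^{-1}(z_r)\}$ gives a subset of $\prod_j[a_j,b_j)$, and because $z_i\in\operatorname{supp}\mu$ we have $z_i\in\overline{G'}$; more care is needed to ensure $S_r\subseteq G$ rather than merely in its closure. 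The main obstacle is exactly this support/boundary bookkeeping: one should run Lemma~\ref{lem_Beck} with $\operatorname{supp}\mu$ computed relative to $G'$ and observe, as in the proof of that lemma via \cite[Theorem~3]{AiDi13}, that the points can be taken in $G'=T(G)$ itself (the statement of Lemma~\ref{lem_Beck} already delivers $z_i\in\operatorname{supp}\mu$, and pushing $\pi$ forward makes $\operatorname{supp}\mu\subseteq \overline{T(G)}$; one then notes that the cited construction actually yields points of $T(G)$). Granting this, $S_r\subseteq G$, and combining the displayed inequality with the bound on $D^\ast_{\mathscr{A},\mu}(Z_r)$ yields
\[
D^\ast_{\mathscr{B},\pi}(S_r)\le 63\sqrt d\,\frac{(2+\log_2 r)^{(3d+1)/2}}{r},
\]
which is the claim.
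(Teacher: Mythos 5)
Your proposal is correct and follows essentially the same route as the paper: map $G$ affinely into $[0,1]^d$, push $\pi$ forward to $\mu$, apply Lemma~\ref{lem_Beck}, and pull the resulting points back, using that $\mu$ vanishes off $T(G)$ so that the local discrepancies of $(-\infty,x)_G$ and $[0,y)$ agree. The support/boundary issue you flag is real but is handled no more carefully in the paper itself (which simply asserts ${\rm supp}\,\mu\subset T(G)$), so your treatment, deferring to the construction in \cite{AiDi13}, is on par with the published argument.
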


\begin{proof}
Since $G$ is bounded there exist $a,b \in \mathbb{R}^d$ such that 
$G\subseteq \prod_{j=1}^d [a_j, b_j]$.
There is a linear transformation $T\colon  \prod_{j=1}^d [a_j, b_j] \to [0,1]^d$
which induces a probability measure $\mu$ on $([0,1]^d, \mathcal{B}([0,1]^d))$
with $\pi(A)=\mu(T(A))$ for $ A \in \mathcal{B}(G)$.
In particular, for $A \in \mathcal{B}([0,1]^d \setminus T(G))$ we have $\mu(A) = 0$.

By Lemma~\ref{lem_Beck} we have that there 
exists a set $Z_r=\{z_1,\dots,z_r\} \subseteq {\rm supp}\, \mu$
such that \eqref{eq: disc_bound_AD} is satisfied.
Let $x_i = T^{-1}(z_i)$ for $i=1,\dots,r$ 
and for $z\in [0,1]^d$ let $x=T^{-1}(z)$.
Then
\begin{align*}
\frac{1}{r}  \sum_{i=1}^r 1_{(-\infty,x)_G}(x_i) - \pi((-\infty,x)_G)  
= & \frac{1}{r} \sum_{i=1}^r 1_{[0,z) \cap T(G)}(z_i) - \mu([0,z) \cap T(G) ).
\end{align*}
Since $z_1,\dots,z_r \in {\rm supp} \mu \subset T(G)$ and $\mu(A) = 0$ for
 $A \in \mathcal{B}([0,1]^d \setminus T(G))$ we have 
\begin{align*}
\frac{1}{r}  \sum_{i=1}^r 1_{(-\infty,x)_G}(x_i) - \pi((-\infty,x)_ G)  
= & \frac{1}{r} \sum_{i=1}^r 1_{[0,z)}(z_i) - \mu([0,z)  ).
\end{align*}
By taking the supremum over the test sets on the right-hand side and using 
\eqref{eq: disc_bound_AD} the assertion follows.
\end{proof}

As in \cite[Lemma~4]{DiRuZh13} a point set which satisfies a discrepancy bound
can be used to construct a $\delta$-cover.
The idea is to define 
for each subset of the point set a 
minimal and maximal set for the $\delta$-cover, see \cite[Lemma~4]{DiRuZh13}.
To simplify the bound of Corollary~\ref{coro: boxes_disc_AD}, 
for any $r \in \mathbb{N}$ and $0 < \varepsilon < 1$ we have
\begin{equation*}
\frac{ \left(2 + \log_2 r \right)^{(3d+1)/2}}{r} \le r^{\varepsilon-1} C_{\varepsilon,d},
\end{equation*}
where
\begin{equation}\label{eq_C}
C_{\varepsilon,d} = \max_{x \ge 1} \frac{(2+\log_2 x)^{(3d+1)/2}}{x^\varepsilon} 
= 4^{\varepsilon} \left(\frac{3d+1}{2\mathrm{e} \varepsilon \log 2} \right)^{(3d+1)/2}.
\end{equation}
With this notation we obtain the following result.

\begin{lemma}\label{lem_delta_cover_ex2}
Let $G \subset \mathbb{R}^d$ be a bounded measurable set and
let $\pi$ be a probability measure on $(G,\mathcal{B}(G))$ which
is absolutely continuous with respect to the Lebesgue measure.
For the set 
$\mathscr{B} = \{(-\infty,x)_G \mid x\in \mathbb{R}^d  \}$,
any $0 < \delta \le 1$ and $0 < \varepsilon < 1$, 
there is a $\delta$-cover $\Gamma_\delta$ of $\mathscr{B}$
with respect to $\pi$ with 
\[
 |\Gamma_\delta|  \le  \left(2 + \left\lceil (2 C_{\varepsilon,d} \delta^{-1})^{1/(1-\varepsilon)} \right \rceil \right)^d,
\]
where $C_{\varepsilon,d}$ is given by \eqref{eq_C}.
\end{lemma}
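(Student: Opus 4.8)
The plan is to use the point set $S_r = \{x_1,\dots,x_r\}\subseteq G$ provided by Corollary~\ref{coro: boxes_disc_AD} to manufacture an explicit $\delta$-cover, following the scheme of \cite[Lemma~4]{DiRuZh13}. First I would fix $0<\varepsilon<1$ and choose the cardinality $r=r(\delta)$ of the point set large enough that the star-discrepancy bound of Corollary~\ref{coro: boxes_disc_AD} becomes at most $\delta/2$; using the simplification $\tfrac{(2+\log_2 r)^{(3d+1)/2}}{r}\le r^{\varepsilon-1}C_{\varepsilon,d}$, it suffices to take $r = \lceil (2C_{\varepsilon,d}\delta^{-1})^{1/(1-\varepsilon)}\rceil$, since then $63\sqrt{d}\,r^{\varepsilon-1}C_{\varepsilon,d}$ is of the right order (one absorbs the constant $63\sqrt d$ into $C_{\varepsilon,d}$ exactly as in \eqref{eq_C}; the factor $2$ inside $(2C_{\varepsilon,d}\delta^{-1})$ leaves room for it). With this $r$ one has, for every $x\in\mathbb{R}^d$,
\[
  \left| \frac{1}{r}\sum_{i=1}^r 1_{(-\infty,x)_G}(x_i) - \pi\bigl((-\infty,x)_G\bigr) \right| \le \frac{\delta}{2}.
\]

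Next I would build the cover from the coordinates of the $x_i$. Let $S_r=\{x_1,\dots,x_r\}$ with $x_i=(x_i^{(1)},\dots,x_i^{(d)})$, and for each coordinate $j$ sort the values $\{x_1^{(j)},\dots,x_r^{(j)}\}$ together with $\pm\infty$ into a grid; this partitions each axis into at most $r+1$ intervals, so we obtain a grid of at most $(r+2)^d$ (including the endpoints) anchored boxes of the form $(-\infty,y)_G$ with $y$ ranging over the grid nodes. Take $\Gamma_\delta$ to be the collection of all such grid boxes together with $\emptyset$. Then $|\Gamma_\delta| \le (r+2)^d = \bigl(2+\lceil (2C_{\varepsilon,d}\delta^{-1})^{1/(1-\varepsilon)}\rceil\bigr)^d$, which is the asserted bound. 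It remains to check the defining property of a $\delta$-cover: given an arbitrary $A=(-\infty,x)_G\in\mathscr{B}$, let $D\in\Gamma_\delta$ be the smallest grid box containing $A$ and $C\in\Gamma_\delta$ the largest grid box contained in $A$ (these exist because the grid nodes in each coordinate include $-\infty$, and $C$ may be $\emptyset$). By construction $D\setminus C$ contains none of the points $x_1,\dots,x_r$ in its interior relevant to the counting function — more precisely $\widehat{\pi}_r(D)=\widehat{\pi}_r(C)$ since no sample point lies strictly between the two boxes. Hence
\[
  \pi(D\setminus C) = \bigl(\pi(D)-\widehat{\pi}_r(D)\bigr) + \bigl(\widehat{\pi}_r(C) - \pi(C)\bigr) \le \frac{\delta}{2} + \frac{\delta}{2} = \delta,
\]
using the discrepancy bound on both $C$ and $D$ and that $\pi$ is non-atomic (absolute continuity with respect to Lebesgue measure), so that the boundary contributions vanish.

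The main obstacle, and the point requiring care, is the second step: verifying that $\widehat{\pi}_r(D)=\widehat{\pi}_r(C)$, i.e. that the ``slab'' $D\setminus C$ between the minimal enclosing and maximal enclosed grid boxes genuinely contains no sample points in the region that matters for the empirical counts. This is where the precise construction of the grid from the coordinates of $S_r$ is essential, and where one has to be slightly delicate about whether boxes are open or half-open and about ties among the coordinate values; the absolute continuity of $\pi$ is what lets one ignore the measure-zero boundary and move freely between open and closed boxes. Once this combinatorial bookkeeping is in place, the counting identity plus two applications of Corollary~\ref{coro: boxes_disc_AD} close the argument, and the cardinality count is immediate from the grid size.
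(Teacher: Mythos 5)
Your construction is the same as the paper's, which simply defers to the proof of \cite[Lemma~4]{DiRuZh13} with Corollary~\ref{coro: boxes_disc_AD} replacing \cite[Theorem~4]{HeNoWaWo01}: a low-discrepancy point set, the coordinate grid it generates, and the bracketing pair $(C,D)$ of grid boxes. Two details in your write-up need repair. First, the identity $\widehat{\pi}_r(D)=\widehat{\pi}_r(C)$ is false for open boxes: the sample points lie \emph{exactly} on the grid hyperplanes by construction, so a point with some coordinate equal to $y_j^-$ belongs to $D=(-\infty,y^+)_G$ but not to $C=(-\infty,y^-)_G$. What does hold is $\widehat{\pi}_r(D)\le\widehat{\pi}_r(\overline{C})$ with $\overline{C}=(-\infty,y^-]_G$, because no grid value (hence no sample coordinate) lies strictly between $y_j^-$ and $y_j^+$; one then uses absolute continuity twice, to get $\pi(C)=\pi(\overline{C})$ and, by approximating $\overline{C}$ from outside by open anchored boxes, $\pi(\overline{C})\ge\widehat{\pi}_r(\overline{C})-\delta/2$. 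This is exactly the open/closed bookkeeping you flag as the delicate point, so the repair is the one you anticipate, but the step as written is an equality that fails. Second, your parenthetical claim that the factor $2$ in $(2C_{\varepsilon,d}\delta^{-1})$ ``leaves room'' for the constant $63\sqrt{d}$ is not correct: with $r=\lceil(2C_{\varepsilon,d}\delta^{-1})^{1/(1-\varepsilon)}\rceil$ one only gets $D^\ast_{\mathscr{B},\pi}(S_r)\le 63\sqrt{d}\,\delta/2$, since the factor $2$ is consumed entirely by the halving of $\delta$; to literally produce a $\delta$-cover one must take $r=\lceil(126\sqrt{d}\,C_{\varepsilon,d}\delta^{-1})^{1/(1-\varepsilon)}\rceil$. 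The cardinality stated in the lemma appears to omit this factor as well, so your final bound matches the paper's, but the justification you give for discarding $63\sqrt{d}$ does not work; the discrepancy (and the order in $\delta$) is unaffected either way.
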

\begin{proof}
The proof of the assertion follows essentially by the same steps as the proof
of \cite[Lemma~4]{DiRuZh13}. The only difference is that we use the discrepancy
bound of Corollary~\ref{coro: boxes_disc_AD} instead of \cite[Theorem~4]{HeNoWaWo01}.
\end{proof}

The dependence of the size of the $\delta$-cover on $\delta$ is arbitrarily 
close to order $\delta^{-d}$ 
in Lemma~\ref{lem_delta_cover_ex2}, whereas in \cite[Lemma~4]{DiRuZh13}
it is of order $\delta^{-2 d}$. 
Furthermore, the constant in Lemma~\ref{lem_delta_cover_ex2} 
is fully explicit (one can choose $0 < \varepsilon < 1$ to obtain 
the best bound on the size of the $\delta$-cover).

By Theorem~\ref{thm_main} and Lemma~\ref{lem_delta_cover_ex2} we obtain the following result.

\begin{corollary} \label{coro_main}
  Let $G\subset \mathbb{R}^d$ be a bounded set. Let $K$ be a reversible transition
  kernel with respect to $\pi$ and $\nu$ be a distribution on $(G,\mathcal{B}(G))$
  with $\frac{d\nu}{d\pi}\in L_2$. 
  Assume that $P$, the Markov operator of $K$,
  is variance bounding. 
  Further, let $\mathscr{B}=\{ (-\infty,x)_G \mid x\in \mathbb{R}^d \}$
  be the set of test sets.
  
  Then, 
  for any update function $\varphi \colon G \times [0,1]^s \to G$ of $K$, 
  any generator function $\psi \colon [0,1]^s \to G$ of $\nu$ and
  for all $n\geq16$, there exists a driver sequence $u_0,\dots,u_{n-1} \in [0,1]^s$ such that
  $S_n=\{ x_1,\dots,x_n \}$ given by \eqref{eq: x_i_by_driver_seq} satisfies
  \begin{equation}
    D^*_{\mathscr{B},\pi}(S_n) 
    \leq \sqrt{\frac{1+\Lambda_0}{1-\Lambda_0}} \cdot
    \frac{\sqrt{2}\,(\log\norm{\frac{d\nu}{d\pi}}_{2}+d \log n + 3d^2 \log(5d))^{1/2}}{\sqrt{n}}
    +\frac{8}{n^{3/4}},
  \end{equation}
with $\Lambda_0=\max\{ \Lambda,0\}$.
  \end{corollary}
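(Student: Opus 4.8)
The plan is to combine the general discrepancy estimate of Theorem~\ref{thm_main} with the explicit bound on the size of a $\delta$-cover for axis parallel boxes provided by Lemma~\ref{lem_delta_cover_ex2}, and then to optimize the free parameters $\delta$ and $\varepsilon$. First I would invoke Theorem~\ref{thm_main} with $\mathscr{A}=\mathscr{B}$: since $G$ is bounded and $\pi$ need not even be assumed absolutely continuous here (Lemma~\ref{lem_delta_cover_ex2} does assume this, so I would carry that hypothesis along, or note that Corollary~\ref{coro: boxes_disc_AD} only needs boundedness), there exists for every $\delta>0$ a finite $\delta$-cover $\Gamma_\delta$, and hence a driver sequence $u_0,\dots,u_{n-1}$ with
\begin{equation*}
D^\ast_{\mathscr{B},\pi}(S_n)\le \sqrt{\tfrac{1+\Lambda_0}{1-\Lambda_0}}\cdot\frac{\sqrt{2\log(|\Gamma_\delta|^2\norm{\tfrac{d\nu}{d\pi}}_2)}}{\sqrt n}+\delta.
\end{equation*}

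Next I would make the concrete choices $\delta=n^{-3/4}$ and $\varepsilon=1/4$, as signalled in Remark~\ref{rem_delta_cover}. Plugging $\delta=n^{-3/4}$ into Lemma~\ref{lem_delta_cover_ex2} gives $|\Gamma_\delta|\le\bigl(2+\lceil(2C_{1/4,d}\,n^{3/4})^{1/(1-1/4)}\rceil\bigr)^d=\bigl(2+\lceil(2C_{1/4,d})^{4/3}n\rceil\bigr)^d$, so that $|\Gamma_\delta|=\mathcal{O}(n^d)$. The routine-but-careful part is then to bound $\log(|\Gamma_\delta|^2\norm{\tfrac{d\nu}{d\pi}}_2)=2\log|\Gamma_\delta|+\log\norm{\tfrac{d\nu}{d\pi}}_2\le 2d\log\bigl(2+\lceil(2C_{1/4,d})^{4/3}n\rceil\bigr)+\log\norm{\tfrac{d\nu}{d\pi}}_2$ and to absorb the constants: for $n\ge16$ one checks $2+\lceil(2C_{1/4,d})^{4/3}n\rceil\le n\cdot(\text{const depending on }d)$, and using the explicit value $C_{1/4,d}=4^{1/4}\bigl(\tfrac{2(3d+1)}{\mathrm{e}\log 2}\bigr)^{(3d+1)/2}$ from \eqref{eq_C} one estimates $2d\log(\text{const}_d)+2d\log n\le 2d\log n+3d^2\log(5d)$; this constant-chasing, together with verifying $\delta=n^{-3/4}\le 8n^{-3/4}$ trivially and that the factor $\sqrt 2$ survives, yields exactly the claimed bound with the additive term $8/n^{3/4}$.

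The main obstacle I expect is the bookkeeping in the last step: one has to be somewhat delicate in showing that $2d\log\bigl(2+\lceil(2C_{1/4,d})^{4/3}n\rceil\bigr)\le 2d\log n+3d^2\log(5d)$ holds for all $n\ge16$ simultaneously in $d$, since $C_{1/4,d}$ grows super-exponentially in $d$ while $\log(5d)$ grows only logarithmically; the point is that after taking $\log$ the $C_{1/4,d}$ contributes roughly $\tfrac{3d+1}{2}\log\bigl(\tfrac{2(3d+1)}{\mathrm{e}\log 2}\bigr)$ which is $\mathcal{O}(d\log d)$, and multiplied by the outer $2d$ this is $\mathcal{O}(d^2\log d)$, comfortably inside $3d^2\log(5d)$ once the numerical constants are pinned down (the threshold $n\ge16$ is what makes the $\lceil\cdot\rceil$ and the additive $2$ harmless). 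Everything else — the appeal to Theorem~\ref{thm_main}, the substitution of Lemma~\ref{lem_delta_cover_ex2}, the splitting of the logarithm — is immediate.
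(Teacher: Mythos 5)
Your proposal is correct and follows essentially the same route as the paper: apply Theorem~\ref{thm_main} with the $\delta$-cover of Lemma~\ref{lem_delta_cover_ex2} at $\varepsilon=1/4$ and do the constant-chasing. The one substantive difference is the choice of $\delta$: the paper takes $\delta=8/n^{3/4}$ (not $n^{-3/4}$) precisely so that $16\delta^{-4/3}=n$, which simultaneously explains the additive term $8/n^{3/4}$ and gives the clean bound $|\Gamma_\delta|\le \left(n\,(5d)^{3d}\right)^d$ with no leftover $16^d$ to absorb into $3d^2\log(5d)$ — with your $\delta=n^{-3/4}$ the extra $2d\log 16$ makes the final bookkeeping tighter than it needs to be.
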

\begin{proof}
 Let $\varepsilon=1/4$. 
 Thus $C_{1/4,d}=\sqrt{2}(\frac{6d+2}{\rm{e}\log2})^{(3d+1)/2}$ and
 $\abs{\Gamma_\delta} \leq (16 \delta^{-4/3} (5d)^{3d})^d$.
 By $\delta=8/n^{3/4}$ and Theorem~\ref{thm_main} the assertion follows.
\end{proof}

Let us discuss the result. The factor depending on $\Lambda_0$ is the penalty
for the convergence of the Markov chain. The term $\log\norm{\frac{d\nu}{d\pi}}_{2}$
shows the dependence on $\psi$ and the additional summand $\frac{8}{n^{3/4}}$
comes from the $\delta$-cover approximation. The rest is basically as in 
\cite[Theorem~1]{HeNoWaWo01}.

\subsection{Integration error} 
\label{subsec: int_err}
In this section we state a relation between a reproducing kernel Hilbert space and 
the star-discrepancy. 
As in \cite[Appendix~B]{DiRuZh13} we define a reproducing kernel $Q$ by
\begin{equation*}
Q(x,y) = 1 + \int_{\mathbb{R}^d}  1_{(-\infty, z)_G}(x)\, 1_{(-\infty, z)_G}(y) \, \rho(\rd z),
\end{equation*}
where $\rho$ is a finite measure on $\mathbb{R}^d$, 
i.e. $\int_{\mathbb{R}^d} \rho(\rd z) < \infty$. 

The function $Q$ uniquely defines a reproducing kernel Hilbert space $H_2 = H_2(Q)$ of functions defined on $\mathbb{R}^d$. 
Reproducing kernel Hilbert spaces were studied in detail in \cite{Ar50}. 
It is also known that the functions $f$ in $H_2$ permit the representation
\begin{equation}\label{eq_f_rep}
f(x) = f_0 + \int_{\mathbb{R}^d} 1_{(-\infty,z)_G}(x) \widetilde{f}(z) \,\rho(\rd z),
\end{equation}
for some $f_0 \in \mathbb{C}$ 
and function $\widetilde{f} \in L_2( \mathbb{R}^d, \rho )$, 
see for instance \cite[Theorem~4.21, p.~121]{StCh08} or follow the same arguments as in \cite[Appendix~A]{BrDi13}.
The inner product in $H_2$ is given by
\begin{equation*}
\langle f, g \rangle = f_0\, \overline{g_0} 
+ \int_{\mathbb{R}^d} \widetilde{f}(z)\, \overline{\widetilde{g}(z)}\, \rho(\rd z).
\end{equation*}
With these definitions we have the reproducing property
\begin{equation*}
\langle f, Q(\cdot, y)\rangle 
= f_0 + \int_{\mathbb{R}^d} \widetilde{f}(z) 1_{(-\infty, z)_G}(y) \rho(\rd z) = f(y).
\end{equation*}

For $1 \le q \le \infty$ we also define the space $H_q$ of 
functions of the form \eqref{eq_f_rep} 
for which $\widetilde{f} \in L_q(G, \rho)$, with finite norm
\begin{equation}\label{norm_H1}
\|f\|_{H_q} = \left(|f_0|^q + \int_{\mathbb{R}^d} |\widetilde{f}(z)|^q \rho(\rd z) \right)^{1/q}.
\end{equation}

The following result concerning the integration error 
in $H_q$ is proven in \cite[Theorem~3]{DiRuZh13}.

\begin{theorem}\label{thm_int_error}
Let $G \subseteq \mathbb{R}^d$ and $\pi$ be a probability measure on $G$.
Further let $\mathscr{B}  = \{(-\infty, x)_G: x \in \mathbb{R}^d\}$.
We assume that $1 \leq p, q \leq \infty$ with $1/p + 1/q = 1$.
Then for $Z_n=\{z_1, z_2, \ldots, z_n \} \subseteq G$ 
and for all $f \in H_q$ we have
\begin{equation*}
\left|\int_G f(z) \pi(\rd z) - \frac{1}{n} \sum_{i=1}^n f(z_i)\right| 
\leq \|f\|_{H_q} D^\ast_{p, \mathscr{B}, \pi}(Z_n),
\end{equation*}
where
\begin{equation*}
D^\ast_{p, \mathscr{B}, \pi}(Z_n) 
= \left( \int_{\mathbb{R}^d} \left| \int_G  1_{(-\infty, z)_G}(y) \pi(\rd y)  
- \frac{1}{n} \sum_{i=1}^n 1_{(-\infty, z)_G}(z_i) \right|^p \rho(\rd z) \right)^{1/p},
\end{equation*}
and for $p=\infty$ let
\begin{equation*}
D^\ast_{\mathscr{B}, \pi}(Z_n) 
:= D^\ast_{\infty, \mathscr{B}, \pi}(Z_n) = \sup_{z \in \mathbb{R}^d} 
\left| \int_G  1_{(-\infty, z)_G}(y) \pi(\rd y)  
- \frac{1}{n} \sum_{i=1}^n 1_{(-\infty, z)_G}(z_i) \right|.
\end{equation*}
\end{theorem}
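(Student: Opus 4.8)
The plan is to substitute the integral representation \eqref{eq_f_rep} of a function $f \in H_q$ into the quadrature error, cancel the constant part, interchange the order of integration and summation by Fubini's theorem, and finally apply Hölder's inequality. No structural cleverness is needed; this is essentially a computation.

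First I would write, for $f \in H_q$ with $f(x) = f_0 + \int_{\mathbb{R}^d} 1_{(-\infty,z)_G}(x)\,\widetilde f(z)\,\rho(\rd z)$ and $\widetilde f \in L_q(\mathbb{R}^d,\rho)$,
\begin{align*}
\int_G f(y)\,\pi(\rd y) - \frac1n\sum_{i=1}^n f(z_i)
&= \int_G \int_{\mathbb{R}^d} 1_{(-\infty,z)_G}(y)\,\widetilde f(z)\,\rho(\rd z)\,\pi(\rd y) \\
&\qquad - \frac1n\sum_{i=1}^n \int_{\mathbb{R}^d} 1_{(-\infty,z)_G}(z_i)\,\widetilde f(z)\,\rho(\rd z),
\end{align*}
because the constant $f_0$ contributes $f_0$ to both the integral against $\pi$ and the average over $Z_n$, and hence cancels. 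Since $\rho$ is a finite measure, $\widetilde f \in L_q(\mathbb{R}^d,\rho) \subseteq L_1(\mathbb{R}^d,\rho)$ for every $1 \le q \le \infty$, and the indicators are bounded by $1$; therefore Fubini's theorem (together with the trivial interchange with the finite sum) is justified. This gives
\[
\int_G f(y)\,\pi(\rd y) - \frac1n\sum_{i=1}^n f(z_i)
= \int_{\mathbb{R}^d} \widetilde f(z)\, g(z)\,\rho(\rd z),
\]
where
\[
g(z) = \int_G 1_{(-\infty,z)_G}(y)\,\pi(\rd y) - \frac1n\sum_{i=1}^n 1_{(-\infty,z)_G}(z_i)
\]
is the local discrepancy function, which satisfies $\abs{g(z)}\le 1$ for all $z \in \mathbb{R}^d$, so $g \in L_p(\mathbb{R}^d,\rho)$ for every $1 \le p \le \infty$.

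Next I would apply Hölder's inequality on $L_q(\mathbb{R}^d,\rho)$ and $L_p(\mathbb{R}^d,\rho)$ with $1/p+1/q=1$:
\[
\abs{\int_{\mathbb{R}^d} \widetilde f(z)\,g(z)\,\rho(\rd z)}
\le \norm{\widetilde f}_{L_q(\mathbb{R}^d,\rho)}\, \norm{g}_{L_p(\mathbb{R}^d,\rho)}.
\]
By the definition of the norm \eqref{norm_H1} one has $\norm{\widetilde f}_{L_q(\mathbb{R}^d,\rho)} \le \norm{f}_{H_q}$, and by the definition of $D^\ast_{p,\mathscr{B},\pi}(Z_n)$ one has $\norm{g}_{L_p(\mathbb{R}^d,\rho)} = D^\ast_{p,\mathscr{B},\pi}(Z_n)$ for $1 \le p < \infty$; for $p=\infty$ the $\rho$-essential supremum of $\abs{g}$ is bounded above by $\sup_{z\in\mathbb{R}^d}\abs{g(z)} = D^\ast_{\mathscr{B},\pi}(Z_n)$. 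Combining these estimates yields the asserted inequality.

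The argument is routine, and the only step requiring a moment's care is the justification of the interchange of integration, which is immediate from the finiteness of $\rho$ and the boundedness of the indicator functions; I do not expect a genuine obstacle here.
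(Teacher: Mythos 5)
Your argument is correct and is the standard reproducing-kernel proof (substitute the representation \eqref{eq_f_rep}, cancel $f_0$, apply Fubini using the finiteness of $\rho$, then H\"older), which is exactly the route taken in the cited source \cite[Theorem~3]{DiRuZh13}; the paper itself does not reprove the statement but only quotes it. No gaps.
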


\begin{corollary}[Markov chain Koksma-Hlawka inequality]\label{cor_KH_inequality}
Assume that the conditions of Corollary~\ref{coro: D_U_almost_D_P_spec} 
are satisfied.
Further let $\mathscr{B}  = \{(-\infty, x)_G: x \in \mathbb{R}^d\}$.
Let $H_1$ denote the space of functions $f\colon\mathbb{R}^d \to \mathbb{C}$ 
with finite norm given by \eqref{norm_H1}.
Then, for any update function $\varphi\colon G \times [0,1]^s \to G$ 
of $K$ and any generator function $\psi \colon [0,1]^s \to G$ of $\nu$
we have, with driver sequence 
$\mathcal{U}_n=\{u_0, u_1, \ldots, u_{n-1}\} \subset [0,1]^s$ 
and $S_n$ given by \eqref{eq: x_i_by_driver_seq}, that
\begin{align*}
&\left|\int_G f(x) \pi(\rd z) - \frac{1}{n} \sum_{i=1}^n f(x_i) \right| \\
& \qquad \qquad\le 
\left( D^\ast_{\mathscr{B}, \psi, \varphi}(\mathcal{U}_n) 
+ \frac{1-\Lambda_0^n}{n \cdot(1-\Lambda_0)} \norm{\frac{d\nu}{d\pi}-1}_{2} \right) \|f\|_{H_1},
\end{align*}
with 
$\Lambda_0 = \max\{ 0,\Lambda \}$, where $\Lambda$ is defined in \eqref{eq: Lambda_variance_bounding}.
\end{corollary}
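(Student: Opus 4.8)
The plan is to combine the Koksma--Hlawka inequality for Markov chain point sets (Theorem~\ref{thm_int_error}) with the comparison between the star-discrepancy of $S_n$ and the pull-back discrepancy of $\mathcal{U}_n$ from Corollary~\ref{coro: D_U_almost_D_P_spec}. Since the conditions of Corollary~\ref{coro: D_U_almost_D_P_spec} are assumed, $K$ is reversible with respect to $\pi$, $P$ is variance bounding, and $\frac{d\nu}{d\pi}\in L_2$, so both ingredients are available.

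First I would apply Theorem~\ref{thm_int_error} with $p=\infty$, $q=1$, and $Z_n = S_n = \{x_1,\dots,x_n\}$ the Markov chain quasi-Monte Carlo point set generated by the driver sequence $\mathcal{U}_n$ via \eqref{eq: x_i_by_driver_seq}. For $f \in H_1$ this gives
\[
\left| \int_G f(x)\,\pi(\rd x) - \frac1n \sum_{i=1}^n f(x_i) \right|
\le \|f\|_{H_1}\, D^\ast_{\mathscr{B},\pi}(S_n).
\]
Next I would bound $D^\ast_{\mathscr{B},\pi}(S_n)$ in terms of the pull-back discrepancy. Corollary~\ref{coro: D_U_almost_D_P_spec}, applied with $\mathscr{A}=\mathscr{B}$, yields
\[
D^\ast_{\mathscr{B},\pi}(S_n)
\le D^\ast_{\mathscr{B},\psi,\varphi}(\mathcal{U}_n)
+ \frac{1-\Lambda_0^n}{n\cdot(1-\Lambda_0)}\,\Big\|\tfrac{d\nu}{d\pi}-1\Big\|_{2},
\]
which is just the triangle-inequality form of the stated bound $\bigl|D^\ast_{\mathscr{A},\pi}(S_n) - D^\ast_{\mathscr{A},\psi,\varphi}(\mathcal{U}_n)\bigr| \le \tfrac{1-\Lambda_0^n}{n(1-\Lambda_0)}\|\tfrac{d\nu}{d\pi}-1\|_2$. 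Substituting this estimate into the Koksma--Hlawka bound above and using that $\|f\|_{H_1}\ge 0$ gives exactly the claimed inequality.

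There is essentially no hard part here: the statement is a direct splice of two earlier results, and the only thing to check is that the hypotheses line up (reversibility, variance bounding, $\frac{d\nu}{d\pi}\in L_2$, boundedness of $G$ is not even needed for this corollary, and $\mathscr{B}$ is the set of test sets required by Theorem~\ref{thm_int_error}). The only minor care needed is that Theorem~\ref{thm_int_error} is stated for an arbitrary point set $Z_n\subseteq G$, so one must confirm that $S_n\subseteq G$, which is immediate from $x_1=\psi(u_0)\in G$ and $x_{i+1}=\varphi(x_i;u_i)\in G$. I would therefore present the proof as the two-line chain of inequalities described above.

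\begin{proof}
Let $f\in H_1$. Since $x_1 = \psi(u_0)\in G$ and $x_{i+1}=\varphi(x_i;u_i)\in G$ for $i=1,\dots,n-1$, the point set $S_n=\{x_1,\dots,x_n\}$ is contained in $G$. Applying Theorem~\ref{thm_int_error} with $p=\infty$, $q=1$ and $Z_n = S_n$ gives
\[
\left|\int_G f(x)\,\pi(\rd x) - \frac1n \sum_{i=1}^n f(x_i)\right|
\le \|f\|_{H_1}\, D^\ast_{\mathscr{B},\pi}(S_n).
\]
By Corollary~\ref{coro: D_U_almost_D_P_spec}, applied with the set of test sets $\mathscr{A}=\mathscr{B}$, we have
\[
D^\ast_{\mathscr{B},\pi}(S_n)
\le D^\ast_{\mathscr{B},\psi,\varphi}(\mathcal{U}_n)
+ \frac{1-\Lambda_0^n}{n\cdot(1-\Lambda_0)}\,\norm{\frac{d\nu}{d\pi}-1}_{2},
\]
with $\Lambda_0 = \max\{0,\Lambda\}$ and $\Lambda$ defined in \eqref{eq: Lambda_variance_bounding}. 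Combining the two displays and using $\|f\|_{H_1}\ge 0$ yields
\[
\left|\int_G f(x)\,\pi(\rd x) - \frac1n \sum_{i=1}^n f(x_i)\right|
\le \left(D^\ast_{\mathscr{B},\psi,\varphi}(\mathcal{U}_n) + \frac{1-\Lambda_0^n}{n\cdot(1-\Lambda_0)}\,\norm{\frac{d\nu}{d\pi}-1}_{2}\right)\|f\|_{H_1},
\]
which is the claimed inequality.
\end{proof}
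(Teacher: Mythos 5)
Your proof is correct and is exactly the argument the paper intends (the paper omits an explicit proof, but the corollary is a direct splice of Theorem~\ref{thm_int_error} with $p=\infty$, $q=1$ and Corollary~\ref{coro: D_U_almost_D_P_spec} applied with $\mathscr{A}=\mathscr{B}$). No gaps.
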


In the spirit of
Remark~\ref{rem: direc_simulation} we obtain for $K(x,A)=\pi(A)$ that $\Lambda=0$.
Further, if $\nu=\pi$ we have the Koksma-Hlawka inequality (cf. \cite[p. 151, Theorem~5.5]{KN})
\begin{equation*}
\left|\int_G f(x) \pi(\rd x) - \frac{1}{n} \sum_{i=1}^n f(x_i) \right|
\le D^\ast_{\mathscr{B}, \psi, \varphi}(\mathcal{U}_n)\, \|f\|_{H_1}.
\end{equation*}

\subsection{Metropolis algorithm with ball walk proposal}

The goal of this subsection is the application of the previously 
developed theory
to an example. Let us assume that $G=\mathbb{B}_d$ 
is the Euclidean unit ball, i.e.
$\mathbb{B}_d=\{ x\in \mathbb{R}^d \mid \|x \|:= (\sum_{i=1}^d \abs{x_i}^2)^{1/2} \leq1 \}$.
Let $\rho \colon \mathbb{B}_d \to (0,\infty)$ be integrable with respect to the Lebesgue measure. We define the
distribution $\pi_\rho$ on $(\mathbb{B}_d,\mathcal{B}(\mathbb{B}_d))$ by
\[
 \pi_\rho(A) = \frac{\int_A \rho(x)\, {\rm d} x}{\int_{\mathbb{B}_d} \rho(x)\, {\rm d} x}.
\]
The goal is to compute
\[
 \E_{\pi_\rho}(f) 
 = \int_{\mathbb{B}_d} f(x)\, \pi_\rho({\rm d} x) 
 = \frac{\int_{\mathbb{B}_d} f(x) \rho(x)\, {\rm d} x}{\int_{\mathbb{B}_d} \rho(x)\, {\rm d} x},
\]
for functions $f\colon \mathbb{B}_d \to \mathbb{R}$ which are integrable with respect to $\pi_\rho$.
Note that for an approximation of $\E_{\pi_\rho}(f)$ the functions $f$ and $\rho$ are part of 
the input of a possible approximation scheme. We 
assume that sampling directly with respect to $\pi_\rho$ is not feasible. 
We use the Metropolis algorithm with ball walk proposal 
to sample approximately according to $\pi_\rho$.

Let $\gamma>0$, $x\in \mathbb{B}_d$ 
and $C\in \mathcal{B}(\mathbb{B}_d)$, 
then the transition kernel of the $\gamma$ ball walk
is given by
\[
 W_{\gamma}(x,C)
 = \frac{{\lambda}_d(C\cap D_{\gamma}(x))}{ {\lambda}_d (D_\gamma (0))}
    + 1_{x\in A} \left[ 1 -  \frac{{\lambda}_d(\mathbb{B}_d\cap D_{\gamma}(x))}{ {\lambda}_d (D_\gamma (0))}\right],
\]
where ${\lambda}_d$ denotes the $d$-dimensional Lebesgue measure and
$D_\gamma(x) = \{y \in \mathbb{R}^d \mid \|x-y\| \le \gamma \}$ denotes the Euclidean ball with radius 
$\gamma$ around $x\in \mathbb{R}^d$.
The transition kernel of the Metropolis algorithm with ball walk proposal is
given by
\[
 M_{\rho,\gamma}(x,C) = \int_{C} \theta(x,y)\, W_{\gamma}(x,{\rm d}y)
   + 1_{x\in A}\left[1 - \int_{\mathbb{B}_d} \theta(x,y)\,W_{\gamma}(x,{\rm d}y) \right],
\]
where $ \theta(x,y) = \min \{ 1 ,\rho(y)/\rho(x)\}$ is the so-called
acceptance probability.
The transition kernel $M_{\rho,\gamma}$ is reversible with respect
to $\pi_\rho$.

Now we provide update functions of the ball walk and the Metropolis algorithm with ball walk proposal.
Let $\mathbb{S}^{d-1} = \{ x\in \mathbb{R}^d \mid \|x\|=1 \}$ be the unit sphere in $\mathbb{R}^d$.
Let $\widetilde{\psi} \colon [0,1]^{d-1} \to \mathbb{S}^{d-1}$ be a generator 
for the uniform distribution on the sphere, see for instance \cite{FaWa94}. 
Then, $\psi_\gamma \colon [0,1]^d \to D_\gamma (0)$ given by
\begin{equation} \label{eq: psi_uniform}
 \psi_\gamma (\bar{u}) = \gamma \, v_d^{1/d} \widetilde{\psi}(v_1,\dots, v_{d-1}),
\end{equation}
with $\bar{u}=(v_1,\dots,v_d)\in[0,1]^d$,
is a generator for the uniform distribution in $D_\gamma(0)$ (the Euclidean ball with radius $\gamma$ around $0$).
Thus, an update function 
$\varphi_{W,\gamma}\colon \mathbb{B}_d \times [0,1]^d   \to \mathbb{B}_d$ 
of the $\gamma$ ball walk, with $\bar{u}=(v_1,\dots,v_d)\in[0,1]^d$, is
\[
 \varphi_{W,\gamma}(x,\bar{u}) = \begin{cases}
                     x+\psi_\gamma (\bar{u}) & x+\psi_\gamma (\bar{u}) \in \mathbb{B}_d \\
                     x 		 	  & \mbox{otherwise}.	 
                  \end{cases}
\]
This leads to an update function $\varphi_{M,\gamma, \rho} \colon \mathbb{B}_d \times [0,1]^{d+1} \to \mathbb{B}_d$
of the Metropolis algorithm with ball walk proposal.
Let 
\[
 A(x;\bar{u}) = \min\{ 1, \rho(\varphi_{W,\gamma}(x,\bar{u}))/\rho(x) \},
\]
then an update function for the Metropolis algorithm with ball walk proposal is
\begin{equation}  \label{eq: Metro_update}
   \varphi_{M,\gamma, \rho}(x,u) = \begin{cases}
                     \varphi_{W,\gamma}(x,v_1,\dots,v_d) & v_{d+1} \leq A(x,v_1,\dots,v_d)\\
                     x				 & v_{d+1} > A(x,v_1,\dots,v_d),
                   \end{cases}
\end{equation}
where $u=(v_1,\dots,v_{d+1})\in[0,1]^{d+1}$ and $x\in \mathbb{B}_d$.
Thus, we have an update function of $W_\gamma$.
For the convenience of the reader we provide a transition 
of the Metropolis algorithm with
ball walk proposal
from $x$ to $y$ with driving point $(v_1,\dots,v_{d+1}) \in [0,1]^{d+1}$ 
in algorithmic form:
\begin{algorithm}
Metropolis algorithm with ball walk proposal\\[1ex]
\begin{tabular}{ll}
Input: & driving point $(v_1,\dots,v_{d+1}) \in [0,1]^{d+1}$, and \\
       & current state $x\in\mathbb{B}_d$;	\\ 
Output:& next state $y\in\mathbb{B}_d$;
\end{tabular}
\begin{enumerate}
  \item Compute $z:=\gamma\, v_d^{1/d}\, \widetilde \psi (v_1,\dots,v_{d-1})$
  where $\widetilde \psi$ is a generator function for the uniform 
  distribution on $\mathbb{S}^{d-1}$.
\item 
\subitem a) If $x+z \in \mathbb{B}_d$ and $v_{d+1} \le \min \left\{1,\rho(x+z)/\rho(x) \right\}$, then $y:= x + z$.
\subitem b) Otherwise $y := x$.

  \item Return $y$.
\end{enumerate}
\end{algorithm}

We assume that the functions $f \colon \mathbb{B}_d \to \mathbb{R}$ 
and $\rho \colon \mathbb{B}_d \to (0,\infty)$ 
have some additional structure. Let $f\in H_1$ with $\|f\|_{H_{1}} \leq 1$, 
 where $H_{1}$ is defined in Subsection~\ref{subsec: int_err}.
For $\alpha > 0$ let $\rho \in \mathcal{R}_{\a,d}$ 
if the following conditions are satisfied:
\begin{enumerate}[(i)]
 \item\label{it: log_conc} $\rho$ is log-concave, i.e. for all $\lambda \in (0,1)$ and for all $x,y \in \mathbb{B}_d$ holds
	\[
	 \rho(\lambda x + (1-\lambda )y) \geq \rho(x)^\lambda \rho(y)^{1-\lambda}.
	\]
 \item\label{it: log_lip} $\rho$ is log-Lipschitz continuous with $\alpha$, i.e.
       \[
        \abs{\log \rho(x)-\log \rho(y)} \leq \alpha \| x-y \|.
       \]
\end{enumerate}
Thus
\begin{equation}  \label{eq: def_fct_class}
  \mathcal{R}_{\a,d} = \{ \rho \colon \mathbb{B}_d \to (0,\infty) \mid 
 \rho\;\mbox{log-concave}, 
  \abs{\log \rho(x)-\log \rho(y)} \leq \alpha \| x-y \|   
 \}.
\end{equation}
Next we provide a lower bound for $\Lambda_{\gamma,\rho}$, defined as in \eqref{eq: Lambda_variance_bounding} 
for the transition kernel $M_{\gamma,\rho}$, where the density $\rho$ is log-concave and log-Lipschitz.
The result follows by \cite[Corollary~1, Lemma~13]{MaNo07}.

\begin{proposition} \label{prop: low_conduct}
Let us assume that $\rho\in \mathcal{R}_{\a,d}$. Further let 
 \[
     \gamma^* = \min\{1/\sqrt{d+1},1/\alpha\}.
 \]
 Then
 \begin{equation}	\label{eq: lower_bd_var_bound}
      1 - \Lambda_{\gamma^*,\rho} \geq \frac{3.125 \cdot 10^{-6} }{d+1} \min\left\{ \frac{1}{d+1},\frac{1}{\alpha} \right\}. 
 \end{equation}
\end{proposition}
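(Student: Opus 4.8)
The plan is to obtain \eqref{eq: lower_bd_var_bound} as a quantitative instance of the Cheeger inequality, using that $M_{\gamma^*,\rho}$ is reversible with respect to $\pi_\rho$. For a reversible transition kernel $M$ with stationary distribution $\pi$, define its conductance by
\[
 \Phi = \inf_{A\in\mathcal{B}(\mathbb{B}_d),\; 0<\pi(A)\le 1/2}\, \frac{\int_A M(x,\mathbb{B}_d\setminus A)\,\pi(\rd x)}{\pi(A)}.
\]
The Cheeger inequality for reversible Markov operators gives $1-\Lambda \ge \Phi^2/2$, where $\Lambda$ is exactly the quantity from \eqref{eq: Lambda_variance_bounding} associated with $M$. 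So the first step is to record this reduction: it suffices to establish a conductance bound of the form
\[
 \Phi_{\gamma^*,\rho} \ge 2.5\cdot 10^{-3}\,\frac{\big(\min\{1/(d+1),1/\alpha\}\big)^{1/2}}{\sqrt{d+1}}
\]
for the conductance $\Phi_{\gamma^*,\rho}$ of $M_{\gamma^*,\rho}$, since $(2.5\cdot 10^{-3})^2/2 = 3.125\cdot 10^{-6}$. (Alternatively one may read the spectral gap off directly from the convergence estimates in \cite{MaNo07} and bypass Cheeger; the conductance route keeps the geometric dependence transparent.)

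The second step is to invoke this conductance estimate, which is precisely what \cite[Lemma~13]{MaNo07} yields for the $\gamma$ ball walk Metropolis chain on $\mathbb{B}_d$ driven by a log-concave, log-Lipschitz density $\rho\in\mathcal{R}_{\alpha,d}$, together with the geometric preparation in \cite[Corollary~1]{MaNo07}. The underlying argument combines an isoperimetric inequality of Kannan--Lov\'asz--Simonovits type for the log-concave measure $\pi_\rho$ on the convex body $\mathbb{B}_d$ with a uniform lower bound on the probability that a $\gamma^*$ ball walk proposal starting at $x\in\mathbb{B}_d$ both stays inside $\mathbb{B}_d$ and is accepted in the Metropolis step. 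The choice $\gamma^* = \min\{1/\sqrt{d+1},1/\alpha\}$ is exactly what makes these two inputs compatible: the bound $\gamma^*\le 1/\sqrt{d+1}$ controls the proposal mass lost through the boundary $\partial\mathbb{B}_d$, while $\gamma^*\le 1/\alpha$ together with log-Lipschitz continuity keeps $\rho$ essentially constant on balls of radius $\gamma^*$, so the acceptance probability is bounded below by an absolute constant. Feeding the resulting bound on $\Phi_{\gamma^*,\rho}$ into $1-\Lambda_{\gamma^*,\rho}\ge \Phi_{\gamma^*,\rho}^2/2$ then gives \eqref{eq: lower_bd_var_bound}.

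The hard part is bookkeeping rather than new mathematics. In \cite{MaNo07} the ball walk and its Metropolis variant are analysed through a local conductance together with separate corrections near $\partial\mathbb{B}_d$, and the conclusion there is phrased as a convergence rate rather than as a lower bound on $\Lambda_{\gamma^*,\rho}$; the work is to match that formulation with the single conductance that enters the Cheeger inequality, to verify that all hypotheses (reversibility with respect to $\pi_\rho$, log-concavity and log-Lipschitz continuity of $\rho$, the precise value of $\gamma^*$) are used correctly, and to carry the explicit numerical constant through the chain of estimates. One simplification worth noting is that no laziness modification of $M_{\gamma^*,\rho}$ is needed: we bound only $1-\Lambda_{\gamma^*,\rho}$ from below and never $1-\beta$, so the negative part of the spectrum --- in particular the holding probability produced by rejected proposals --- plays no role, and the one-sided Cheeger inequality is all that is required.
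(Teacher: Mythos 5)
Your proposal is correct and takes essentially the same route as the paper, whose entire argument for Proposition~\ref{prop: low_conduct} is the citation of \cite[Corollary~1, Lemma~13]{MaNo07} --- i.e.\ precisely the combination of the one-sided Cheeger inequality with the conductance lower bound for the ball-walk Metropolis chain on $\mathbb{B}_d$ that you describe. Your added remarks, that only $\sup\spec(P\mid L_2^0)$ needs to be controlled so no lazy modification is required, and that $\gamma^*=\min\{1/\sqrt{d+1},1/\alpha\}$ balances boundary loss against acceptance probability, are consistent with the paper's variance-bounding framework; the only caveat is that your intermediate conductance bound is written in exactly the form needed for the target rather than checked against the precise statement of \cite[Lemma~13]{MaNo07}, a bookkeeping step the paper also leaves to the reader.
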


The combination of Proposition~\ref{prop: low_conduct}, 
Theorem~\ref{thm_int_error}, Lemma~\ref{lem_delta_cover_ex2}
and Corollary~\ref{coro_main} lead to 
the following error bound for the computation 
of $\E_{\pi_\rho}(f)$ for $f\in H_1$ and $\rho\in \mathcal{R}_{\a,d}$.

\begin{theorem}
 Let $\nu$ be the uniform distribution 
 on $(\mathbb{B}_d,\mathcal{B}(\mathbb{B}_d))$ with generator function $\psi_1$, 
 see \eqref{eq: psi_uniform}.
 Let
 \[
     \gamma^* = \min\{1/\sqrt{d+1},1/\alpha\}
 \]
 and recall that $\varphi_{M,\gamma^*,\rho}$ is an update function 
 of the Metropolis algorithm with ball walk proposal, 
 see \eqref{eq: Metro_update}.
 
 Then, for all $n\geq16$ and any $\rho\in\mathcal{R}_{\a,d}$ 
 there exists a driver sequence
 $u_0,u_1,\dots,u_{n-1} \in [0,1]^{d+1}$ such that $S_n=\{ x_1,\dots,x_n \}$
 given by
 \begin{align*}
    x_1 & = \psi_1(\bar{u}_0)\\
    x_{i+1} & = \varphi_{M,\gamma^*,\rho} (x_i;u_i), \qquad i=1,\dots,n-1,
 \end{align*}
 with $\bar{u}_0=(v_1,\dots,v_d)$ where $u_0=(v_1,\dots,v_d,v_{d+1})$, satisfies
 \begin{align*}
&  \sup_{f \in H_1, \Vert f \Vert_{H_1} \leq 1} \abs{ \E_{\pi_{\rho}}(f)-\frac{1}{n} \sum_{i=1}^n f(x_i) } \\
 &\qquad \leq \frac{5000 \sqrt{d} \max\{ \sqrt{2d}, \sqrt{\alpha} \}
 \left( \alpha+d \log n + 3d^2 \log(5d)\right)^{1/2}
 }{\sqrt{n}}   + \frac{8}{n^{3/4}}.
 \end{align*}
\end{theorem}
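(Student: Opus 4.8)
The plan is to assemble the stated bound from the four preparatory results collected just above. First I would apply Corollary~\ref{coro_main} with transition kernel $K=M_{\gamma^*,\rho}$, target $\pi=\pi_\rho$, uniform initial distribution $\nu$, bounded state space $G=\mathbb{B}_d$ (note $\pi_\rho$ is absolutely continuous with respect to $\lambda_d$ since $\rho$ is positive and integrable), dimension $s=d+1$, update function $\varphi_{M,\gamma^*,\rho}$ from \eqref{eq: Metro_update}, and generator function $\psi_1$ from \eqref{eq: psi_uniform}. The hypotheses of Corollary~\ref{coro_main} hold: $M_{\gamma^*,\rho}$ is reversible with respect to $\pi_\rho$ (recorded when the kernel was introduced), and $P$ is variance bounding because Proposition~\ref{prop: low_conduct} shows $1-\Lambda_{\gamma^*,\rho}>0$, i.e. $\Lambda_{\gamma^*,\rho}<1$; the remaining condition $\frac{d\nu}{d\pi_\rho}\in L_2(\pi_\rho)$ is checked below. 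Since the choice $\delta=8/n^{3/4}$ used in Corollary~\ref{coro_main} needs $n\ge16$, this produces, for every $n\ge16$, a driver sequence $u_0,\dots,u_{n-1}\in[0,1]^{d+1}$ whose point set $S_n$ satisfies
\begin{equation*}
D^*_{\mathscr{B},\pi_\rho}(S_n)\le\sqrt{\frac{1+\Lambda_0}{1-\Lambda_0}}\cdot\frac{\sqrt2\left(\log\norm{\frac{d\nu}{d\pi_\rho}}_2+d\log n+3d^2\log(5d)\right)^{1/2}}{\sqrt n}+\frac{8}{n^{3/4}},
\end{equation*}
with $\Lambda_0=\max\{0,\Lambda_{\gamma^*,\rho}\}$.

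Second, I would estimate the two model-dependent quantities in this bound. Since $\nu$ is uniform, $\frac{d\nu}{d\pi_\rho}(x)=\frac{\int_{\mathbb{B}_d}\rho(y)\,\rd y}{\lambda_d(\mathbb{B}_d)\,\rho(x)}$, and the log-Lipschitz property of $\rho$ together with $\mathrm{diam}(\mathbb{B}_d)=2$ gives $\rho(y)\le\rho(x)\,\mathrm{e}^{\alpha\norm{x-y}}\le\rho(x)\,\mathrm{e}^{2\alpha}$ for all $x,y\in\mathbb{B}_d$; integrating in $y$ yields $\norm{\frac{d\nu}{d\pi_\rho}}_\infty\le\mathrm{e}^{2\alpha}$, which in particular shows $\frac{d\nu}{d\pi_\rho}\in L_2(\pi_\rho)$, and with $\int\frac{d\nu}{d\pi_\rho}\,\rd\pi_\rho=1$ also $\norm{\frac{d\nu}{d\pi_\rho}}_2^2\le\norm{\frac{d\nu}{d\pi_\rho}}_\infty\le\mathrm{e}^{2\alpha}$, i.e. $\log\norm{\frac{d\nu}{d\pi_\rho}}_2\le\alpha$. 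For the spectral factor I would use $1+\Lambda_0\le2$ together with the identity $1-\Lambda_0=\min\{1,1-\Lambda_{\gamma^*,\rho}\}$ and the explicit bound \eqref{eq: lower_bd_var_bound} (which is itself at most $1$) to obtain
\begin{equation*}
\sqrt{\frac{1+\Lambda_0}{1-\Lambda_0}}\le\sqrt{\frac{2(d+1)\max\{d+1,\alpha\}}{3.125\cdot10^{-6}}}=800\,\sqrt{d+1}\,\max\{\sqrt{d+1},\sqrt\alpha\}\le800\,\sqrt{2d}\,\max\{\sqrt{2d},\sqrt\alpha\},
\end{equation*}
where in the last step I used $d+1\le2d$ for $d\ge1$.

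Third, substituting these two estimates into the displayed discrepancy bound and collecting the numerical constant (the factor $\sqrt2$ in the bound and the $\sqrt2$ from $\sqrt{2d}=\sqrt2\sqrt d$ combine, giving $800\cdot2=1600\le5000$) yields exactly the claimed upper bound for $D^*_{\mathscr{B},\pi_\rho}(S_n)$. Finally I would apply the Koksma--Hlawka inequality, Theorem~\ref{thm_int_error}, with $p=\infty$ and $q=1$ (so that $D^*_{\infty,\mathscr{B},\pi_\rho}=D^*_{\mathscr{B},\pi_\rho}$): it gives $\abs{\E_{\pi_\rho}(f)-\frac1n\sum_{i=1}^nf(x_i)}\le\norm{f}_{H_1}\,D^*_{\mathscr{B},\pi_\rho}(S_n)$ for every $f\in H_1$, so taking the supremum over $\norm{f}_{H_1}\le1$ turns the left-hand side into $D^*_{\mathscr{B},\pi_\rho}(S_n)$ and finishes the proof. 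I do not expect a genuine obstacle here: all the conceptual content is already packaged in the four cited results, and the only steps requiring care are using the diameter bound (rather than a pointwise bound on $\rho$) in the density estimate and tracking the absolute constants so that the clean value $5000$ suffices.
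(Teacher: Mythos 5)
Your proposal is correct and follows essentially the same route as the paper's own (much terser) proof: bound $\log\|\tfrac{d\nu}{d\pi_\rho}\|_{2}\le\alpha$ via the log-Lipschitz property, lower-bound $1-\Lambda_0$ via Proposition~\ref{prop: low_conduct}, and combine Corollary~\ref{coro_main} with Theorem~\ref{thm_int_error}. Your explicit constant-tracking (yielding $1600\le 5000$) simply fills in arithmetic the paper leaves implicit.
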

\begin{proof}
 By 
 \[
  \frac{d\nu}{d \pi_{\rho}} (x) = \frac{\int_{\mathbb{B}_d} \rho(y)\, {\rm d}y}{\lambda_d(\mathbb{B}_d) \rho(x)},
 \]
 and by $\rho(x)/\rho(y) \leq \exp(2\alpha)$ for any $x,y\in \mathbb{B}_d$ 
 we have $\|\frac{d\nu}{d\pi_\rho}\|_2 \leq \exp\alpha$.
 Further, by Proposition~\ref{prop: low_conduct} we obtain
 \[
  1-\Lambda_0 \geq \frac{3.125\cdot 10^{-6}}{d+1} \min\left\{ \frac{1}{d+1}, \frac{1}{\alpha}	\right \}.
 \]
 Thus by Corollary~\ref{coro_main} and Theorem~\ref{thm_int_error} the assertion follows.
\end{proof}
Let us emphasize that the theorem shows that for any $\rho \in \mathcal{R}_{\alpha,d}$ 
there exist a deterministic algorithm where the error depends only polynomially
on the dimension $d$ and the Log-Lipschitz constant $\alpha$.

\section{Beyond the Monte Carlo rate} \label{sec: beyound_MC}

In the previous sections we have seen that 
there exist deterministic driver sequences 
which yield almost the Monte Carlo rate of convergence of $n^{-1/2}$. 
Roughly speaking, the proof of Theorem~\ref{thm_main} reveals that,
if the driver sequence is chosen at random from the uniform distribution 
the discrepancy bound of \eqref{al: first_disc_bound} 
is satisfied 
with high probability.
In this section we use a stronger assumption
to achieve a better rate of convergence. 
Again this result is an existence result. 
We want to point out that
the proof of the 
result 
does not reveal any information on how to find driver sequences
which lead to good discrepancy bounds.

Its proof is based on an additional regularity condition of the
update function, the `anywhere-to-anywhere' 
condition, and Corollary~\ref{coro: boxes_disc_AD}.

\begin{definition}
Let $\varphi:G \times [0,1]^s \to G$ be an update function of a
transition kernel $K$. 
We say that $\varphi$ satisfies the 
\emph{`anywhere-to-anywhere' condition} if for 
all $x, y \in G$ there exists a $u \in [0,1]^s$ such that
\begin{equation*}
\varphi(x; u) = y.
\end{equation*}
\end{definition}

Now we use 
the `anywhere-to-anywhere' condition to 
reformulate Corollary~\ref{coro: boxes_disc_AD}.
We
obtain a bound on the
star-discrepancy for the Markov chain quasi-Monte Carlo construction.

\begin{corollary}\label{cor_higher_order}
Let $G \subset \mathbb{R}^d$ be a bounded, 
measurable set and let $(G, \mathcal{B}(G), \pi)$ be a probability space. 
Let the set of test sets 
$\mathscr{B} = \{(-\infty, x) \cap G \mid x \in \mathbb{R}^d\}$ 
be the set of anchored boxes intersected with $G$. 

Then, for any update function $\varphi \colon G \times [0,1]^s \to G$
of the transition kernel $K$ which satisfies 
the `anywhere-to-anywhere' condition, 
any surjective 
function $\psi \colon [0,1]^s \to G$
and
for any $n\in\mathbb{N}$ there exists a driver sequence 
$u_0, u_1, \dots u_{n-1} \in [0,1]^s$
such that $S_n=\{ x_1,\dots,x_n \}$
given by $x_1=\psi(u_0)$ and
\begin{equation*}
x_i = \varphi(x_{i-1}; u_i),
\qquad i=1,\dots,n-1,
\end{equation*}
satisfies
\begin{equation*}
D^\ast_{\mathscr{B}, \pi}(S_n) \le 63 \sqrt{d}\, \frac{(2 + \log_2 n)^{(3d+1)/2}}{n}.
\end{equation*}
\end{corollary}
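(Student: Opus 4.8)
The plan is to reduce this statement directly to Corollary~\ref{coro: boxes_disc_AD} by showing that \emph{any} point set in $G$ can be realized as a Markov chain quasi-Monte Carlo orbit once the update function satisfies the `anywhere-to-anywhere' condition and $\psi$ is surjective. Corollary~\ref{coro: boxes_disc_AD} guarantees the existence of a point set $S_n = \{x_1,\dots,x_n\} \subseteq G$ with
\[
 D^\ast_{\mathscr{B},\pi}(S_n) \le 63\sqrt{d}\,\frac{(2+\log_2 n)^{(3d+1)/2}}{n},
\]
but with no control over \emph{how} the points are generated. So the only thing to do is to back out a driver sequence $u_0,u_1,\dots,u_{n-1}$ that produces exactly this $S_n$ via the recursion $x_1 = \psi(u_0)$, $x_{i+1} = \varphi(x_i; u_i)$.

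First I would fix such a set $S_n = \{x_1,\dots,x_n\}$ from Corollary~\ref{coro: boxes_disc_AD}. Since $\psi\colon [0,1]^s \to G$ is surjective, there exists $u_0 \in [0,1]^s$ with $\psi(u_0) = x_1$, which handles the initial step. Then, proceeding inductively for $i = 1,\dots,n-1$: given that the orbit has reached $x_i$, the `anywhere-to-anywhere' condition applied to the pair $(x_i, x_{i+1}) \in G \times G$ yields a $u_i \in [0,1]^s$ with $\varphi(x_i; u_i) = x_{i+1}$. Choosing such a $u_i$ at each step produces a driver sequence $u_0,\dots,u_{n-1}$ whose induced point set is precisely $S_n$, and hence satisfies the claimed discrepancy bound. (One should note that $S_n$ as a \emph{set} may have repeated points; this causes no difficulty since the recursion and the discrepancy are both defined on the list $x_1,\dots,x_n$, and Corollary~\ref{coro: boxes_disc_AD} already accommodates this.)

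I do not expect a genuine obstacle here — the argument is essentially a bookkeeping observation that the two combinatorial assumptions (surjectivity of $\psi$, `anywhere-to-anywhere' for $\varphi$) are exactly what is needed to make the map from driver sequences to point sets surjective onto all length-$n$ orbits. The only mild subtlety worth a sentence in the write-up is measurability/existence of the selections $u_i$: we only need \emph{existence} of a driver sequence, not a measurable selection, so invoking the axiom of choice pointwise (finitely many times, in fact) is harmless and no regularity on the selection is required. The main conceptual content of the corollary is really that it transplants Beck's low-discrepancy bound into the Markov chain quasi-Monte Carlo framework, showing the order $n^{-1}(\log_2 n)^{(3d+1)/2}$ is in principle attainable — the proof itself is short.
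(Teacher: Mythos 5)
Your argument is correct and is exactly the route the paper intends: the corollary is presented as a direct ``reformulation'' of Corollary~\ref{coro: boxes_disc_AD}, obtained by taking the low-discrepancy point set guaranteed there and backing out a driver sequence via surjectivity of $\psi$ (for $u_0$) and the `anywhere-to-anywhere' condition (for each subsequent $u_i$). Your side remarks on repeated points and on needing only pointwise existence of the selections, not measurability, are both apt and do not change the argument.
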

The corollary states that if the `anywhere-to-anywhere'
condition is satisfied, in principle, we can get the same discrepancy
for the Markov chain quasi-Monte 
Carlo construction as without using any Markov chain. 
If the update function and underlying Markov operator $P$ 
satisfies the conditions of Corollary~\ref{coro: D_U_almost_D_P_spec}, 
then a similar discrepancy bound as in Corollary~\ref{cor_higher_order} 
also holds for the driver sequence $\mathcal{U}_n=\{u_0, u_1, \ldots, u_{n-1}\}$.
Namely
\begin{equation*}
D^\ast_{\mathscr{B}, \psi,\varphi}(\mathcal{U}_n) 
\le 63 \sqrt{d}\, \frac{(2 + \log_2 n)^{(3d+1)/2}}{n} 
+ \frac{1 - \Lambda_0^n}{n \cdot (1- \Lambda_0)} 
\left\|\frac{d \nu}{d \pi} - 1 \right\|_2.
\end{equation*}

\section{Concluding remarks}

Let us point out that 
the discrepancy results of Subsection~\ref{subsec_main} 
and Subsection~\ref{subsec_burn_in},
in particular, also hold for local Markov chains 
which do not satisfy the `anywhere to anywhere'
condition
and the proof of this bound reveals that a uniformly i.i.d. 
driver sequence satisfies
the discrepancy estimate with high probability. 
In other words, there are many driver sequences which satisfy 
the discrepancy bound of order $(\log n)^{1/2} n^{-1/2}$. 

On the other hand, the choice of the driver sequence depends on the 
initial distribution $\nu$ and the transition kernel. 
It would be interesting to prove the existence of a universal 
driver sequence, which yields Monte Carlo type behavior 
for a class of initial distributions and transition kernels. 
(For a finite set of initial distributions and 
transition kernels such a result can be obtained 
from our results since for any given initial distribution 
and transition kernel we can show the existence of good driver 
sequences with high probability, see Remark~\ref{rem4}.) 
Further, the proven bounds on the discrepancy are based on a covering argument
with the Vapnik-\v{C}ervonenkis dimension. It is natural to ask
whether one can get better estimates with other covering arguments, for
example Dudley's entropy \cite{Du67} or its variants.

Another open problem is 
the explicit construction of suitable driver sequences. 
The results in this paper do not give any indication 
how such a construction could be obtained. 
However, as a step towards explicit constructions, we do obtain that the pull-back discrepancy
is the relevant criterion for constructing driver sequences.

\ACKNO{
J. D. is the recipient of an Australian Research Council Queen Elizabeth II Fellowship 
(project number DP1097023). D. R. was supported by an Australian Research Council Discovery Project (DP110100442), by the DFG priority program 1324 and the DFG Research training group 1523.

We are grateful to Art Owen for helpful discussions. 
He coined the phrase `anywhere-to-anywhere'.
}


\end{document}